\documentclass[11pt]{article}

\usepackage[a4paper,margin=1in]{geometry}
\usepackage{amsmath,amssymb,amsfonts,mathtools,bm,dsfont} 
\usepackage{amsthm}
\usepackage{booktabs}
\usepackage{longtable}
\usepackage{siunitx}
\usepackage{enumitem}
\usepackage[hyphens]{url} 
\usepackage{hyperref}
\hypersetup{colorlinks=true,allcolors=blue,breaklinks=true} 
\usepackage[authoryear,round]{natbib}
\usepackage{algorithm, algpseudocode} 
\usepackage{caption}
\usepackage{graphicx}
\usepackage{xcolor} 
\usepackage{tabularx}
\usepackage{rotating}
\usepackage[normalem]{ulem}  

\definecolor{zhangcolor}{rgb}{0.8, 0.1, 0.1} 
\definecolor{xucolor}{rgb}{0.1, 0.6, 0.1}    
\definecolor{qucolor}{rgb}{0.8, 0.5, 0.1}    

\newcommand{\E}{\mathbb{E}}
\newcommand{\X}{\bm{X}}

\newcommand{\iid}{\stackrel{\mathrm{i.i.d.}}{\sim}}
\newcommand{\pto}{\overset{p}{\to}}
\newcommand{\dto}{\xrightarrow{d}}

\newif\iffastdemo
\fastdemofalse
\newcommand{\DemoPrefix}{\textbf{FAST-DEMO PLACEHOLDER---NOT FOR SCIENTIFIC INTERPRETATION OR SUBMISSION.}}

\newcommand{\ResultPlaceholder}[1]{\fbox{\parbox{0.92\linewidth}{\centering\small #1}}}
\newcommand{\DemoFigure}[4]{\begin{figure}[htbp]\centering
\IfFileExists{#1}{\includegraphics[width=#2\textwidth]{#1}}{\ResultPlaceholder{Expected generated asset: \path{#1}}}
\caption{\DemoPrefix\ #3}\label{#4}\end{figure}}

\newcommand{\DemoTwoFigure}[6]{\begin{figure}[htbp]\centering
\begin{minipage}[t]{0.49\textwidth}\centering\IfFileExists{#1}{\includegraphics[width=\linewidth]{#1}}{\ResultPlaceholder{Expected generated asset: \path{#1}}}\\[-2pt]{\small (a) #2}\end{minipage}\hfill
\begin{minipage}[t]{0.49\textwidth}\centering\IfFileExists{#3}{\includegraphics[width=\linewidth]{#3}}{\ResultPlaceholder{Expected generated asset: \path{#3}}}\\[-2pt]{\small (b) #4}\end{minipage}
\caption{\DemoPrefix\ #5}\label{#6}\end{figure}}
\newcommand{\DemoStackedTwoFigure}[6]{\begin{figure}[p]\centering
\IfFileExists{#1}{\includegraphics[width=0.70\textwidth]{#1}}{\ResultPlaceholder{Expected generated asset: \path{#1}}}\\[-2pt]{\small (a) #2}\\[5pt]
\IfFileExists{#3}{\includegraphics[width=0.82\textwidth]{#3}}{\ResultPlaceholder{Expected generated asset: \path{#3}}}\\[-2pt]{\small (b) #4}
\caption{\DemoPrefix\ #5}\label{#6}\end{figure}}
\newcommand{\DemoThreeFigure}[8]{\begin{figure}[htbp]\centering
\IfFileExists{#1}{\includegraphics[width=0.82\textwidth]{#1}}{\ResultPlaceholder{Expected asset: \path{#1}}}\\[-2pt]{\small (a) #2}\\[5pt]
\begin{minipage}[t]{0.48\textwidth}\centering\IfFileExists{#3}{\includegraphics[width=\linewidth]{#3}}{\ResultPlaceholder{Expected asset: \path{#3}}}\\[-2pt]{\small (b) #4}\end{minipage}\hfill
\begin{minipage}[t]{0.48\textwidth}\centering\IfFileExists{#5}{\includegraphics[width=\linewidth]{#5}}{\ResultPlaceholder{Expected asset: \path{#5}}}\\[-2pt]{\small (c) #6}\end{minipage}
\caption{\DemoPrefix\ #7}\label{#8}\end{figure}}

\newtheorem{assumption}{Assumption}
\newtheorem{theorem}{Theorem}

\newtheorem{lemma}{Lemma}

\newtheorem{corollary}{Corollary}
\newtheorem{proposition}{Proposition}

\title{\textbf{Uniform-Design Subsampling for Compute-Budgeted Double Machine Learning}}
\author{
     Yuanke Qu$^{1,\#}$, Xiaoya Xu$^{2,\#}$, 
    and Hengtao Zhang$^{1,*}$\\[4pt]
    \small $^{1}$School of Computer Science and Engineering, Guangdong Ocean University, Guangdong 529500, China\\
    \small $^{2}$Institute of Applied Mathematics, Shenzhen Polytechnic University, Shenzhen 518055, China\\
    \small $^\#$\,Co-first authors. These authors contributed equally to this work. \\
    \small $^*$\,Corresponding author: \texttt{zhanght@gdou.edu.cn}
}
\date{\today}

\begin{document}
\maketitle

\begin{abstract}
Double machine learning (DML) combines orthogonal scores with flexible nuisance estimation, but repeated cross-fitting and repeated analysis can make full-data workflows expensive. When a fixed computational budget requires a working sample of size $r\ll n$, simple random subsampling may cover the covariate space poorly and produce unstable treated--control composition. We propose Uniform Design Double Machine Learning (UD-DML), which maps a common low-discrepancy skeleton through empirical marginal quantiles and matches each anchor to one treated and one control observation, without replacement within each treatment arm. We establish finite-sample integration and balance bounds, an exact selected-target decomposition, and a selection-aware leading-score central limit theorem on the root-$r$ scale. Under explicit moment, weighted-calibration, variance-stabilisation, and target-transport conditions, a residual-based variance estimator consistently studentises the original UD-DML estimator. Full-scale simulations show that the common-skeleton construction provides its clearest improvements over uniform subsampling under non-uniform covariate geometry and limited treated--control overlap, while maintaining empirical coverage near the nominal level. These results provide a principled working-sample design for repeated causal learning under an explicit computational budget.
\end{abstract}

\vspace{1em}
\noindent
\textbf{Keywords:} Causal inference, Double machine learning, Computational budget, Uniform design, Subsampling.

\iffastdemo
\begin{center}
\fcolorbox{red}{white}{\parbox{0.92\linewidth}{\centering\color{red}\bfseries
FULL-SCALE SIMULATION RESULTS ARE INCLUDED. THE FIXED-DATA NATALITY ANALYSIS REMAINS A FAST-DEMO PLACEHOLDER AND MUST BE REPLACED BEFORE SUBMISSION.}}
\end{center}
\fi

\section{Introduction}

Large observational databases do not remove computational constraints; they make the unit of analysis a complete workflow rather than a single model fit. Electronic health records, insurance claims, platform logs, and linked administrative data can contain millions of observations \citep{stuart_dugoff_abrams_salkever_steinwachs_2013,chetty_hendren_2018,luo_zhuang_xie_zhu_wang_an_xu_2024}. A causal analysis on such data commonly repeats treatment and outcome learning across cross-fitting folds, tuning configurations, bootstrap samples, and sensitivity specifications \citep{zivich_breskin_2021,kosko_wang_santacatterina_2024}. Even when one full-data fit is possible, the full workflow may exceed a wall-clock, memory, or cloud-computing budget. The literature on optimal and design-inspired subsampling treats this constraint as a statistical design problem: once only a fixed number of observations can enter the expensive fitting stage, the retained observations should be chosen deliberately rather than regarded as an arbitrary computational shortcut \citep{Yao2021_Review,Yu2024_Review}.

This issue is especially consequential for double/debiased machine learning (DML), which combines Neyman-orthogonal scores with cross-fitting to estimate low-dimensional causal parameters while allowing flexible nuisance learners \citep{bickel1993efficient,newey1994asymptotic,BangRobins2005,Chernozhukov2018EJ}. For the average treatment effect, every fold requires estimation of the propensity score and treatment-specific outcome regressions, followed by evaluation on held-out observations. We study the regime in which the repeated nuisance-fitting stage has a working-sample budget $r\ll n$. The statistical design problem is to choose the $r$ observations used for nuisance fitting and score evaluation while preserving covariate coverage and treated--control alignment.

That decision cannot be reduced to replacing $n$ by $r$ in the DML algorithm. Simple random subsampling may represent common regions of the covariate distribution adequately while omitting sparse but consequential regions, and its realised treatment composition can be unstable when overlap is limited. The selected treated and control observations may then occupy different parts of the covariate space, aggravating nuisance-estimation error and the variability of inverse-probability contributions \citep{Rosenbaum1983_PS,Lunceford2004}. Moreover, treatment-dependent selection changes the empirical law on which the orthogonal score is evaluated. A fixed-budget DML method must therefore address covariate coverage, cross-arm alignment, and transport from the selected working sample to the original target.

Existing subsampling methods offer two broad design principles. Model-based optimal subsampling assigns inclusion probabilities or selects observations to optimise an information or variance criterion for a specified model and estimator \citep{Yao2021_Review,Yu2024_Review}. This approach has been developed for linear and generalised linear regression \citep{wang2019information,AiYuZhangWang2021_Sinica}, logistic regression \citep{WangZhuMa2018_JASA,Wang2019_MoreEfficient}, quantile regression \citep{WangMa2021_Biometrika}, additive hazards models \citep{Zuo2021_AHM}, and quasi-likelihood estimation \citep{Yu2022_DistQuasi}. Its efficiency criterion is valuable when the working model is credible, but it is less natural when several nuisance functions are learned flexibly and no single parametric model determines which observations are informative. Geometric subsampling instead seeks a space-filling representation of the observed covariates. Uniform-design and quasi-Monte Carlo methods provide tools for this purpose \citep{fang2000uniform,Niederreiter1992,Dick2013}.

The closest geometric method is the model-free UD subsampler of \citet{zhang2023model}, which combines dimension reduction, marginal empirical transformations, a power-generator good-lattice-point design selected by mixture discrepancy, and nearest-observation replacement. We retain these geometric components and redesign the sampling object for causal estimation. UD-DML constructs one common skeleton and assigns to every anchor one treated and one control observation, producing two treatment-specific empirical samples coupled through shared reference locations. This coupling enforces equal allocation and allows the skeleton contribution to cancel in the treated--control balance bound. Cross-fitted DML is then fitted to the selected original observations. Related causal subsampling methods use staged nuisance estimation and calibration \citep{Su2022_CSDA,Su2026_CSDA}. In particular, \citet{Su2026_CSDA} estimates complex nuisance functions on a subsample but computes the final estimator using the full sample. In contrast, UD-DML restricts both nuisance estimation and score evaluation to $r$ observations.

The common-skeleton construction also guides the theory. PCA organizes the retained covariates into orthogonal directions, and the empirical transformations place those directions on comparable marginal scales. The resulting skeleton supplies shared reference locations for treated and control selection. Realised generalized empirical $F$-discrepancy and matching radii then quantify how accurately the matched observations approximate the transformed empirical covariate distribution. These quantities yield finite-sample integration and arm-balance bounds and enter the rate condition for transport from the selected sample to the original target. Under explicit moment, calibration, variance-stabilisation, and rate conditions, the leading score of the original UD-DML estimator has a selection-conditional root-$r$ Gaussian limit, which yields the stated unconditional asymptotic normality.

The paper makes three connected contributions. It formulates treatment-paired, common-skeleton selection as a causal working-sample design under a fixed fitting budget. It provides design-conditional finite-sample guarantees and a selection-aware decomposition that identifies the representation and transport quantities governing inference after subsampling. It also derives a selection-conditional leading-score limit and resulting asymptotic normality with a matching residual-based variance estimator, and evaluates the method through simulation and a large observational application. All working-sample methods are compared at the same $r$-budget, with FULL-DML included as a computational and statistical reference where available.

Section~\ref{sec:methodology} develops the working-sample construction, algorithm, computational cost, and inferential guarantees. Section~\ref{sec:simulations} studies accuracy, uncertainty, scalability, and the components of the sampling design. Section~\ref{sec:real_data} applies the method to US natality records, and Section~\ref{sec:conclusion} discusses its scope and limitations.

\section{Methodology}
\label{sec:methodology}

\subsection{DML under a Fixed Working-Sample Budget}\label{sec:preliminaries}

Consider $n$ i.i.d.\ observations $\bm{O}_i=(Y_i,W_i,\bm{X}_i)$, $i=1,\ldots,n$, where $Y_i\in\mathbb{R}$ is the observed outcome, $W_i\in\{0,1\}$ is the treatment indicator, and $\bm{X}_i\in\mathbb{R}^p$ contains pre-treatment covariates. Under the potential-outcomes formulation \citep{Rubin1974}, $Y_i=W_iY_i(1)+(1-W_i)Y_i(0)$ and the target is the average treatment effect $\theta_0={E}\{Y(1)-Y(0)\}$. We assume unconfoundedness, $(Y(1),Y(0))\perp W\mid\bm{X}$, and overlap, $0<{P}(W=1\mid\bm{X})<1$ almost surely. These conditions identify $\theta_0$, but they do not determine how a restricted working sample should be constructed when nuisance fitting cannot use all $n$ observations. That design choice is the starting point of UD-DML.

The DML framework of \citet{Chernozhukov2018EJ} represents the ATE by the orthogonal moment condition $\mathbb{E}\{\psi(\bm{O};\theta_0,\bm{\eta}_0)\}=0$, where $\bm{\eta}_0$ contains the conditional outcome regressions $m_g(\bm{x})={E}[Y\mid\bm{X}=\bm{x},W=g]$, $g\in\{0,1\}$, and the propensity score $e(\bm{x})={P}(W=1\mid\bm{X}=\bm{x})$. We use the AIPW score
\begin{equation}\label{eq:aipw}
\psi_{\text{AIPW}}(\bm{O};\theta,\bm{\eta})=\bigl(m_1(\bm{X})-m_0(\bm{X})\bigr)+\frac{W\{Y-m_1(\bm{X})\}}{e(\bm{X})}-\frac{(1-W)\{Y-m_0(\bm{X})\}}{1-e(\bm{X})}-\theta.
\end{equation}
Here $\widehat{\bm{\eta}}=(\widehat m_0,\widehat m_1,\widehat e)$ denotes the nuisance estimates. Under $K$-fold cross-fitting, $\widehat{\bm{\eta}}^{(-k)}$ is trained outside fold $k$ and evaluated inside that fold. The resulting estimator solves
\[
\frac{1}{n}\sum_{i=1}^n \psi\!\left(\bm{O}_i;\widehat{\theta},\widehat{\bm{\eta}}^{(-k_i)}\right)=0,
\]
where $k_i$ denotes the fold containing observation $i$. Orthogonality controls the first-order effect of nuisance-estimation error, and the AIPW score is doubly robust in the usual full-data setting \citep{BangRobins2005}. Under the corresponding nuisance-rate and regularity conditions, full-data DML admits root-$n$ inference.

Data-dependent subsampling places nuisance learning under a new empirical covariate law and can alter the treatment composition and the target represented by the selected score. UD-DML addresses these effects at the design stage. It selects both treatment arms jointly through shared anchors, quantifies the realised selected-to-full empirical discrepancy, and decomposes transport from the selected target to the original
population target under an additional condition. The resulting inferential scale is $\sqrt r$, which links precision directly to the working-sample budget and supports principled allocation of computation across repeated analyses.

\subsection{Common-Skeleton Causal Subsampling}\label{sec:UDsub}
The design object in UD-DML is not a single representative subset. It is a pair of arm-specific empirical samples coupled through the same $r_p$ anchors. Each anchor is assigned one treated and one control observation, so the final budget $r=2r_p$ is equally allocated and both arms are asked to cover the same locations in the transformed covariate space. The common skeleton therefore provides the shared reference measure against which treated--control imbalance can be bounded. Distributional fidelity is an observed property rather than a consequence of the name ``uniform design'' and depends on the realised discrepancy and the arm-specific matching radii defined below.

The geometric ingredients used to construct the anchors follow the model-free UD subsampling procedure of \citet{zhang2023model}: PCA rotation, marginal empirical transformation, a power-generator good-lattice-point design selected by mixture discrepancy, and nearest-observation replacement. Our contribution begins with a different statistical object and a different use of those anchors. The method of \citet{zhang2023model} constructs one representative sample for generic downstream learning, whereas UD-DML uses one skeleton to couple two treatment-specific selections and then evaluates an orthogonal causal score on the paired working sample. When the two selected arm averages are subtracted, their common skeleton average cancels; two independently constructed skeletons do not have this cancellation. This coupling yields exact equal allocation, the treated--control balance bound in Section~\ref{sec:theory_inference}, and a selection problem that must be accounted for in the root-$r$ analysis.

Let $\bm X_i \in \mathbb R^p$ denote the $p$-dimensional covariate vector for unit $i$, $i=1,\ldots,n$, and let $\bm X=(\bm X_1^\top,\ldots,\bm X_n^\top)^\top$ be the corresponding $n\times p$ covariate matrix. We construct the anchors in PCA coordinates. The rotation produces orthogonal empirical directions, and truncation concentrates the matching problem on the directions accounting for a prescribed share of total covariate variation. Let
\[
n_g=\sum_{i=1}^n\mathds{1}\{W_i=g\},\qquad g\in\{0,1\},
\qquad
1\leq r_p\leq\min(n_0,n_1).
\]
We assume that every input covariate has positive sample standard deviation. Let $\bar{\bm X}=n^{-1}\sum_{i=1}^n \bm X_i$ be the sample mean of the covariates, and let $\widehat{\bm D}$ be the resulting positive diagonal matrix of sample standard deviations. Thus $\widehat{\bm D}^{-1}$ exists and the standardized data have positive total PCA energy. We standardize the covariates by
\[
\widetilde{\bm X}_i=\widehat{\bm D}^{-1}(\bm X_i-\bar{\bm X}), \qquad i=1,\ldots,n.
\]
Write the singular value decomposition of the standardized matrix as $\widetilde{\bm X}=\bm U \bm \Sigma \bm V^\top$, where $\bm \Sigma=\mathrm{diag}(\sigma_1,\ldots,\sigma_p)$ with $\sigma_1\ge \cdots \ge \sigma_p\ge 0$. Let $\rho_0\in(0,1)$ be a prescribed cumulative variance threshold, and define $q$ to be the smallest integer such that
\(
{(\sum_{d=1}^q \sigma_d^2)}/{(\sum_{d=1}^p \sigma_d^2)}\ge \rho_0.
\)
In our experiments we use $\rho_0=0.85$, following earlier implementations \cite{zhang2023model,Zhou2024_Techno}. This is a working default rather than an outcome-tuned optimum. In applications, $\rho_0$ should be chosen without using $Y$, subject to a maximum feasible $q$, and checked using retained variance, achieved GEFD, and matching radii; low-variance directions may still be outcome-relevant. Let $\bm V_q$ be the first $q$ columns of $\bm V$. We then define the retained rotated covariates by
\[
\bm Z_i=\bm V_q^\top \widetilde{\bm X}_i \in \mathbb R^q, \qquad i=1,\ldots,n.
\]
Here $q\le p$ is the working dimension used in the UD construction. When $q=p$, no dimension reduction is performed. 

The next step constructs the shared reference locations. For a prescribed pair budget $r_p$, an $r_p$-run, $q$-factor uniform design is a point set in $[0,1]^q$ chosen to have low discrepancy from the product-uniform reference distribution. The points are later mapped to the empirical marginal scales of the rotated covariates and reused for both treatment arms. Following \citet{zhang2023model}, we generate candidate skeletons by the leave-one-out good lattice point method with a power generator. Let $\gcd(\cdot,\cdot)$ denote the greatest common divisor.
A positive integer $\gamma$ is said to be admissible if $\gcd(\gamma,r_p+1)=1$ and the $q$ residues $1,\gamma,\gamma^2,\ldots,\gamma^{q-1}\pmod{r_p+1}$ are mutually distinct. For an admissible $\gamma$, define the power generator vector by
\[
\bm \gamma=\bigl(\gamma^0,\gamma^1,\ldots,\gamma^{q-1}\bigr)^\top.
\]
The corresponding candidate $r_p$-run $q$-factor design is
\[
\mathcal{U}_{r_p}(\gamma)=\bigl(\bm u_j^{(\gamma)}\bigr)_{j=1}^{r_p}\subset[0,1]^q,
\]
where
\[
\bm u_j^{(\gamma)}
=
\frac{\mathrm{mod}(j\bm \gamma,\,r_p+1)}{r_p}
-\frac{1}{2r_p}\bm 1_q,
\qquad j=1,\ldots,r_p,
\]
$\bm 1_q$ is the $q$-dimensional vector of ones, and $\mathrm{mod}(\cdot,\cdot)$ is applied componentwise. Different admissible values of $\gamma$ give different candidate skeletons. We select among them by a criterion that measures coverage of the reference cube; this criterion governs anchor construction and is not itself a claim that the observed joint covariate distribution is uniform.

We use the mixture discrepancy, a generalized $L_2$-discrepancy criterion for space-filling designs \citep{zhou2013mixture,zhang2023model}. Let $\mathcal D=(\bm u_j)_{j=1}^{r_p}\subset[0,1]^q$ be an indexed candidate design, and let $F_\mathcal D$ and $F_u$ denote the empirical distribution of $\mathcal D$ and the uniform distribution on $[0,1]^q$, respectively. For a reproducing kernel $\mathcal{K}:[0,1]^q\times[0,1]^q\to\mathbb R$, define the squared discrepancy of $\mathcal D$ from $F_u$ by
\begin{align*}
D^2(\mathcal D;F_u,\mathcal{K})
&=
\int_{[0,1]^{2q}}
\mathcal{K}(\bm u,\bm t)\,d\{F_\mathcal D(\bm u)-F_u(\bm u)\}\,d\{F_\mathcal D(\bm t)-F_u(\bm t)\} \\
&=
\int_{[0,1]^{2q}} \mathcal{K}(\bm u,\bm t)\,d\bm u\,d\bm t
-\frac{2}{r_p}\sum_{j=1}^{r_p}\int_{[0,1]^q} \mathcal{K}(\bm u,\bm u_j)\,d\bm u \\
&\quad
+\frac{1}{r_p^2}\sum_{j=1}^{r_p}\sum_{k=1}^{r_p} \mathcal{K}(\bm u_j,\bm u_k),
\end{align*}
where the second equality follows by expanding the quadratic form. A smaller $D^2(\mathcal D;F_u,\mathcal{K})$ indicates closer coverage of the reference cube. For the mixture-discrepancy kernel,
\[
\mathcal{K}_M(\bm u,\bm t)
=
\prod_{d=1}^q
\left(
\frac{15}{8}
-\frac{1}{4}\left|u_d-\frac{1}{2}\right|
-\frac{1}{4}\left|t_d-\frac{1}{2}\right|
-\frac{3}{4}|u_d-t_d|
+\frac{1}{2}|u_d-t_d|^2
\right),
\quad \bm u,\bm t\in[0,1]^q.
\]
The corresponding discrepancy
$
D_M(\mathcal D):=D(\mathcal D;F_u,\mathcal{K}_M)
$
is called the mixture discrepancy. Since $\mathcal{K}_M$ is a product kernel, the above integrals can be evaluated coordinatewise, which yields the closed-form expression
\begin{align*}
D_M^2(\mathcal D)
&=
\left(\frac{19}{12}\right)^q
-\frac{2}{r_p}
\sum_{j=1}^{r_p}
\prod_{d=1}^q
\left[
\frac{5}{3}
-\frac{1}{4}\left|u_{jd}-\frac{1}{2}\right|
-\frac{1}{4}\left(u_{jd}-\frac{1}{2}\right)^2
\right] \\
&\quad
+\frac{1}{r_p^2}
\sum_{j=1}^{r_p}\sum_{k=1}^{r_p}
\prod_{d=1}^q
\left(
\frac{15}{8}
-\frac{1}{4}\left|u_{jd}-\frac{1}{2}\right|
-\frac{1}{4}\left|u_{kd}-\frac{1}{2}\right|
-\frac{3}{4}|u_{jd}-u_{kd}|
+\frac{1}{2}|u_{jd}-u_{kd}|^2
\right).
\end{align*}
The derivation is given in Appendix~\ref{sec:MD_deriv}. We choose the skeleton by minimizing this criterion over admissible generators. Let $\mathcal A_{r_p,q}$ denote the set of admissible values of $\gamma$. We assume $\mathcal A_{r_p,q}\neq\varnothing$ and fix a search budget $B_\gamma\in\mathbb N_+$. Exhaustive enumeration of $\mathcal A_{r_p,q}$ can be unnecessarily expensive when $r_p$ is moderately large.
We thus implement the quasi-optimal search over a budgeted random subset of size $B_\gamma$, 
\(
\mathcal{G}^{(B_\gamma)}_{r_p,q} \subset \mathcal{A}_{r_p,q},
\) with $|\mathcal{G}^{(B_\gamma)}_{r_p,q}| = \min\{B_\gamma, |\mathcal{A}_{r_p,q}|\}$.
The searched candidate collection is therefore nonempty. We define
\[
\widehat{\gamma}\in\arg\min_{\gamma\in\mathcal{G}^{(B_\gamma)}_{r_p,q}} D_M^2\!\bigl(\mathcal U_{r_p}(\gamma)\bigr),
\]
choosing the smallest numerical generator if the minimum is attained more than once, and set $\mathcal U_{r_p}=\mathcal U_{r_p}(\widehat{\gamma})=(\bm u_j)_{j=1}^{r_p}$.
This produces a budgeted search for a space-filling skeleton relative to $F_u$. The value $B_\gamma=30$ used below is a computational default. A practical alternative is to stop when additional candidate generators yield negligible improvement in mixture discrepancy.

The reference-cube skeleton must next be expressed on the scale at which observed units are matched. For each coordinate $d=1,\ldots,q$, let $\widehat F_{Z^{(d)}}$ denote the empirical cumulative distribution function of $\{Z_i^{(d)}\}_{i=1}^n$, and define the marginal empirical transformation
\[
T_{\bm Z}(\bm z)
=
\bigl(
\widehat F_{Z^{(1)}}(z^{(1)}),\ldots,\widehat F_{Z^{(q)}}(z^{(q)})
\bigr)^\top.
\]
The corresponding skeleton points in the rotated space are
\[
\bm v_j
=
T_{\bm Z}^{-1}(\bm u_j)
=
\Bigl(
\widehat F_{Z^{(1)}}^{-1}(u_{j1}),\ldots,\widehat F_{Z^{(q)}}^{-1}(u_{jq})
\Bigr)^\top,
\qquad j=1,\ldots,r_p.
\]
Each $\bm v_j\in\mathbb R^q$ is an anchor in the PCA-rotated covariate space. The coordinatewise map $T_{\bm Z}$ calibrates only the empirical marginal scales. It does not transform the joint empirical distribution
into the product-uniform distribution. Mixture discrepancy selects a space-filling reference skeleton, and the realised empirical discrepancy and matching radii measure how closely its matched anchors follow the observed joint support.

The methodological distinction from applying UD separately within each arm appears at the matching step. The same anchor $\bm v_j$ is used once in the treated arm and once in the control arm, and matching is performed without replacement within each arm. Thus, the two selected empirical measures are linked anchor by anchor rather than merely having the same sample size. The preceding scale, energy, arm-size, and generator conditions make the inverse standardisation, retained PCA coordinates, skeleton search, and without-replacement matching well defined. Nearest-observation argmin ties are resolved by selecting the smallest original observation index among equal-distance matches. With $\mathcal U_{g,j-1}=\{i_1^g,\ldots,i_{j-1}^g\}$, for $j=1,\ldots,r_p$ we set
$$i_j^g=\arg\min_{i:W_i=g,\ i\notin\mathcal U_{g,j-1}}\|\bm{Z}_i-\bm{v}_j\|_2,\qquad g\in\{0,1\},$$
with the stated smallest-index tie rule.
Distances are computed in the retained rotated space, but the anchors generate no synthetic outcomes or covariates. Once $i_j^1$ and $i_j^0$ are obtained, estimation uses the original observations $(Y_{i_j^g},W_{i_j^g},\bm X_{i_j^g})$, $g\in\{0,1\}$. Define $\mathcal S_1=\{i_j^1\}_{j=1}^{r_p}$, $\mathcal S_0=\{i_j^0\}_{j=1}^{r_p}$, and $\mathcal S=\mathcal S_1\cup\mathcal S_0$. The selected sample has size $r=2r_p$ and is
\(
\bigl\{(Y_i,W_i,\bm X_i):i\in\mathcal S\bigr\}.
\) 
The resulting sample retains original outcomes and covariates while coupling the two treatment-specific empirical measures through a common geometric reference. Section~\ref{sec:theory_inference} quantifies its realised covariate representation and arm alignment using generalized empirical $F$-discrepancy and arm-specific matching radii.

\subsection{Algorithm, Tuning, and Computational Cost}\label{sec:UD-DML Algo}
Algorithm~\ref{alg:UD-DML} translates the common-skeleton construction into three computational stages. The first constructs the anchors and selects distinct treated--control pairs in the retained PCA space. The second fits cross-fitted DML to the selected observations on their original scale. The third computes the point estimate and the selection-aware residual variance studied in Section~\ref{sec:theory_inference}. The algorithm is called only when the preceding feasibility conditions hold; the deterministic tie rule is part of the selection map. A Python implementation is available at \url{https://github.com/BobZhangHT/UD-DML}.

\begin{algorithm}[htbp]
\caption{Uniform Design Double Machine Learning (UD-DML)}
\label{alg:UD-DML}
\footnotesize
\begin{algorithmic}[1]
\Require Full data $\{(Y_i,W_i,\bm X_i)\}_{i=1}^n$ satisfying positive sample standard deviations, positive total standardized PCA energy, $1\le r_p\le\min(n_0,n_1)$ with $n_g=\sum_i\mathds{1}\{W_i=g\}$, and nonempty $\mathcal A_{r_p,q}$; retained-variance threshold $\rho_0$; $K$ folds for DML; quasi-optimal search budget $B_\gamma\in\mathbb N_+$.
\Ensure ATE estimate $\widehat{\theta}_{\mathrm{UD}}$, residual variance estimate $\widehat{V}_{\mathrm{UD}}$, and a Wald interval.
\State Let $r=2r_p$.

\Statex \textbf{Phase 1: UD subsampling in the retained PCA space}
\State Standardize the covariates:
\(
\widetilde{\bm X}_i=\widehat{\bm D}^{-1}(\bm X_i-\bar{\bm X}), ~~i=1,\ldots,n.
\)
\State Compute the singular value decomposition $\widetilde{\bm X}=\bm{U}\bm{\Sigma} \bm{V}^\top$, and choose the smallest $q$ such that
\(
{(\sum_{d=1}^q \sigma_d^2)}/{(\sum_{d=1}^p \sigma_d^2)}\ge \rho_0.
\)
\State Form the retained rotated covariates
\(
\bm Z_i=\bm V_q^\top \widetilde{\bm X}_i \in \mathbb R^q,~~ i=1,\ldots,n.
\)
\State Compute the marginal empirical CDFs $\hat{F}_{Z^{(d)}}(\cdot)$ of $\{Z_i^{(d)}\}_{i=1}^n$ for $d=1,\ldots,q$.
\If{a cached skeleton exists for the triple $(r_p,q,B_\gamma)$}
    \State Load the cached design $\mathcal U_{r_p}=\{\bm u_j\}_{j=1}^{r_p}$.
\Else
    \State Randomly draw a candidate subset $\mathcal{G}^{(B_\gamma)}_{r_p,q} \subset \mathcal{A}_{r_p,q}$ with
    \(
    |\mathcal{G}^{(B_\gamma)}_{r_p,q}| = \min\!\{B_\gamma,\ |\mathcal{A}_{r_p,q}|\}.
    \)
    \For{each candidate $\gamma \in \mathcal{G}^{(B_\gamma)}_{r_p,q}$}
        \State Construct the $r_p$-run $q$-factor candidate design $\mathcal U_{r_p}(\gamma)=\{\bm u_j^{(\gamma)}\}_{j=1}^{r_p}\subset[0,1]^q$.
        \State Compute its mixture discrepancy $D_M^2\!\left(\mathcal U_{r_p}(\gamma)\right)$.
    \EndFor
    \State Choose
    \[
     \widehat{\gamma}=\min\arg\min_{\gamma\in \mathcal{G}^{(B_\gamma)}_{r_p,q}}
    D_M^2\!\left(\mathcal U_{r_p}(\gamma)\right),
    \]
    and set $\mathcal U_{r_p}=\mathcal{U}_{r_p}(\hat{\gamma})=\{\bm u_j\}_{j=1}^{r_p}$.
    \State Cache $\mathcal U_{r_p}$ for future repeated runs with the same $(r_p,q,B_\gamma)$.
\EndIf
\State Map the design points back to the retained rotated space by the empirical inverse transformation:
\[
\bm v_j=T_{\bm Z}^{-1}(\bm u_j)
=
\Bigl(
\widehat F_{Z^{(1)}}^{-1}(u_{j1}),\ldots,\widehat F_{Z^{(q)}}^{-1}(u_{jq})
\Bigr)^\top,\qquad j=1,\ldots,r_p.
\]
\State Build \texttt{cKDTree} on the rotated covariates:
\(\mathcal I_1 \text{ on } \{\bm Z_i:W_i=1\},\) and 
\(\mathcal I_0 \text{ on } \{\bm Z_i:W_i=0\}.\)
\State Initialize $\mathcal S_1=\emptyset$, $\mathcal S_0=\emptyset$ and arm-specific used-index sets $\mathcal U_1=\mathcal U_0=\emptyset$.
\For{$j=1,\ldots,r_p$}
\State Find the smallest-index minimizer
    \(
    i_j^1=\arg\min_{i:W_i=1,\ i\notin\mathcal U_1}\|\bm Z_i-\bm v_j\|_2
    \)
    by querying $\mathcal I_1$ and enlarging the candidate list until an unused index is found.
\State Find the smallest-index minimizer
    \(
    i_j^0=\arg\min_{i:W_i=0,\ i\notin\mathcal U_0}\|\bm Z_i-\bm v_j\|_2
    \)
    by querying $\mathcal I_0$ and enlarging the candidate list until an unused index is found.
    \State Update
    \(\mathcal S_1\leftarrow \mathcal S_1\cup\{i_j^1\},\)
    \(\mathcal S_0\leftarrow \mathcal S_0\cup\{i_j^0\},\)
    $\mathcal U_1\leftarrow\mathcal U_1\cup\{i_j^1\}$, and
    $\mathcal U_0\leftarrow\mathcal U_0\cup\{i_j^0\}$.
\EndFor
\State Set $\mathcal S=\mathcal S_1\cup\mathcal S_0$.

\Statex \textbf{Phase 2: Cross-fitted DML on the selected original observations}
\State Run the standard $K$-fold cross-fitting DML procedure on
\(
\{(Y_i,W_i,\bm X_i):i\in\mathcal S\}.
\)
\State Obtain the cross-fitted nuisance estimators $\widehat{\bm\eta}^{(-k)}$ and the pseudo-outcomes
\(
\widehat\psi_i^*
=
\psi^*\!\left(\bm{O}_i; \widehat{\bm\eta}^{(-k_i)}\right),~i\in\mathcal S.
\)

\Statex \textbf{Phase 3: Estimation and selection-aware uncertainty}
\State Compute
\(
\widehat{\theta}_{\mathrm{UD}}=\frac{1}{r}\sum_{i\in\mathcal S}\widehat\psi_i^*.
\)
\State For every $i\in\mathcal S$, compute the cross-fitted residual contribution
\(
\widehat\xi_i=\frac{W_i(Y_i-\widehat m_1^{(-k_i)}(\bm X_i))}{\widehat e_i}
-\frac{(1-W_i)(Y_i-\widehat m_0^{(-k_i)}(\bm X_i))}{1-\widehat e_i}
\)
and set
\(
\widehat{V}_{\mathrm{UD}}=r^{-2}\sum_{i\in\mathcal S}\widehat\xi_i^2.
\)
\State Construct the $(1-\alpha)$ Wald interval
\(
\widehat{\theta}_{\mathrm{UD}}\pm z_{1-\alpha/2}\sqrt{\widehat{V}_{\mathrm{UD}}}.
\)
\end{algorithmic}
\end{algorithm}

Matching without replacement is essential to the statistical procedure. It guarantees $|\mathcal S_g|=r_p$ and prevents one original observation from entering more than one cross-fitting fold. Allowing duplicate nearest neighbours would instead produce a multiset, change the effective sample size, and invalidate the usual fold-independence and variance interpretation. The implementation therefore enforces one-to-one matching within each treatment arm and records the selected-sample size, number of unique indices, and arm counts for every replication.

All design parameters are chosen without using outcomes. The primary tuning quantity is $r$, which translates the available fitting budget into a statistical working-sample size. In practice, we recommend evaluating a short increasing sequence of $r$ values and stopping when the available computation is exhausted or when GEFD, mean and maximum matching radii, and covariate standardised mean differences show little further improvement. The threshold $\rho_0$ trades a lower matching dimension against the risk of omitting low-variance directions, and should therefore be assessed jointly with these diagnostics. The search budget $B_\gamma$ can be increased until the best mixture discrepancy stabilises. We use $K=2$ in the large simulation study and $K=5$ in the data application; within each experiment, all methods use the same fold count, nuisance learners, and propensity clipping. The implementation also records the achieved GEFD approximation, mean and maximum matching radii, SMD summaries, component runtimes, and process peak memory, so that a favourable point estimate cannot substitute for evidence that the intended design was actually obtained.

The cost separates into full-data preprocessing, skeleton search, matching, and nuisance fitting. Standardisation, PCA, and the marginal empirical transformations are performed once; after the rotated covariates are available, sorting the $q$ coordinates costs $O(qn\log n)$. Evaluation of one candidate skeleton costs $O(r_p^2q)$ through the closed-form mixture discrepancy, so a cold search over $B_\gamma$ candidates costs $O(B_\gamma r_p^2q)$. The skeleton can be cached whenever $(r_p,q,B_\gamma)$ is unchanged.

For matching, a direct scan of both arms at every anchor costs $O(r_pnq)$. The implementation instead builds one \texttt{cKDTree} per arm and enlarges each query until it finds the nearest unused observation. In low-to-moderate dimension, $O(r_pq\log n)$ describes the average query cost, although competition among anchors can increase it and this expression is not a worst-case guarantee. Once $\mathcal S$ has been selected, cross-fitting depends on $r$ rather than $n$; write its cost as $C_{\mathrm{DML}}(r,p,K)$. Computing the point estimate and residual variance adds $O(r)$.

Combining the above components, the total cold-start computational cost of UD-DML is
\[
C_{\mathrm{PCA}}(n,p,q)
+
O(qn\log n)
+
O(B_\gamma r_p^2 q)
+
O(r_p q\log n)
+
C_{\mathrm{DML}}(r,p,K),
\]
where $C_{\mathrm{PCA}}(n,p,q)$ denotes the one-time PCA preprocessing cost.
In repeated Monte Carlo experiments the UD skeleton depends only on $(r_p,q,B_\gamma)$; once the quasi-optimal design has been generated and cached, the amortised cost per replication reduces to
\[
C_{\mathrm{PCA}}(n,p,q)
+
O(qn\log n)
+
O(r_p q\log n)
+
C_{\mathrm{DML}}(r,p,K),
\]
up to negligible cache-lookup overhead. After caching, the iterative learner cost is $C_{\mathrm{DML}}(r,p,K)$. For comparison, standard full-data DML has leading cost
\[
C_{\mathrm{FULL}}(n,p,K)\asymp C_{\mathrm{DML}}(n,p,K),
\]
up to a lower-order inference term. UD-DML still processes all $n$ covariate vectors, but it confines repeated nuisance learning to the selected $r$ observations. This distinction is useful when nuisance fitting, rather than one-time geometric preprocessing, exhausts the workflow budget.

For illustration, suppose tree-ensemble training on $m$ observations has approximate cost
\[
c_{K,T}m\log m,
\]
where $c_{K,T}$ depends on the learner and cross-fitting configuration. Full-data fitting then uses $m=n$, whereas UD-DML uses $m=r$ after preprocessing. The realised saving depends on $(n,p,q,r)$, the learner, and cache reuse. Each reported LightGBM fit uses \texttt{n\_jobs=1}, and any concurrency is across independent Monte Carlo or bootstrap replications; the timings are therefore per-call benchmarks under this implementation.

\subsection{Design Guarantees and Root-$r$ Inference}\label{sec:theory_inference}

The theory connects three parts of the working-sample construction. The first quantifies how closely selected covariates reproduce full-sample empirical averages. The second controls treated--control imbalance through the common skeleton, while the corollary identifies the additional conditions needed for transport to the original target. The third establishes inference for the original target from the $r$ selected observations. Deterministic finite-sample bounds govern representation and balance, while stochastic conditions on outcome noise, nuisance estimation, conditional variance, and componentwise rates yield a selection-conditional leading-score limit and hence root-$r$ asymptotic normality for the estimator.

The representation analysis begins with the generalized empirical $F$-discrepancy (GEFD) of \citet{zhang2023model}. Let $\mathcal{Z}=(\bm{Z}_i)_{i=1}^n$ denote the retained PCA-rotated covariates, where $\bm{Z}_i=\bm{V}_q^\top \widehat{\bm{D}}^{-1}(\bm{X}_i-\bar{\bm{X}})$. For a finite indexed collection $\mathcal A=(\bm a_\ell)_{\ell=1}^{N_{\mathcal A}}\subset\mathbb R^q$, repetitions are retained and its empirical measure is
\[
\widehat P_{\mathcal A}=N_{\mathcal A}^{-1}\sum_{\ell=1}^{N_{\mathcal A}}\delta_{\bm a_\ell}.
\]
Thus, the notation records the observation index even when empirical quantiles or transformed anchors coincide. For another indexed collection $\mathcal P=(\bm \xi_k)_{k=1}^m\subset\mathbb R^q$, define the transformed collections
\[
\widetilde{\mathcal Z}=(\widetilde{\bm Z}_i)_{i=1}^n,
\quad \widetilde{\bm Z}_i=T_{\bm Z}(\bm Z_i),
\qquad
\widetilde{\mathcal P}=(\widetilde{\bm\xi}_k)_{k=1}^m,
\quad \widetilde{\bm\xi}_k=T_{\bm Z}(\bm\xi_k).
\]
Both transformed collections lie in $[0,1]^q$. For a reproducing kernel $\mathcal{K}:[0,1]^q\times[0,1]^q\to\mathbb R$, define their squared GEFD by
\begin{align}
D^2(\widetilde{\mathcal P};\widetilde{\mathcal Z},\mathcal{K})
&=
\frac{1}{n^2}\sum_{i=1}^n\sum_{i'=1}^n
\mathcal{K}\!\left(\widetilde{\bm Z}_i,\widetilde{\bm Z}_{i'}\right)
-\frac{2}{nm}\sum_{i=1}^n\sum_{k=1}^{m}
\mathcal{K}\!\left(\widetilde{\bm Z}_i,\widetilde{\bm\xi}_k\right) \notag\\
&\quad
+\frac{1}{m^2}\sum_{k=1}^{m}\sum_{k'=1}^{m}
\mathcal{K}\!\left(\widetilde{\bm\xi}_k,\widetilde{\bm\xi}_{k'}\right). \label{eq:gefd_sec24}
\end{align}
Here $T_{\bm Z}$ is the marginal empirical transformation defined in Section~\ref{sec:UDsub}. Whenever $D$ compares finite collections, its arguments are interpreted through their empirical measures, so repeated transformed observations remain in the averages. A term of the form $D(\mathcal A;F_u,\mathcal K)$ compares a transformed collection $\mathcal A\subset[0,1]^q$ with the product-uniform reference on the same space, as in the mixture discrepancy of Section~\ref{sec:UDsub}. The GEFD is conditional on the observed design and does not measure discrepancy from the population law. Let $\mathcal H_{\mathcal K}$ be the reproducing-kernel Hilbert space of a positive-definite kernel $\mathcal K$, and define the kernel variation by
\[
V_{\mathcal K}(f):=\|f\|_{\mathcal H_{\mathcal K}},\qquad f\in\mathcal H_{\mathcal K}.
\]
The following empirical RKHS Koksma--Hlawka lemma gives the form used below; its proof is included in Appendix~\ref{app:proof_ekh_rkhs}.
\begin{lemma}[Empirical RKHS Koksma--Hlawka inequality]\label{lem:ekh_rkhs}
For indexed collections $\mathcal A=(\bm a_\ell)_{\ell=1}^{N_{\mathcal A}}$ and $\mathcal B=(\bm b_s)_{s=1}^{N_{\mathcal B}}$ in $[0,1]^q$ and every $f\in\mathcal H_{\mathcal K}$,
\begin{equation}\label{eq:ekh_sec24}
\left|
\frac{1}{N_{\mathcal A}}\sum_{\ell=1}^{N_{\mathcal A}} f(\bm a_\ell)
-
\frac{1}{N_{\mathcal B}}\sum_{s=1}^{N_{\mathcal B}} f(\bm b_s)
\right|
\le
D(\mathcal A;\mathcal B,\mathcal K)\,V_{\mathcal K}(f).
\end{equation}
\end{lemma}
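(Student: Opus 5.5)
The plan is the standard kernel-mean-embedding argument. Write $\mu_{\mathcal A}=N_{\mathcal A}^{-1}\sum_{\ell}\mathcal K(\cdot,\bm a_\ell)$ and $\mu_{\mathcal B}=N_{\mathcal B}^{-1}\sum_{s}\mathcal K(\cdot,\bm b_s)$. Both are finite linear combinations of kernel sections, so they lie in $\mathcal H_{\mathcal K}$. Repetitions in the indexed collections are simply kept as repeated summands, consistent with the empirical-measure convention stated before the lemma.

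First, I apply the reproducing property $f(\bm x)=\langle f,\mathcal K(\cdot,\bm x)\rangle_{\mathcal H_{\mathcal K}}$ for every $f\in\mathcal H_{\mathcal K}$. Linearity of the inner product then turns the left-hand side of \eqref{eq:ekh_sec24} into $|\langle f,\mu_{\mathcal A}-\mu_{\mathcal B}\rangle_{\mathcal H_{\mathcal K}}|$. Cauchy--Schwarz bounds this by $\|f\|_{\mathcal H_{\mathcal K}}\,\|\mu_{\mathcal A}-\mu_{\mathcal B}\|_{\mathcal H_{\mathcal K}}$, and the first factor is $V_{\mathcal K}(f)$ by definition.

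The remaining step identifies $\|\mu_{\mathcal A}-\mu_{\mathcal B}\|^2_{\mathcal H_{\mathcal K}}$ with $D^2(\mathcal A;\mathcal B,\mathcal K)$. I expand the squared norm using $\langle \mathcal K(\cdot,\bm x),\mathcal K(\cdot,\bm y)\rangle=\mathcal K(\bm x,\bm y)$. This gives
\[
\frac{1}{N_{\mathcal A}^2}\sum_{\ell,\ell'}\mathcal K(\bm a_\ell,\bm a_{\ell'})
-\frac{2}{N_{\mathcal A}N_{\mathcal B}}\sum_{\ell,s}\mathcal K(\bm a_\ell,\bm b_s)
+\frac{1}{N_{\mathcal B}^2}\sum_{s,s'}\mathcal K(\bm b_s,\bm b_{s'}).
\]
This is exactly the three-term quadratic form \eqref{eq:gefd_sec24}, read as a discrepancy between the two empirical measures. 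The cross term is symmetric because $\mathcal K$ is symmetric, and the roles of the two collections may be interchanged. Since a squared norm is nonnegative, $D$ is well defined as the nonnegative square root, and the inequality follows.

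The only point needing care is notational rather than analytic. In \eqref{eq:gefd_sec24} the GEFD was written for transformed collections $T_{\bm Z}(\cdot)$ compared against $\widetilde{\mathcal Z}$. In the lemma, $\mathcal A$ and $\mathcal B$ are arbitrary indexed collections already lying in $[0,1]^q$. I will state explicitly that $D(\mathcal A;\mathcal B,\mathcal K)$ denotes the same kernel quadratic form evaluated directly on the points of $\mathcal A$ and $\mathcal B$, as the convention for finite collections indicates. With that reading, the identity above is immediate and no further obstacle remains.
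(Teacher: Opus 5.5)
Your proposal is correct and follows essentially the same route as the paper's proof: kernel mean embeddings $\mu_{\mathcal A},\mu_{\mathcal B}$, the reproducing property, Cauchy--Schwarz, and expansion of $\|\mu_{\mathcal A}-\mu_{\mathcal B}\|^2_{\mathcal H_{\mathcal K}}$ into the three-term quadratic form defining $D^2(\mathcal A;\mathcal B,\mathcal K)$. Your remark that $D(\mathcal A;\mathcal B,\mathcal K)$ must be read as that quadratic form evaluated directly on arbitrary collections in $[0,1]^q$ makes explicit a convention that the paper uses implicitly.
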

This formulation uses only the reproducing property and Cauchy--Schwarz; the cited works provide the empirical-discrepancy and kernel-integration background for this representation \citep{zhang2023model,Hickernell1998}.
Equation~\eqref{eq:ekh_sec24} controls two empirical averages in the transformed retained-coordinate space. The coordinatewise transformation standardises the empirical marginal scales, while Proposition~\ref{prop:copula_decomp} relates the realised point-set GEFD to the searched mixture discrepancy, the empirical inverse-transform approximation, and the joint geometry of the transformed sample.

\begin{proposition}[GEFD decomposition]\label{prop:copula_decomp}
Let $\mathcal V_{r_p}=(\bm v_j)_{j=1}^{r_p}\subset\mathbb R^q$ be the indexed UD skeleton and let $\widetilde{\mathcal V}_{r_p}=(T_{\bm Z}(\bm v_j))_{j=1}^{r_p}\subset[0,1]^q$ be its transformed collection. Let $\mathcal U_{r_p}=(\bm u_j)_{j=1}^{r_p}$ and let $F_u$ be the product-uniform distribution on $[0,1]^q$. For any positive-definite kernel $\mathcal K$ for which the corresponding kernel mean embeddings exist,
\begin{equation}\label{eq:copula_decomp}
\begin{aligned}
D(\widetilde{\mathcal V}_{r_p};\widetilde{\mathcal Z},\mathcal K)
&\leq D(\widetilde{\mathcal V}_{r_p};F_u,\mathcal K)
+D(\widetilde{\mathcal Z};F_u,\mathcal K)\\
&\leq D(\widetilde{\mathcal V}_{r_p};\mathcal U_{r_p},\mathcal K)
+D(\mathcal U_{r_p};F_u,\mathcal K)
+D(\widetilde{\mathcal Z};F_u,\mathcal K).
\end{aligned}
\end{equation}
For $\mathcal K=\mathcal K_M$, $D(\mathcal U_{r_p};F_u,\mathcal K_M)=D_M(\mathcal U_{r_p})$ is the mixture discrepancy used to search over skeletons. The other two terms record the empirical inverse-transform approximation and the joint geometry of the transformed sample.
\end{proposition}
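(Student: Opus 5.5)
The plan is to realise each discrepancy as a distance between kernel mean embeddings in $\mathcal H_{\mathcal K}$, so that both inequalities in \eqref{eq:copula_decomp} become instances of the triangle inequality for the Hilbert norm. For a finite indexed collection $\mathcal A=(\bm a_\ell)_{\ell=1}^{N_{\mathcal A}}\subset[0,1]^q$ set
\[
\mu_{\mathcal A}=N_{\mathcal A}^{-1}\sum_{\ell=1}^{N_{\mathcal A}}\mathcal K(\cdot,\bm a_\ell),
\]
and for $F_u$ set $\mu_{F_u}=\int_{[0,1]^q}\mathcal K(\cdot,\bm t)\,d\bm t$. The latter is the object whose existence (Bochner integrability in $\mathcal H_{\mathcal K}$) is assumed in the statement. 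For $\mathcal K_M$ it is automatic, since the kernel is continuous and bounded on the compact cube.

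The first step is to verify the identity $D^2(\mathcal A;\mathcal B,\mathcal K)=\|\mu_{\mathcal A}-\mu_{\mathcal B}\|_{\mathcal H_{\mathcal K}}^2$ for each pair of arguments that occurs.
\begin{itemize}
\item For two indexed collections, expand the squared norm. Using $\langle\mathcal K(\cdot,\bm a),\mathcal K(\cdot,\bm b)\rangle_{\mathcal H_{\mathcal K}}=\mathcal K(\bm a,\bm b)$ reproduces exactly the three double sums in \eqref{eq:gefd_sec24}, with repetitions kept because $\mu_{\mathcal A}$ is formed from the indexed empirical measure.
\item For a collection against $F_u$, the same expansion gives the second line of the definition of $D^2(\mathcal D;F_u,\mathcal K)$ in Section~\ref{sec:UDsub}. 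The cross terms are handled with $\langle\mu_{F_u},\mathcal K(\cdot,\bm a)\rangle=\int\mathcal K(\bm t,\bm a)\,d\bm t$, and the first term with $\|\mu_{F_u}\|^2=\iint\mathcal K\,d\bm u\,d\bm t$. Both follow by moving the Bochner integral through the continuous linear functional.
\end{itemize}
This holds for any positive-definite kernel whose embeddings exist, so $D$ is a pseudometric on embedded measures. I would also note that $\widetilde{\mathcal V}_{r_p}$, $\widetilde{\mathcal Z}$ and $\mathcal U_{r_p}$ all lie in $[0,1]^q$, since $T_{\bm Z}$ maps into the cube, so every term in the statement is defined.

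With the identity in hand, the inequalities are immediate.
\begin{itemize}
\item For the first, write $\mu_{\widetilde{\mathcal V}}-\mu_{\widetilde{\mathcal Z}}=(\mu_{\widetilde{\mathcal V}}-\mu_{F_u})-(\mu_{\widetilde{\mathcal Z}}-\mu_{F_u})$ and apply the triangle inequality. Symmetry of the norm gives $D(F_u;\widetilde{\mathcal Z})=D(\widetilde{\mathcal Z};F_u)$.
\item For the second, insert $\mu_{\mathcal U_{r_p}}$ the same way to bound $D(\widetilde{\mathcal V}_{r_p};F_u,\mathcal K)$ by $D(\widetilde{\mathcal V}_{r_p};\mathcal U_{r_p},\mathcal K)+D(\mathcal U_{r_p};F_u,\mathcal K)$.
\item For $\mathcal K=\mathcal K_M$, the definition of $D_M$ in Section~\ref{sec:UDsub} is literally $D(\mathcal U_{r_p};F_u,\mathcal K_M)$, so the identification is by definition. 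This requires $\mathcal K_M$ to be positive definite, which I would cite from \citet{zhou2013mixture} rather than reprove.
\end{itemize}
Interpreting $D(\widetilde{\mathcal V}_{r_p};\mathcal U_{r_p},\mathcal K)$ as the inverse-transform error, via the index-wise pairing $T_{\bm Z}(\bm v_j)$ versus $\bm u_j$, is commentary and needs no proof.

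The main obstacle is not the triangle inequality but the first step: $D$ is defined by explicit sums and integrals, not as a norm, so the proof must establish the embedding identity carefully. For $F_u$ this means justifying the interchange of $\langle\cdot,\cdot\rangle_{\mathcal H_{\mathcal K}}$ with integration. I would handle this through the Riesz representation of the bounded functional $f\mapsto\int f\,dF_u$, whose bound follows from $|f(\bm t)|\le\|f\|\,\mathcal K(\bm t,\bm t)^{1/2}$ and integrability of $\mathcal K(\bm t,\bm t)^{1/2}$. That integrability is precisely the condition under which the embedding exists.
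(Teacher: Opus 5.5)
Your proposal is correct and follows the paper's proof: both write each discrepancy as the $\mathcal H_{\mathcal K}$-distance between kernel mean embeddings and apply the triangle inequality twice, first through $\mu_{F_u}$ and then through $\mu_{\mathcal U_{r_p}}$. Your explicit check of the embedding identity for collections against $F_u$, via the Riesz representation of $f\mapsto\int f\,dF_u$, fills in a step the paper states without detail.
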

The proof in Appendix~\ref{app:proof_copula_decomp} applies the triangle inequality to the point-set kernel mean embeddings in Proposition~\ref{prop:copula_decomp}. The left-hand side uses precisely the point-set GEFD defined in~\eqref{eq:gefd_sec24}. Thus the searched mixture discrepancy, empirical inverse-transform approximation, and transformed-sample geometry jointly bound the realised GEFD used after matching.

The anchors are geometric reference points, whereas DML uses the original observations indexed by $\mathcal S_1$, $\mathcal S_0$, and $\mathcal S$. Passing from anchors to matched observations therefore introduces a second approximation component. Define the PCA projection map
\(
\Pi_q(\bm x)
=
\bm V_q^\top \widehat{\bm D}^{-1}(\bm x-\bar{\bm X}),~
\bm x\in\mathbb R^p,
\)
and the induced transformation
\(
\mathcal T(\bm x)
=
T_{\bm Z}\!\bigl(\Pi_q(\bm x)\bigr)\in[0,1]^q.
\)
For each $g\in\{0,1\}$, the selected indices satisfy
\[
\mathcal S_g=\{i_j^g\}_{j=1}^{r_p},
\qquad
i_j^g=\arg\min_{i:W_i=g,\ i\notin\mathcal U_{g,j-1}}
\|\bm Z_i-\bm v_j\|_2,
\]
where $\mathcal U_{g,j-1}=\{i_1^g,\ldots,i_{j-1}^g\}$, as in the without-replacement algorithm in Section \ref{sec:UDsub}. Here $\mathcal V_{r_p}=(\bm v_j)_{j=1}^{r_p}\subset\mathbb R^q$ is the indexed UD skeleton. We also define the group-specific matching radius in the rotated space,
\[
\delta_g^{(\mathrm{rot})}
:=
\max_{1\le j\le r_p}
\|\bm Z_{i_j^g}-\bm v_j\|_2,
\qquad g\in\{0,1\}.
\]
The observable radius $\delta_g^{(\mathrm{rot})}$ is the largest displacement required to replace an anchor by a distinct observation from arm $g$. The following result separates skeleton representation error from matching error for functions of the retained PCA coordinates.

\begin{theorem}[Finite-sample integration bound]\label{thm:rep_ud}
Let $\mathcal V_{r_p}=(\bm v_j)_{j=1}^{r_p}\subset\mathbb R^q$ be the indexed UD skeleton, and let $\mathcal{K}$ be a positive-definite reproducing kernel on $[0,1]^q\times[0,1]^q$. Suppose $f\in\mathcal H_{\mathcal K}$, so that $V_{\mathcal K}(f)=\|f\|_{\mathcal H_{\mathcal K}}<\infty$, and define the induced function on the original covariate space by
\[
\phi_f(\bm x)
=
f\!\bigl(\mathcal T(\bm x)\bigr)
=
f\!\left(T_{\bm Z}\!\bigl(\Pi_q(\bm x)\bigr)\right).
\]
Assume moreover that the function
\[
\varphi_f(\bm z)
=
f\!\left(T_{\bm Z}(\bm z)\right),
\qquad \bm z\in\mathbb R^q,
\]
is Lipschitz continuous on the support of $\mathcal Z\cup\mathcal V_{r_p}$ with Lipschitz constant $L_f$, namely,
\[
|\varphi_f(\bm z)-\varphi_f(\bm z')|
\le
L_f\|\bm z-\bm z'\|_2
\qquad
\text{for all }\bm z,\bm z'\in \mathrm{supp}(\mathcal Z\cup\mathcal V_{r_p}).
\]
Then, for each $g\in\{0,1\}$,
\begin{equation}\label{eq:thm1_group}
\left|
\int \phi_f(\bm x)\,dP_{n,\bm X}(\bm x)
-
\int \phi_f(\bm x)\,dP_{\mathcal S_g,\bm X}(\bm x)
\right|
\le
D(\widetilde{\mathcal V}_{r_p};\widetilde{\mathcal Z},\mathcal{K})\,V_{\mathcal K}(f)
+
L_f\,\delta_g^{(\mathrm{rot})}.
\end{equation}
Consequently, for the total subsample empirical measure
\[
P_{\mathcal S,\bm X}
=
\frac{1}{2}P_{\mathcal S_1,\bm X}
+
\frac{1}{2}P_{\mathcal S_0,\bm X},
\]
we have
\begin{equation}\label{eq:thm1_total}
\left|
\int \phi_f(\bm x)\,dP_{n,\bm X}(\bm x)
-
\int \phi_f(\bm x)\,dP_{\mathcal S,\bm X}(\bm x)
\right|
\le
D(\widetilde{\mathcal V}_{r_p};\widetilde{\mathcal Z},\mathcal{K})\,V_{\mathcal K}(f)
+
\frac{L_f}{2}
\bigl(
\delta_1^{(\mathrm{rot})}
+
\delta_0^{(\mathrm{rot})}
\bigr).
\end{equation}
\end{theorem}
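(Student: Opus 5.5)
We insert the anchor average as an intermediate quantity and split the difference into a skeleton-representation term and a matching term. Write $P_{n,\bm X}$ and $P_{\mathcal S_g,\bm X}$ for the empirical measures of $\{\bm X_i\}_{i=1}^n$ and $\{\bm X_i\}_{i\in\mathcal S_g}$. Since $\Pi_q(\bm X_i)=\bm Z_i$, we have $\phi_f(\bm X_i)=\varphi_f(\bm Z_i)=f(\widetilde{\bm Z}_i)$. Hence
\[
\int\phi_f\,dP_{n,\bm X}=\frac1n\sum_{i=1}^n f(\widetilde{\bm Z}_i),
\qquad
\int\phi_f\,dP_{\mathcal S_g,\bm X}=\frac1{r_p}\sum_{j=1}^{r_p}\varphi_f(\bm Z_{i_j^g}).
\]
Because $\mathcal S_g$ is obtained by matching without replacement, the indices $i_1^g,\ldots,i_{r_p}^g$ are distinct. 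The second average is therefore exactly the empirical measure of $\mathcal S_g$, with no multiplicities, and it can be indexed by anchors $j=1,\ldots,r_p$.

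Add and subtract the anchor average $r_p^{-1}\sum_j\varphi_f(\bm v_j)=r_p^{-1}\sum_j f(T_{\bm Z}(\bm v_j))$. The triangle inequality then bounds the left-hand side of \eqref{eq:thm1_group} by two terms:
\[
\Bigl|\frac1n\sum_i f(\widetilde{\bm Z}_i)-\frac1{r_p}\sum_j f(T_{\bm Z}(\bm v_j))\Bigr|
+\frac1{r_p}\sum_j\bigl|\varphi_f(\bm v_j)-\varphi_f(\bm Z_{i_j^g})\bigr|.
\]
For the first term, apply Lemma~\ref{lem:ekh_rkhs} with $\mathcal A=\widetilde{\mathcal V}_{r_p}$ and $\mathcal B=\widetilde{\mathcal Z}$, both indexed collections in $[0,1]^q$. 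This gives the bound $D(\widetilde{\mathcal V}_{r_p};\widetilde{\mathcal Z},\mathcal K)V_{\mathcal K}(f)$. Definition \eqref{eq:gefd_sec24} treats repeated transformed anchors as separate indexed points, so the lemma applies directly.

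For the second term, both $\bm v_j$ and $\bm Z_{i_j^g}$ lie in $\mathrm{supp}(\mathcal Z\cup\mathcal V_{r_p})$, so the Lipschitz assumption applies. Each summand is at most $L_f\|\bm Z_{i_j^g}-\bm v_j\|_2\le L_f\delta_g^{(\mathrm{rot})}$ by the definition of the matching radius. Averaging over $j$ gives $L_f\delta_g^{(\mathrm{rot})}$, which proves \eqref{eq:thm1_group}.

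For \eqref{eq:thm1_total}, the sets $\mathcal S_1$ and $\mathcal S_0$ are disjoint because they lie in different treatment arms, and each has size $r_p$. Hence $P_{\mathcal S,\bm X}=\tfrac12P_{\mathcal S_1,\bm X}+\tfrac12P_{\mathcal S_0,\bm X}$. The difference for $\mathcal S$ is therefore the average of the two arm-specific differences. Applying the triangle inequality and \eqref{eq:thm1_group} for each $g$ gives $D\,V_{\mathcal K}(f)+\tfrac{L_f}{2}(\delta_1^{(\mathrm{rot})}+\delta_0^{(\mathrm{rot})})$.

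No step is genuinely hard. The one point that needs care is the bookkeeping of indexed collections and multiplicities. The empirical measure of $\mathcal S_g$ must coincide with the anchor-indexed average of matched points, which uses without-replacement matching. The anchor average $r_p^{-1}\sum_j f(T_{\bm Z}(\bm v_j))$ must correspond exactly to the measure $\widehat P_{\widetilde{\mathcal V}_{r_p}}$ entering the GEFD. I would also check that $\varphi_f(\bm v_j)=f(T_{\bm Z}(\bm v_j))$ uses the same empirical transformation $T_{\bm Z}$ as the GEFD.
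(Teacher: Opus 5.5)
Your proposal is correct and follows the paper's proof essentially step for step: you add and subtract the skeleton average, bound the representation term by Lemma~\ref{lem:ekh_rkhs} applied to $\widetilde{\mathcal V}_{r_p}$ and $\widetilde{\mathcal Z}$, bound the matching term by the Lipschitz condition and $\delta_g^{(\mathrm{rot})}$, and obtain the pooled bound by averaging the two arm-specific bounds. Your bookkeeping remarks on distinct indices and indexed collections agree with the paper's argument and are somewhat more explicit than it.
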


Theorem~\ref{thm:rep_ud} is directly evaluable for each realised working sample because its right-hand side contains observed discrepancy and matching-radius terms. For every admissible $\phi_f(\bm{x})=f\{\mathcal{T}(\bm{x})\}$, it bounds the selected-to-full empirical difference by a skeleton term and an arm-specific matching term. The choices $f(\bm u)=u_d$ and $f(\bm u)=u_du_\ell$ cover first- and second-order functionals of the transformed retained coordinates under the stated variation condition. Outcome regressions, propensity scores, and treatment-effect functions enter the same result whenever their dependence on the retained coordinates satisfies the stated smoothness condition. Thus the theorem supplies finite-sample representation bounds for the functions used downstream in causal estimation.

The same construction gives a sharper arm-to-arm statement. Comparing a treated observation and a control observation through their shared anchor cancels the skeleton average, leaving only the two matching errors and the regularity of the balanced function.
\begin{theorem}[Finite-sample arm-balance bound]\label{thm:bal_ud}
Let the conditions of Theorem~\ref{thm:rep_ud} hold. Then, for the induced function
\[
\phi_f(\bm x)=f\!\bigl(\mathcal T(\bm x)\bigr),
\]
the empirical measures of the treated and control UD-selected subsamples satisfy
\begin{equation}\label{eq:thm2_main}
\left|
\int \phi_f(\bm x)\,dP_{\mathcal S_1,\bm X}(\bm x)
-
\int \phi_f(\bm x)\,dP_{\mathcal S_0,\bm X}(\bm x)
\right|
\le
L_f\bigl(\delta_1^{(\mathrm{rot})}+\delta_0^{(\mathrm{rot})}\bigr).
\end{equation}
\end{theorem}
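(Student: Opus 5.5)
The plan is to write both arm averages as sums indexed by the common anchors and compare them anchor by anchor. By construction, $\mathcal S_g=\{i_j^g\}_{j=1}^{r_p}$ consists of $r_p$ distinct indices, because matching is without replacement within each arm. The empirical measure $P_{\mathcal S_g,\bm X}$ therefore puts mass $1/r_p$ on each $\bm X_{i_j^g}$, so
\[
\int \phi_f\,dP_{\mathcal S_g,\bm X}=\frac{1}{r_p}\sum_{j=1}^{r_p}\phi_f(\bm X_{i_j^g}).
\]
Since $\Pi_q(\bm X_i)=\bm Z_i$ by definition, we have $\phi_f(\bm X_{i_j^g})=\varphi_f(\bm Z_{i_j^g})$. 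The key step is to add and subtract the common skeleton average $r_p^{-1}\sum_j\varphi_f(\bm v_j)$. This gives
\[
\int \phi_f\,dP_{\mathcal S_1,\bm X}-\int \phi_f\,dP_{\mathcal S_0,\bm X}
=\frac{1}{r_p}\sum_{j=1}^{r_p}\bigl\{\varphi_f(\bm Z_{i_j^1})-\varphi_f(\bm v_j)\bigr\}
-\frac{1}{r_p}\sum_{j=1}^{r_p}\bigl\{\varphi_f(\bm Z_{i_j^0})-\varphi_f(\bm v_j)\bigr\}.
\]
The skeleton average cancels exactly because both arms are indexed by the same anchors $\bm v_j$. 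For this reason no discrepancy term $D(\widetilde{\mathcal V}_{r_p};\widetilde{\mathcal Z},\mathcal K)$ appears in the bound, in contrast to Theorem~\ref{thm:rep_ud}.

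Next we apply the triangle inequality and the Lipschitz hypothesis of Theorem~\ref{thm:rep_ud}. Both $\bm Z_{i_j^g}$ and $\bm v_j$ lie in $\mathrm{supp}(\mathcal Z\cup\mathcal V_{r_p})$, so each summand is bounded by
\[
|\varphi_f(\bm Z_{i_j^g})-\varphi_f(\bm v_j)|\le L_f\|\bm Z_{i_j^g}-\bm v_j\|_2\le L_f\delta_g^{(\mathrm{rot})},
\]
where the last step uses the definition of the matching radius as a maximum over $j$. Averaging over $j$ bounds each arm term by $L_f\delta_g^{(\mathrm{rot})}$, and summing the two arm terms gives \eqref{eq:thm2_main}.

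The only delicate point is the bookkeeping. We must check that $P_{\mathcal S_g,\bm X}$ is exactly the uniform measure over the matched indices, with no multiplicities. This holds because without-replacement matching makes $j\mapsto i_j^g$ injective. We must also check that the Lipschitz condition applies at the anchors $\bm v_j$ themselves, not only at the data points; this is why it was assumed on the union support. With these two facts in place, the argument is a direct two-line estimate, and neither the RKHS norm of $f$ nor Lemma~\ref{lem:ekh_rkhs} is needed.
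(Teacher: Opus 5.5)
Your proposal is correct and follows essentially the same route as the paper's proof. Both rewrite each arm average as $r_p^{-1}\sum_j\varphi_f(\bm Z_{i_j^g})$, add and subtract the shared skeleton average $r_p^{-1}\sum_j\varphi_f(\bm v_j)$, and bound each arm term by $L_f\delta_g^{(\mathrm{rot})}$ using the Lipschitz condition and the maximal matching radius.
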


Theorem~\ref{thm:bal_ud} isolates the common-reference effect beyond equal arm sizes. It applies to functions of the retained transformed coordinates and links their treated--control imbalance to the two observable matching radii. When the downstream nuisance and treatment-effect functions belong to the admissible class, the integration and balance bounds provide the design quantities used in the selected-score analysis.

The design bounds do not by themselves identify the estimand represented by DML after deterministic, treatment-dependent selection. Although the rule does not use $Y$, it changes the marginal distribution of $(\bm X,W)$ presented to the nuisance learners. We therefore separate estimation under the selected law from transport to the original population. Define the AIPW pseudo-outcome
$$\psi^\ast(\bm{O};\bm{\eta}) := \psi_{\text{AIPW}}(\bm{O};0,\bm{\eta})=\bigl\{m_1(\bm{X})-m_0(\bm{X})\bigr\}+\frac{W\{Y-m_1(\bm{X})\}}{e(\bm{X})}-\frac{(1-W)\{Y-m_0(\bm{X})\}}{1-e(\bm{X})}.$$
Let $P$ denote the original population law. Consider repeated draws of a dataset of \(n\) observations, application of the selection rule with budget $r$, and uniform selection of one observation from the resulting working sample. Let $Q_{r,n}$ be the marginal law of that observation. Its propensity $e_{Q_{r,n}}(\bm x)=Q_{r,n}(W=1\mid\bm X=\bm x)$ generally differs from $e_P(\bm x)$ because the rule changes both covariate distribution and treatment composition. If selection uses only $(\bm X,W)$ and preserves the original conditional outcome regressions, the selected-law target is
\[
\theta_{Q_{r,n}}=E_{Q_{r,n}}\{m_{1,P}(\bm X)-m_{0,P}(\bm X)\},
\]
The implemented cross-fitted estimator on the UD sample is
\[
\widehat\theta_{\mathrm{UD}}
=
\frac{1}{r}\sum_{i\in\mathcal S}\widehat\psi_i^\ast,
\qquad
\widehat\psi_i^\ast
:=
\psi^\ast\!\bigl(\bm O_i;\widehat{\bm\eta}^{(-k_i)}\bigr),
\]
where $k_i$ is the fold label and $\widehat{\bm\eta}^{(-k_i)}$ is trained on the other selected observations. The propensity learner in this procedure targets the selected-law propensity, not $e_P$.

\begin{proposition}[Target decomposition]\label{prop:target_decomp}
For every realised estimator and every well-defined selected-law target,
\begin{equation}\label{eq:target_decomp}
\widehat\theta_{\mathrm{UD}}-\theta_0
=
\bigl(\widehat\theta_{\mathrm{UD}}-\theta_{Q_{r,n}}\bigr)
+
\bigl(\theta_{Q_{r,n}}-\theta_0\bigr).
\end{equation}
The first term is estimation error under the selected law. The second is the target shift caused by using the working sample. If $\tau_P(\bm x)=m_{1,P}(\bm x)-m_{0,P}(\bm x)$ has the form $f_\tau\{\mathcal T(\bm x)\}+b_\tau(\bm x)$ with $f_\tau\in\mathcal H_{\mathcal K}$, then Theorem~\ref{thm:rep_ud}, applied to $f_\tau$, bounds the empirical analogue of the target-shift term together with the difference between the selected and full-sample averages of $b_\tau$.
\end{proposition}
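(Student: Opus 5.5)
The identity \eqref{eq:target_decomp} is pure algebra: I would add and subtract $\theta_{Q_{r,n}}$, which is a finite real number whenever the selected-law target is well defined. Nothing about the selection rule, the learners, or cross-fitting is used, so the decomposition holds pathwise for every realised $\widehat\theta_{\mathrm{UD}}$. The interpretive sentences then follow from the definitions. $\theta_{Q_{r,n}}$ is the functional estimated by the cross-fitted AIPW average on data distributed as $Q_{r,n}$, so the first bracket is estimation error under that law. The second bracket involves no estimator at all.

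The substantive claim is the bound on the empirical analogue of the target shift. I would define this analogue as
\[
\Delta_n=\int\tau_P\,dP_{\mathcal S,\bm X}-\int\tau_P\,dP_{n,\bm X},
\]
which is the plug-in version of $\theta_{Q_{r,n}}-\theta_0$. Substituting $\tau_P=f_\tau\circ\mathcal T+b_\tau$ and using linearity of integration gives
\[
\Delta_n=\Bigl\{\int\phi_{f_\tau}\,dP_{\mathcal S,\bm X}-\int\phi_{f_\tau}\,dP_{n,\bm X}\Bigr\}+\Bigl\{\int b_\tau\,dP_{\mathcal S,\bm X}-\int b_\tau\,dP_{n,\bm X}\Bigr\}.
\]
The first brace is controlled by \eqref{eq:thm1_total} of Theorem~\ref{thm:rep_ud}, with $f=f_\tau$. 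Provided $\varphi_{f_\tau}=f_\tau\circ T_{\bm Z}$ satisfies the Lipschitz hypothesis with constant $L_{f_\tau}$, the bound is
\[
D(\widetilde{\mathcal V}_{r_p};\widetilde{\mathcal Z},\mathcal K)\,V_{\mathcal K}(f_\tau)+\tfrac{L_{f_\tau}}{2}\bigl(\delta_1^{(\mathrm{rot})}+\delta_0^{(\mathrm{rot})}\bigr).
\]
The second brace is kept as the explicit selected-minus-full average of $b_\tau$, exactly as the statement says. The triangle inequality then gives
\[
|\Delta_n|\le D\,V_{\mathcal K}(f_\tau)+\tfrac{L_{f_\tau}}{2}\bigl(\delta_1^{(\mathrm{rot})}+\delta_0^{(\mathrm{rot})}\bigr)+\Bigl|\int b_\tau\,d(P_{\mathcal S,\bm X}-P_{n,\bm X})\Bigr|.
\]

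The only delicate point is being precise about what ``empirical analogue'' means. The statement does not bound the population shift $\theta_{Q_{r,n}}-\theta_0$ itself. To go from $\Delta_n$ to that quantity one would need $E_{Q_{r,n}}\tau_P\approx\int\tau_P\,dP_{\mathcal S,\bm X}$ and $\theta_0\approx\int\tau_P\,dP_{n,\bm X}$. These are sampling-error terms that belong to the later transport and rate conditions, not to this proposition. I would state this explicitly to avoid overclaiming. I would also note that the Lipschitz assumption of Theorem~\ref{thm:rep_ud} must be imposed on $f_\tau$ for the invocation to be valid.
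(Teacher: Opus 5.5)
Your proposal is correct and follows essentially the same route as the paper: the identity is obtained by adding and subtracting $\theta_{Q_{r,n}}$, and the empirical analogue $(P_{\mathcal S}-P_n)\tau_P$ is split linearly into its $f_\tau\circ\mathcal T$ and $b_\tau$ parts, with the first bounded by the pooled bound~\eqref{eq:thm1_total} of Theorem~\ref{thm:rep_ud} and the second kept explicit through the triangle inequality. Your remark that the Lipschitz hypothesis must be imposed on $f_\tau\circ T_{\bm Z}$ (the paper's constant $L_\tau$) makes explicit an assumption the paper leaves implicit.
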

Although Proposition~\ref{prop:target_decomp} is an exact algebraic identity, it fixes the structure of the inferential argument. The first term concerns estimation under the law induced by selection, and the second concerns transport to the original population. Accurate selected-law estimation cannot compensate for a non-negligible target shift. The deterministic representation bound contributes to inference only when it controls that shift for the treatment-effect function. Appendix~\ref{app:proof_target_decomp} gives the proof.

The leading score is analysed by conditioning on the realised selection, fold assignment, covariates, treatments, and all outcome-independent auxiliary randomisation. Let $\mathcal R_n$ collect the auxiliary randomisation used by selection, fold assignment, propensity fitting, and generator search; for a deterministic implementation, $\mathcal R_n$ is degenerate. Define
\[
\mathcal G_n=\sigma\{\bm X_1,W_1,\ldots,\bm X_n,W_n,\mathcal R_n,\mathcal S,\text{fold labels}\}.
\]
Because UD selection and propensity fitting use only $(\bm X,W,\mathcal R_n)$, the selected indices and cross-fitted propensity predictions $\widehat e_i$ are $\mathcal G_n$-measurable. Define
\[
\theta_{\mathcal S,n}=\frac1r\sum_{i\in\mathcal S}\tau_P(\bm X_i),\qquad
\varepsilon_{g,i}=Y_i-m_{g,P}(\bm X_i),
\]
and $\xi_{i,n}=W_i\varepsilon_{1,i}/\widehat e_i-(1-W_i)\varepsilon_{0,i}/(1-\widehat e_i)$. Thus $\theta_{\mathcal S,n}$ is the realised selected-covariate target, while $\xi_{i,n}$ contains the outcome noise left after conditioning on the UD design.

\begin{assumption}[Conditions for selected-score inference]\label{as:ud_root_r}
Along a sequence with $r=r_n\to\infty$, the following conditions hold.

(i) The observations are i.i.d., and consistency and conditional unconfoundedness hold. Conditional on the full $(\bm X,W)$ array, $\mathcal R_n$ is independent of the outcomes. Selection is without replacement, uses only $(\bm X,W,\mathcal R_n)$, and returns exactly $r/2$ observations from each arm. The number of cross-fitting folds is fixed, and $r/n\to0$.

(ii) For constants $\epsilon>0$ and $\nu>0$, the fitted propensities and outcome residuals satisfy
\[
\epsilon\le\widehat e_i\le1-\epsilon\quad(i\in\mathcal S),
\qquad
\frac1r\sum_{i\in\mathcal S}
E(|\varepsilon_{W_i,i}|^{2+\nu}\mid\mathcal G_n)=O_P(1).
\]
Moreover,
\[
\sigma_{r,n}^2:=\frac1r\sum_{i\in\mathcal S}E(\xi_{i,n}^2\mid\mathcal G_n)
\pto\sigma_{\rm UD}^2
\]
for a constant $0<\sigma_{\rm UD}^2<\infty$.

(iii) Define
\[
\delta_{g,i}
=\widehat m_g^{(-k_i)}(\bm X_i)-m_{g,P}(\bm X_i).
\]
The selected empirical errors satisfy
\[
\frac1r\sum_{i\in\mathcal S}
\left(\delta_{0,i}^2+\delta_{1,i}^2\right)
=o_P(1)
\]
and
\begin{equation}\label{eq:ud_calibration}
B_{r,n}:=\frac1r\sum_{i\in\mathcal S}\left[
\delta_{1,i}\left(1-\frac{W_i}{\widehat e_i}\right)
+\delta_{0,i}\left\{\frac{1-W_i}{1-\widehat e_i}-1\right\}\right]
=o_P(r^{-1/2}).
\end{equation}

\end{assumption}

The three parts of Assumption~\ref{as:ud_root_r} correspond to
distinct components of the selected-score analysis. Part (i) makes
selection outcome-blind and preserves conditional independence of
outcome residuals at distinct selected indices. Part (ii) imposes
propensity boundedness, a Lyapunov-type moment condition, and
conditional variance stabilisation. Part (iii) imposes selected-sample
$L_2$ consistency of the outcome-regression estimators and a weighted
calibration rate at the selected evaluation points. This formulation
remains meaningful when the selected-law propensity differs from
$e_P$.

If $\widehat e_i=1/2$, the calibration remainder in
\eqref{eq:ud_calibration} becomes
\[
B_{r,n}
=
\frac12
(P_{\mathcal S_0}-P_{\mathcal S_1})
(\delta_0+\delta_1).
\]
This identity explains the connection between common-skeleton arm
balance and nuisance calibration. However, because the regression
errors are evaluated using fold-specific nuisance estimators,
Theorem~\ref{thm:bal_ud} does not by itself establish
\eqref{eq:ud_calibration} for the implemented cross-fitted estimator.
Accordingly, \eqref{eq:ud_calibration} is retained as a high-level
calibration condition. It must be verified for the learner and
asymptotic sequence under consideration and does not follow from
overlap or equal arm sizes alone.

The preceding conditions control estimation under the realised
selected design. It remains to connect the selected-covariate target
$\theta_{\mathcal S,n}$ to the original population target $\theta_0$.
The following corollary converts the finite-sample representation
bound into a sufficient condition for this transport step. Rather
than assuming the target shift directly, it separates the required
rate into covariate-representation, anchor-matching, and
residual-function components.

\begin{corollary}[Transport to the original target]
\label{cor:target_transport}
Suppose that
\[
\tau_P(\bm x)
=
f_\tau\{\mathcal T(\bm x)\}+b_\tau(\bm x),
\qquad
f_\tau\in\mathcal H_{\mathcal K},
\]
and define
\[
\Delta_{\tau,n}
=
D(\widetilde{\mathcal V}_{r_p};
  \widetilde{\mathcal Z},\mathcal K)
V_{\mathcal K}(f_\tau)
+
\frac{L_\tau}{2}
\{\delta_1^{(\mathrm{rot})}+\delta_0^{(\mathrm{rot})}\}
+
\left|
(P_{\mathcal S,\bm X}-P_{n,\bm X})b_\tau
\right|.
\]
If
\[
E\{\tau_P(\bm X)^2\}<\infty,\qquad
\sqrt r\,\Delta_{\tau,n}=o_P(1),
\qquad
r/n\to0,
\]
then
\[
\sqrt r(\theta_{\mathcal S,n}-\theta_0)=o_P(1).
\]
\end{corollary}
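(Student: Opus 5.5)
The plan is to split $\theta_{\mathcal S,n}-\theta_0$ into a selected-to-full empirical term and a full-sample sampling term:
\[
\theta_{\mathcal S,n}-\theta_0
=
(P_{\mathcal S,\bm X}-P_{n,\bm X})\tau_P
+
(P_{n,\bm X}\tau_P-\theta_0).
\]
Here $P_{n,\bm X}\tau_P=n^{-1}\sum_{i=1}^n\tau_P(\bm X_i)$. Under unconfoundedness and consistency, $\theta_0=E\{\tau_P(\bm X)\}$. I will show that each term, multiplied by $\sqrt r$, is $o_P(1)$.

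\textbf{Second term.} This is a plain i.i.d.\ average. Since $E\{\tau_P(\bm X)^2\}<\infty$, Chebyshev's inequality gives $P_{n,\bm X}\tau_P-\theta_0=O_P(n^{-1/2})$. Hence
\[
\sqrt r\,(P_{n,\bm X}\tau_P-\theta_0)=O_P\bigl(\sqrt{r/n}\bigr)=o_P(1),
\]
using $r/n\to0$. This is the only place the moment condition and the budget ratio are used.

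\textbf{First term.} I use the decomposition $\tau_P=f_\tau\circ\mathcal T+b_\tau$ and linearity of $P_{\mathcal S,\bm X}-P_{n,\bm X}$:
\[
|(P_{\mathcal S,\bm X}-P_{n,\bm X})\tau_P|
\le
|(P_{\mathcal S,\bm X}-P_{n,\bm X})\phi_{f_\tau}|
+
|(P_{\mathcal S,\bm X}-P_{n,\bm X})b_\tau|,
\qquad
\phi_{f_\tau}=f_\tau\circ\mathcal T .
\]
For the first piece I invoke the total-sample bound \eqref{eq:thm1_total} of Theorem~\ref{thm:rep_ud} with $f=f_\tau$. This presumes $L_\tau$ is the Lipschitz constant of $\varphi_{f_\tau}=f_\tau\circ T_{\bm Z}$ on $\mathrm{supp}(\mathcal Z\cup\mathcal V_{r_p})$, as the notation $L_\tau$ in $\Delta_{\tau,n}$ implicitly requires; I will state this explicitly. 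The bound gives the GEFD term $D(\widetilde{\mathcal V}_{r_p};\widetilde{\mathcal Z},\mathcal K)V_{\mathcal K}(f_\tau)$ plus the term $\tfrac{L_\tau}{2}(\delta_1^{(\mathrm{rot})}+\delta_0^{(\mathrm{rot})})$. Adding the $b_\tau$ piece, the first term is bounded by exactly $\Delta_{\tau,n}$. The bound holds deterministically for each realised data set, so
\[
\sqrt r\,|(P_{\mathcal S,\bm X}-P_{n,\bm X})\tau_P|\le\sqrt r\,\Delta_{\tau,n}=o_P(1).
\]

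\textbf{Conclusion and main obstacle.} Summing the two $o_P(1)$ pieces gives $\sqrt r(\theta_{\mathcal S,n}-\theta_0)=o_P(1)$. Here $P_{\mathcal S,\bm X}$ is the equal-weight mixture of the two arms, which matches $\theta_{\mathcal S,n}=r^{-1}\sum_{i\in\mathcal S}\tau_P(\bm X_i)$ because $|\mathcal S_1|=|\mathcal S_0|=r_p$ and the arms are disjoint. The substantive work is already carried by Theorem~\ref{thm:rep_ud} and the high-level rate assumption. The step needing most care is bookkeeping: $\Delta_{\tau,n}$ is random through the data-dependent $T_{\bm Z}$, the skeleton and the matching, so the argument must use the pathwise validity of \eqref{eq:thm1_total} rather than any independence. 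It must also make sure the Lipschitz hypothesis of Theorem~\ref{thm:rep_ud} is assumed for $f_\tau$ on each realised support.
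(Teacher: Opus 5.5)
Your proposal is correct and follows the paper's proof essentially step for step: the same split $\theta_{\mathcal S,n}-\theta_0=(P_{\mathcal S}-P_n)\tau_P+(P_n-P)\tau_P$, with the first term bounded pathwise by $\Delta_{\tau,n}$ via the pooled bound \eqref{eq:thm1_total} applied with $f=f_\tau$, and the second shown to be $O_P(n^{-1/2})$ and then made negligible after multiplication by $\sqrt r$ using $r/n\to0$. Your remarks that $L_\tau$ must be the Lipschitz constant of $f_\tau\circ T_{\bm Z}$ on the realised support, and that the bound holds for each realised data set, are sensible clarifications that the paper leaves implicit.
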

The condition $\sqrt r\,\Delta_{\tau,n}=o_P(1)$ separates three
sources of target-transport error. The first term is the GEFD
representation error, weighted by the RKHS variation of the component
of the treatment-effect function captured by the retained transformed
coordinates. The second term is the error incurred when the geometric
anchors are replaced by observed treated and control units. The third
term measures the selected-to-full empirical discrepancy of the
residual component $b_\tau$, which is not represented by the retained
transformed coordinates. Thus, target stability is not imposed as an independent assumption:
the corollary gives componentwise rate conditions under which the
design bounds imply root-$r$ transport to the original target. 

\begin{theorem}[Root-$r$ normality and studentisation under outcome-independent selection]\label{thm:ud_root_r}
Under Assumption~\ref{as:ud_root_r} and the
conditions of Corollary~\ref{cor:target_transport},
for every fixed $t\in\mathbb R$,
\begin{equation}\label{eq:ud_theorem_conditional_cf}
E\!\left[\left.\exp\!\left\{
\frac{it}{\sqrt r}\sum_{i\in\mathcal S}\xi_{i,n}
\right\}\,\right|\,\mathcal G_n\right]
\pto
\exp\!\left(-\frac{t^2\sigma_{\rm UD}^2}{2}\right).
\end{equation}
Consequently,
\begin{equation}\label{eq:ud_linear}
\sqrt r(\widehat\theta_{\mathrm{UD}}-\theta_0)
=\frac1{\sqrt r}\sum_{i\in\mathcal S}\xi_{i,n}+o_P(1)
\dto\mathcal N(0,\sigma_{\rm UD}^2).
\end{equation}
The conditional characteristic-function statement concerns the leading score array. The calibration remainder is controlled by Assumption~\ref{as:ud_root_r}(iii), whereas target transport follows from Corollary~\ref{cor:target_transport}.

Define
\[
\widehat\xi_i=
\frac{W_i\{Y_i-\widehat m_1^{(-k_i)}(\bm X_i)\}}{\widehat e_i}
-\frac{(1-W_i)\{Y_i-\widehat m_0^{(-k_i)}(\bm X_i)\}}{1-\widehat e_i}
\]
and
\begin{equation}\label{eq:ud_residual_variance}
\widehat\sigma_{\rm UD}^2=\frac1r\sum_{i\in\mathcal S}\widehat\xi_i^2,
\qquad \widehat V_{\rm UD}=\widehat\sigma_{\rm UD}^2/r.
\end{equation}
Then $\widehat\sigma_{\rm UD}^2\pto\sigma_{\rm UD}^2$ and
$(\widehat\theta_{\mathrm{UD}}-\theta_0)/\sqrt{\widehat V_{\rm UD}}\dto\mathcal N(0,1)$, so the associated Wald interval has limiting coverage $1-\alpha$.
\end{theorem}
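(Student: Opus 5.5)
The argument has three parts: a conditional Lindeberg--Lyapunov CLT for the leading score array given $\mathcal G_n$, an exact algebraic expansion of $\widehat\theta_{\mathrm{UD}}-\theta_0$ into that score plus remainders, and a consistency argument for $\widehat\sigma^2_{\rm UD}$.

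\emph{Conditional CLT.} Conditional on $\mathcal G_n$, the summands $\xi_{i,n}$, $i\in\mathcal S$, are independent. This holds because the observations are i.i.d., selection and $\mathcal R_n$ are outcome-blind (Assumption~\ref{as:ud_root_r}(i)), the indices are distinct since matching is without replacement, and $\widehat e_i$ is $\mathcal G_n$-measurable. They are also conditionally mean zero, since
\[
E(\varepsilon_{g,i}\mid\mathcal G_n)=E(\varepsilon_{g,i}\mid \bm X_i,W_i)=0
\]
by unconfoundedness and outcome-blind selection. The conditional variance of $r^{-1/2}\sum_{i\in\mathcal S}\xi_{i,n}$ is $\sigma^2_{r,n}$. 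Using $\epsilon\le\widehat e_i\le 1-\epsilon$, we have $|\xi_{i,n}|^{2+\nu}\le\epsilon^{-(2+\nu)}|\varepsilon_{W_i,i}|^{2+\nu}$. Hence the conditional Lyapunov ratio is
\[
r^{-\nu/2}\cdot\frac{r^{-1}\sum_{i\in\mathcal S}E(|\xi_{i,n}|^{2+\nu}\mid\mathcal G_n)}{\sigma_{r,n}^{2+\nu}}=r^{-\nu/2}\,O_P(1)=o_P(1),
\]
because $\sigma_{r,n}^2\pto\sigma^2_{\rm UD}>0$. A Lyapunov CLT for triangular arrays, applied along subsequences where the $O_P$ and $\pto$ statements hold almost surely, gives the conditional characteristic-function limit \eqref{eq:ud_theorem_conditional_cf}. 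Taking expectations and using dominated convergence (the integrand is bounded by one) then gives the unconditional limit $\mathcal N(0,\sigma^2_{\rm UD})$.

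\emph{Linear expansion.} Substituting $Y_i=m_{W_i,P}(\bm X_i)+\varepsilon_{W_i,i}$ into $\widehat\psi^\ast_i$ and averaging over $\mathcal S$ gives the exact identity
\[
\widehat\theta_{\mathrm{UD}}=\theta_{\mathcal S,n}+\frac1r\sum_{i\in\mathcal S}\xi_{i,n}+B_{r,n}.
\]
Here the $\delta_{g,i}$ terms collect precisely into $B_{r,n}$, and the $\varepsilon$ terms carry the weights $1/\widehat e_i$ and $1/(1-\widehat e_i)$, which define $\xi_{i,n}$. Then $\sqrt r\,B_{r,n}=o_P(1)$ by \eqref{eq:ud_calibration}, and $\sqrt r(\theta_{\mathcal S,n}-\theta_0)=o_P(1)$ by Corollary~\ref{cor:target_transport}. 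Combining these with the CLT and Slutsky's lemma yields \eqref{eq:ud_linear}.

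\emph{Studentisation.} Write $\widehat\xi_i=\xi_{i,n}-a_i$ with
\[
a_i=\frac{W_i\delta_{1,i}}{\widehat e_i}-\frac{(1-W_i)\delta_{0,i}}{1-\widehat e_i},
\]
so that $r^{-1}\sum_{i\in\mathcal S}a_i^2\le\epsilon^{-2}r^{-1}\sum_{i\in\mathcal S}(\delta_{0,i}^2+\delta_{1,i}^2)=o_P(1)$ by part (iii). Next, $r^{-1}\sum_{i\in\mathcal S}\xi_{i,n}^2-\sigma^2_{r,n}\pto0$ by a conditional weak law for independent, non-identically distributed variables with bounded $(1+\nu/2)$-moments. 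Concretely, truncate at level $r^{1/2}$ or apply the von Bahr--Esseen inequality conditionally, using the $2+\nu$ moment bound. It follows that $r^{-1}\sum_{i\in\mathcal S}\xi_{i,n}^2\pto\sigma^2_{\rm UD}$. Cauchy--Schwarz applied to the cross term $r^{-1}\sum_{i\in\mathcal S}\xi_{i,n}a_i$ then gives $\widehat\sigma^2_{\rm UD}\pto\sigma^2_{\rm UD}$. Slutsky's lemma yields the standard normal limit of the studentised statistic, and hence the coverage of the Wald interval.

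I expect the main obstacle to be the conditional-independence claim in the CLT step. It requires that conditioning on the entire $(\bm X,W)$ array, the selection $\mathcal S$ and the fitted $\widehat e_i$ does not distort the law of the outcomes. This is clean only because $\widehat e_i$ uses no outcomes, which Assumption~\ref{as:ud_root_r}(i) and the definition of $\mathcal G_n$ guarantee. By contrast, the $\widehat m_g$ do depend on outcomes, and we avoid conditioning on them by routing every $\widehat m_g$ term into $B_{r,n}$ and the $a_i$. The second delicate step is converting the conditional convergence-in-probability statements into the unconditional limit, which the subsequence argument handles.
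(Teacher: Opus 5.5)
Your proposal is correct and follows the paper's proof essentially step for step. The steps are: conditional independence and mean zero given $\mathcal G_n$; a Lyapunov/Lindeberg bound of order $r^{-\nu/2}O_P(1)$, made almost sure along sub-subsequences and then passed to the unconditional limit by bounded convergence; the exact identity $\widehat\theta_{\mathrm{UD}}=\theta_{\mathcal S,n}+r^{-1}\sum_{i\in\mathcal S}\xi_{i,n}+B_{r,n}$; and a truncation weak law plus Cauchy--Schwarz for $\widehat\sigma^2_{\rm UD}$, where the paper truncates $\xi_{i,n}^2$ at level $r$ with $\kappa=\min\{\nu/2,1\}$. One small correction: the mean-zero claim should be stated for $W_i\varepsilon_{1,i}$ and $(1-W_i)\varepsilon_{0,i}$, because for $g\neq W_i$ one has $E(\varepsilon_{g,i}\mid\mathcal G_n)=\pm\tau_P(\bm X_i)\neq0$ in general; it also follows from the definition of $m_{g,P}$ and outcome-blind selection rather than from unconfoundedness, which is needed instead for $\theta_0=E\{\tau_P(\bm X)\}$ in the transport step.
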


Appendix~\ref{app:proof_ud_root_r} proves Theorem~\ref{thm:ud_root_r} by applying a conditional Lindeberg--Feller theorem to the triangular array $\{\xi_{i,n}:i\in\mathcal S\}$ and controlling the calibration and target-shift remainders. The variance estimator in \eqref{eq:ud_residual_variance} targets the conditional outcome-noise variance generated after the realised UD covariates have been fixed. Its studentised statistic therefore matches the resulting Gaussian limit. Assumption~\ref{as:ud_root_r} and Corollary~\ref{cor:target_transport} state the nuisance-learning, variance, and transport conditions that deliver asymptotic normality and Wald coverage at the $\sqrt r$ scale set by the computational budget.


\section{Simulated Studies}
\label{sec:simulations}
The simulation study uses an ordered four-stage ablation under a common working-sample budget. Section~\ref{sec:simulation_settings} defines the four sampling rules and the adjacent contrasts used to separate equal allocation, arm-specific geometric coverage, and common-skeleton coupling.

\subsection{Data-Generating Processes}
We design three observational data-generating processes (DGPs) of increasing complexity to assess the finite-sample behaviour of UD-DML across varying covariate geometries, degrees of treatment-effect heterogeneity, and treated-–control overlap regimes. Table \ref{tab:dgp_summary} summarizes the key mathematical structures. Across all configurations, the full data size is denoted by $n$, and the covariate dimension is fixed at $p=10$. The observed outcome $Y$ is generated as $Y = g(\X) + W \cdot \Delta(\X) + \varepsilon$ with $\varepsilon \sim \mathcal{N}(0, 1)$, and the true Average Treatment Effect $\theta_0 = \E[\Delta(\X)]$ is analytically fixed at $1.0$ in all three scenarios.

The first scenario (OBS-1) serves as a baseline of low heterogeneity and high overlap. The covariates are mutually independent and uniformly distributed, $X^{(d)}\overset{\text{i.i.d.}}{\sim}U[-2,2]$ for $d=1,\ldots,p$. The outcome mechanism is linear, with $g(\bm{X})=0.5X^{(1)}+0.3X^{(2)}$ and conditional average treatment effect (CATE) $\Delta(\bm{X})=1.0+0.2X^{(3)}$. Treatment is confounded by the logistic propensity $\operatorname{logit}(e(\bm{X}))=0.2X^{(1)}-0.2X^{(2)}$; the small coefficients keep propensity scores close to $0.5$, thereby ensuring a high degree of overlap between the treated and control groups.

The second scenario (OBS-2) introduces moderate heterogeneity alongside non-uniform covariate densities. The first five covariates follow $X^{(1)},\dots,X^{(5)} \iid U[-2, 2]$, while the remaining five follow $X^{(6)},\dots,X^{(10)} \iid \mathcal{N}(0, 1.5^2)$. The outcome baseline is moderately non-linear, $g(\X) = 0.5 (X^{(1)})^2 + 0.5 X^{(2)} X^{(3)} + \sin(X^{(6)})$, with an interactive CATE $\Delta(\X) = 1.0 + 0.5 X^{(1)} X^{(2)}$. The propensity score model, $\text{logit}(e(\X)) = 0.5 X^{(1)} - 0.3 (X^{(2)})^2 + 0.4 \sin(X^{(6)}) + 0.2 X^{(7)}$, combines linear, quadratic, and trigonometric terms, producing a more restricted yet manageable moderate-overlap region.

The third scenario (OBS-3) is designed to stress-test the estimators by combining strong heterogeneity with severe overlap deficiency. The first five covariates are drawn from an equally weighted two-component Gaussian mixture with component means $(-2,-2,0,0,0)$ and $(2,2,0,0,0)$ and per-coordinate standard deviation $0.5$, producing two sharply separated clusters; the remaining five covariates satisfy $X^{(6)},\ldots,X^{(10)}\overset{\text{i.i.d.}}{\sim}\mathcal{N}(0,1)$. The response surfaces are highly nonlinear, with $g(\bm{X})=\sin(\pi X^{(1)})+0.5X^{(2)}X^{(3)}+0.1(X^{(6)})^3+0.2\cos(X^{(7)})$ and $\Delta(\bm{X})=1.0+0.5\tanh(X^{(1)})+0.2X^{(6)}X^{(7)}$. The propensity score $\operatorname{logit}(e(\bm{X}))=0.3X^{(1)}+0.3X^{(2)}-0.5X^{(6)}$ acts on the bimodal mixture variables, systematically driving treatment probabilities towards the boundaries $0$ and $1$ and thus inducing a highly challenging low-overlap environment in which naive subsampling is expected to yield unreliable inference.

To isolate the effect of overlap severity on estimator performance, we additionally consider a tunable variant of OBS-3, in which the propensity score is rescaled as $\operatorname{logit}(e(\bm{X}))=c\cdot(0.3X^{(1)}+0.3X^{(2)}-0.5X^{(6)})$ for a multiplier $c\ge 0$. Setting $c=0$ yields perfect overlap ($e(\bm{X})\equiv 0.5$), $c=1$ recovers the default OBS-3, and larger values drive propensity scores towards $0$ and $1$. This DGP underlies the overlap-gradient experiment, enabling us to trace how the UD-DML advantage evolves as overlap deteriorates.

\subsection{Experimental Settings}\label{sec:simulation_settings}

Unless stated otherwise, the DML nuisance functions $\widehat{m}_g$ and $\widehat{e}$ are estimated by LightGBM \citep{Ke2017} with $100$ boosting rounds, maximum depth $5$, learning rate $0.1$, and $31$ leaves per tree. Cross-fitting uses $K=2$ folds, and estimated propensity scores are clipped to $[0.01,0.99]$ as a finite-sample stabilisation rule. For UD-DML, we retain principal components up to $\rho_0=0.85$ and search over $B_\gamma=30$ candidate generators. These quantities are fixed before outcomes are analysed; no method-specific outcome tuning is used. Each LightGBM fit uses \texttt{n\_jobs=1}; parallel execution is only across independent replications. Submission-scale summaries use $B=500$ Monte Carlo replications on a 12-vCPU Linux server with six parallel workers. A deterministic hierarchical seed identifies the DGP and replication, so the four ablation methods are evaluated on the same generated population in every replication. Exceptions are retained as structured failure records and reported rather than silently discarded.

The ablation study compares four working-sample designs under the same total budget $r$ and the same downstream DML specification. Every method samples without replacement. UNIF-DML samples $r$ observations without controlling arm size or covariate geometry. STRAT-UNIF-DML samples $r/2$ observations independently within each arm and therefore isolates the effect of equal allocation. SEP-UD-DML constructs a separate UD skeleton and empirical marginal transform within each arm, adding arm-specific geometric coverage without a common reference. UD-DML uses the pooled transform and one shared skeleton for both arms, thereby adding anchor-by-anchor coupling. This ordered sequence distinguishes the contributions of equal allocation, within-arm coverage, and common-skeleton alignment; the three adjacent contrasts are evaluated as paired changes because each method receives the same simulated population.

Finite-sample performance is evaluated by RMSE relative to the known simulated ATE, empirical coverage with Wilson intervals, mean interval width, the ratio of Monte Carlo to model-based standard errors, runtime, and failures. Every generated record stores the variance-method identifier. Submission claims about interval performance are permitted only after the selection-aware residual-variance implementation has been run at the prespecified scale; no legacy interval column is retained as evidence.

Covariate balance is summarised by the mean and maximum absolute treated--control standardised mean difference. For coordinate $d$, the reported quantity is
\[
\operatorname{SMD}_d
=
\frac{|\bar X_{1d}-\bar X_{0d}|}
{\{(s_{1d}^2+s_{0d}^2)/2\}^{1/2}}.
\]
For UD-DML, the realised GEFD between the common skeleton and the transformed pooled empirical population is approximated with $20{,}000$ independently drawn kernel pairs; its Monte Carlo standard error is computed on the squared-discrepancy scale. SEP-UD-DML instead constructs separate PCA and empirical-CDF systems within the two treatment arms, computes each skeleton's GEFD relative to its own arm-specific empirical distribution, and reports their arithmetic mean. The UD-DML and SEP-UD-DML values therefore use different coordinates and different reference measures. They are useful within-method representation diagnostics but are not an ordinal comparison between the two selectors. Both uniform-design methods report the mean and maximum anchor-matching radii. A separate single-process profile decomposes wall-clock time into standardisation and PCA, empirical-CDF sorting, skeleton search, inverse-CDF mapping, tree construction, matching, nuisance fitting, and inference; it also records peak resident-set size and the increase over the pre-call baseline.

The adjacent-stage evaluation uses the common Monte Carlo replication identifier, not pairs of selected observations. For method $m$ and replication $b$, define
\[
E_{m,b}=\left(\widehat\theta_{m,b}-\theta_0\right)^2.
\]
For a transition from method $A$ to method $B$, we report
\[
\widehat\Delta_{A\to B}
=\frac{1}{B}\sum_{b=1}^{B}\left(E_{B,b}-E_{A,b}\right),
\qquad
\widehat{\operatorname{MCSE}}
=\frac{\operatorname{sd}\{E_{B,b}-E_{A,b}:1\leq b\leq B\}}{\sqrt B}.
\]
Both methods receive the same generated population and hierarchical seed within replication $b$, which removes between-replication DGP variation from the contrast. A negative value indicates a lower mean squared estimation error for method $B$. Table~\ref{tab:simulation_summary} multiplies both quantities by $10^4$ for readability; this rescaling changes neither their sign nor their interpretation.

\subsection{Simulation Experiments}
The study comprises four experiments, each retaining the complete ordered comparison. They vary the working-sample size, the full-data size, nuisance specification, and overlap severity while holding the downstream DML procedure fixed. Collectively, they investigate (i) statistical efficiency and inferential validity as a function of $r$; (ii) scalability as $n$ grows while $r$ is held fixed; (iii) finite-sample sensitivity to nuisance-model misspecification; and (iv) sensitivity to treated--control overlap. The OBS-2 and OBS-3 canonical-budget cells also serve as the focused component-wise ablation.

The first experiment examines the behaviour of the estimators as a function of subsample size $r$. We fix $n=5\times 10^{5}$ and vary the total subsample size over $r\in\{1000,\,2500,\,5000,\,7500,\,10000\}$, so that $r_p=r/2$ treated--control pairs are selected at each level. For each of the three observational scenarios (OBS-1, OBS-2, OBS-3), we report RMSE, CI coverage, and CI width as functions of $r$ for all four methods.

The second experiment evaluates scalability while $r$ remains fixed and $n$ grows. We consider $r\in\{1000,5000\}$ and $n\in\{10^{5},5\times10^{5}\}$. Alongside all four working-sample methods, we report FULL-DML as an information-rich benchmark. The table displays RMSE, empirical coverage and interval width, and runtime, thereby separating the accuracy loss due to subsampling from the per-call computational savings.

The third experiment examines finite-sample sensitivity to nuisance misspecification; it is not a proof that double robustness survives data-dependent selection. We fix $n=5\times10^{5}$ and $r=5000$ and consider four learner configurations. ``Correct'' denotes the flexible LightGBM specification used in the main experiment. A restricted propensity learner is a logistic regression using only $X^{(5)}$, and a restricted outcome learner is ordinary linear regression on $X^{(5)}$ alone. The four configurations combine the flexible and restricted learners to show how the two subsampling designs behave when one or both nuisance learners omit important structure.

The fourth experiment assesses how the relative behaviour of all four working-sample designs changes with treated--control overlap. We use the OBS-3-overlap DGP with propensity-logit multiplier $c\in\{0.1,0.3,0.5,0.7,1.0,1.5\}$ at $n=5\times10^5$ and $r=5000$, and report RMSE, coverage, interval width, runtime, and overlap diagnostics without presupposing their ordering.

The ablation is conducted in OBS-2 and OBS-3 at $n=5\times10^5$ and $r=5000$, where non-uniform geometry and limited overlap make the distinctions among the sampling rules consequential. Contrasting UNIF-DML with STRAT-UNIF-DML measures the contribution of equal allocation; contrasting STRAT-UNIF-DML with SEP-UD-DML measures the contribution of arm-specific geometric coverage; and contrasting SEP-UD-DML with UD-DML measures the additional contribution of the pooled transform and common skeleton. The primary table reports RMSE, empirical coverage, mean interval width, mean absolute SMD, runtime, and failure counts. A companion diagnostic reports Wilson coverage intervals, empirical-to-model standard-error ratios, GEFD, matching radii, paired squared-error changes, component time, and peak memory. The order and metrics were fixed before the final simulation run.

\subsection{Results and Analysis} \label{sec:simulation_analysis}

The full-scale simulation suite completed in 64.97 hours using six parallel workers and LightGBM 4.6.0; the workflow manifest, including tests and post-processing, records 65.17 hours. Every scenario--method cell contained 500 successful Monte Carlo replications, the diagnostic ablation recorded no failures in 4000 attempts, and all 14 automated tests passed. Figure~\ref{fig:subsample_performance} reports the working-sample-size paths, Figure~\ref{fig:overlap_performance} reports the overlap-gradient experiment, and Table~\ref{tab:simulation_summary} gives the canonical $n=500{,}000$, $r=5000$ ablation for OBS-2 and OBS-3.

In OBS-2, equal allocation lowers RMSE from $0.0520$ for UNIF-DML to $0.0444$ for STRAT-UNIF-DML; separate arm-specific UD coverage gives $0.0437$, and common-skeleton coupling lowers it further to $0.0383$. On the $10^4\widehat\Delta$ scale used in Table~\ref{tab:simulation_summary}, the three paired changes are $-7.27$, $-0.58$, and $-4.47$, with scaled Monte Carlo standard errors $2.11$, $1.79$, and $1.56$. Thus the equal-allocation and common-skeleton steps drive the detectable improvement in this configuration, whereas the separate arm-specific geometry adds little beyond stratification.

OBS-3 provides the sharper ablation. STRAT-UNIF-DML only modestly improves on UNIF-DML, with RMSE decreasing from $0.0644$ to $0.0603$, and SEP-UD-DML increases RMSE to $0.0637$. UD-DML reduces RMSE to $0.0343$. On the same scaled squared-error scale, the SEP-UD-to-UD contrast is $-28.81$ with Monte Carlo standard error $2.93$, whereas the preceding STRAT-UNIF-to-SEP-UD contrast is positive. The result is therefore not explained by arm-specific space filling alone; the pooled transform and shared anchors are the consequential component under the bimodal, limited-overlap geometry.

Empirical coverage ranges from $0.940$ to $0.974$ across the eight canonical cells. These finite-sample values are compatible with nominal coverage but do not verify the assumptions of Theorem~\ref{thm:ud_root_r}. The design diagnostics are more directly tied to the construction. Relative to SEP-UD-DML, UD-DML reduces mean SMD from $0.1093$ to $0.0521$ in OBS-2 and from $0.2784$ to $0.0449$ in OBS-3; maximum SMD falls from $0.5632$ to $0.1159$ and from $1.1906$ to $0.1045$, respectively. By contrast, the larger reported UD-DML GEFD in OBS-2 is not evidence of worse cross-arm balance: it measures discrepancy from the pooled empirical distribution, whereas the smaller SEP-UD-DML number averages two discrepancies from separate arm-specific distributions in different transformed coordinates. The two GEFD values have different targets and should not be ranked against each other. Figure~\ref{fig:simulation_diagnostics} shows the corresponding propensity-support patterns without treating them as a proof of balance or overlap.

The nuisance-misspecification experiment in Table~\ref{tab:misspecification_summary} is a finite-sample stress test rather than a new double-robustness theorem. When both nuisance learners are deliberately restricted, UD-DML has RMSEs of $0.034$, $0.081$, and $0.041$ across OBS-1--OBS-3, with empirical coverages of $0.96$, $0.70$, and $0.95$. The corresponding UNIF-DML RMSEs are $0.062$, $0.195$, and $0.191$, and its coverages are $0.67$, $0.02$, and $0.00$. STRAT-UNIF-DML and SEP-UD-DML also deteriorate sharply in OBS-2 and OBS-3. These results indicate that the common-skeleton design can limit residual treated--control discrepancy when nuisance fitting is poor, but the protection is configuration-specific and incomplete, as shown by the subnominal OBS-2 coverage.

Across the working-sample grid in Figure~\ref{fig:subsample_performance}, all methods improve as $r$ increases. Their differences are small in OBS-1, but UD-DML is uniformly most accurate over the reported OBS-3 grid: its RMSE decreases from $0.1400$ at $r=1000$ to $0.0226$ at $r=10{,}000$, compared with $0.4034$ to $0.0329$ for UNIF-DML. In Figure~\ref{fig:overlap_performance}, UD-DML RMSE remains between $0.0325$ and $0.0357$ for $c\in\{0.1,0.3,0.5,0.7,1.0,1.5\}$, while the other three methods reach $0.0920$--$0.1017$ at $c=1.5$. This is finite-grid evidence that the advantage concentrates where treated--control overlap is most difficult. The coverage panel uses pointwise Wilson intervals rather than connected lines because adjacent coverage estimates are independent Monte Carlo summaries. UD-DML coverage ranges from $0.962$ to $0.982$ over this grid, indicating mild finite-sample conservatism rather than undercoverage; the intervals show the sampling uncertainty associated with 500 replications and prevent small Monte Carlo fluctuations from being read as a systematic trend.

The computational results in Table~\ref{tab:computation_summary} make the statistical cost explicit. At the canonical OBS-3 configuration, median UD-DML runtime is $2.093$ seconds and median peak RSS is $463.2$ MB. Tree construction and matching account for a median $72.2\%$ of runtime, whereas the cached skeleton search is negligible after its first use. Increasing $B_\gamma$ from 10 to 60 leaves mean runtime near $2.11$ seconds and produces RMSE values between $0.0325$ and $0.0392$ over 50 replications per setting. At $n=500{,}000$ and $r=5000$, the population-size experiment gives runtimes of $2.08$ seconds for UD-DML and $13.68$ seconds for FULL-DML in OBS-3, with RMSEs of $0.034$ and $0.004$, respectively. These values describe the accuracy--cost frontier generated by the working-sample budget.

\section{Real Data Application}\label{sec:real_data}
We next illustrate the fixed-budget comparison on a large observational dataset. We apply UNIF-DML, STRAT-UNIF-DML, SEP-UD-DML, UD-DML, and FULL-DML to the 2021 US Natality records. The FULL-DML estimate anchors the comparison of repeated-fit stability, reference agreement, and per-call runtime. Under consistency, conditional exchangeability given the listed pretreatment covariates, and positivity, the reported contrast identifies the ATE.

\subsection{Dataset and Preprocessing}\label{sec:real_data_setup}
We apply UD-DML to the 2021 US Natality microdata file released by the National Center for Health Statistics \citep{NCHS2021Natality} and available from the CDC VitalStats portal. The file records every live birth registered in the 50 states and the District of Columbia during the 2021 calendar year ($N\approx 3.67$ million records), together with detailed maternal, paternal, and infant characteristics drawn from the standard birth certificate. Our causal question of interest is the average treatment effect of maternal smoking during pregnancy on infant birth weight. This relationship has been extensively investigated in perinatal epidemiology \citep{Almond2005,Walker2009MaternalSmokingBW,DaSilva2019Smoking}, and serves as a natural large-scale benchmark for the present study: the treated group is small and unevenly distributed in the covariate space, yielding a genuinely low-overlap regime.

Formally, for each birth record $i=1,\ldots,n$, we define the treatment, outcome, and covariates as follows. The treatment indicator records whether the mother smoked during pregnancy (CDC field \texttt{CIG\_REC}): $W_i=1$ if reported ``Yes'' and $W_i=0$ if reported ``No''. The outcome $Y_i\in\mathbb{R}_+$ is the infant's birth weight in grams (\texttt{DBWT}), rescaled to the unit interval via the affine transformation $Y_i^{(s)}=(Y_i-y_{\min})/(y_{\max}-y_{\min})$ for LightGBM learning and clearer visualisation. The covariate vector $\bm{X}_i\in\mathbb{R}^{10}$ contains: maternal age (\texttt{MAGER}), maternal education (\texttt{MEDUC}), prenatal-care month (\texttt{PRECARE}), number of prenatal visits (\texttt{PREVIS}), infant sex (\texttt{SEX}), marital status (\texttt{DMAR}), paternal age (\texttt{FAGECOMB}), gestational diabetes (\texttt{RF\_GDIAB}), gestational hypertension (\texttt{RF\_GHYPE}), and the number of prior preterm births (\texttt{PRIORTERM}). The causal estimand is the population-level ATE on the scaled outcome,
$\theta_0={E}\left[Y^{(s)}(1)-Y^{(s)}(0)\right],$
which we report on the original scale via the back-transformation $(y_{\max}-y_{\min})\theta_0$.

Starting from the $3{,}669{,}928$ raw records, we apply the following cleaning pipeline: (i) drop records whose smoking indicator is missing, recoded, or marked as unknown/not stated (CDC code \texttt{U}), retaining only \texttt{Y}/\texttt{N} records; (ii) drop records with missing or implausible birth weight (\texttt{DBWT} $\le 0$ or $\ge 9000$,g); (iii) coerce every numeric covariate to numeric type and replace the documented sentinel codes by missing values; (iv) map the binary categorical fields \texttt{SEX}, \texttt{RF\_GDIAB}, and \texttt{RF\_GHYPE} to
$\{0,1\}$ and drop any remaining row containing a missing value; and (v) rescale \texttt{DBWT} to the unit interval as described above. The cleaned analysis sample contains $n=2{,}846{,}543$ births, of which $111{,}627$ ($3.92\%$) report maternal smoking; raw birth weights range over $[227,8142]$\,g. The target parameter is the scaled ATE $\theta_0$, and its back-transformation $(y_{\max}-y_{\min})\theta_0=7915\,\theta_0$ gives the identified mean birth-weight contrast in grams under these assumptions. A negative value corresponds to lower expected birth weight under maternal smoking than under no maternal smoking, conditional on identification.

\subsection{Experimental Settings}\label{sec:real_data_design}
The application is evaluated on one fixed cleaned dataset. We compute FULL-DML once to anchor the reference-agreement and runtime comparisons. We then perform $B=100$ paired repeated working-sample runs. At repetition $b$, UNIF-DML, STRAT-UNIF-DML, SEP-UD-DML, and UD-DML select and cross-fit working samples from the same fixed observed data for every $r\in\{1000,2500,5000,10000,25000\}$; the method--budget grid shares the repetition identifier and prespecified seed hierarchy. This design isolates variation due to working-sample selection and repeated fitting without mixing it with nonparametric-bootstrap variation.

For method $m$, let $\widehat\theta_{m,b}(r)$ denote the estimate in repetition $b$, and let $\widehat\theta_{\mathrm{FULL}}$ denote the single full-data reference. We report the repeated-fit mean and standard deviation together with
\[
\operatorname{RMSRef}_m(r)
=
\left\{
\frac{1}{B}
\sum_{b=1}^{B}
\left(
\widehat\theta_{m,b}(r)-\widehat\theta_{\mathrm{FULL}}
\right)^2
\right\}^{1/2}.
\]
This quantity measures repeated-fit agreement with the estimated FULL-DML benchmark. Bias relative to the causal effect and superpopulation uncertainty are separate inferential targets, so the application reports the benchmark discrepancy together with repeated-fit dispersion. It also records mean and maximum SMD, realised GEFD, matching radii, wall-clock time, and the variance method used in every fit. All nuisance fits use five-fold cross-fitting, the prespecified LightGBM configuration, one learner thread, and propensity clipping to $[0.01,0.99]$.

\subsection{Results and Analysis}\label{sec:real_data_results}
The full-data DML fit gives $\widehat\theta_{\mathrm{FULL}}=-0.01701$, with a 95\% interval of $[-0.01746,-0.01657]$. On the original outcome scale, the point estimate corresponds to a birth-weight contrast of approximately $-134.7$\,g. Under the identification conditions stated above, this estimate serves as the observational benchmark for the working-sample comparisons.

At the canonical budget $r=5000$, the repeated-fit means for UNIF-DML, STRAT-UNIF-DML, SEP-UD-DML, and UD-DML are $-0.01948$, $-0.01698$, $-0.01739$, and $-0.01601$, respectively. Their repeated-fit standard deviations are $0.01113$, $0.00403$, $0.00509$, and $0.00231$, while their RMS reference discrepancies are $0.01135$, $0.00401$, $0.00508$, and $0.00251$. Thus UD-DML is the most stable design and has the smallest RMS discrepancy from the estimated FULL-DML reference in this comparison. STRAT-UNIF-DML has the closest repeated-fit mean to the reference, but its larger dispersion produces a larger RMS discrepancy. These comparisons quantify stability and agreement with one estimated reference; they do not identify bias relative to the unknown causal effect.

Figure~\ref{fig:real_data_application} shows how this comparison changes with the working-sample budget. RMS reference discrepancy and repeated-fit dispersion decline for every method as $r$ increases. UD-DML remains lowest on both measures over the reported grid, reaching an RMS reference discrepancy of $0.00112$ and a repeated-fit standard deviation of $0.00092$ at $r=25000$. The canonical design diagnostics are consistent with the intended common-skeleton mechanism: UD-DML has mean SMD $0.0409$ and maximum SMD $0.1561$, compared with $0.1772$ and $0.8833$ for SEP-UD-DML. Its pooled realised GEFD is larger than the average arm-specific GEFD reported for SEP-UD-DML, but these discrepancies target different transformed empirical distributions and are not directly rankable.

The computational panel reports speed-up relative to the one-off FULL-DML runtime, with parity marked at one. The single FULL-DML fit takes $107.62$ seconds. At $r=5000$, mean runtimes are $0.68$, $0.74$, $32.23$, and $32.24$ seconds for the four working-sample designs in the same order, so UD-DML is about $3.3$ times faster at the canonical budget. At $r=10000$, its speed-up is about $1.6$. At $r=25000$, its mean runtime is $232.14$ seconds and its speed-up is $0.46$. The parity crossing identifies the empirical range of working-sample budgets that yields computational savings for repeated fitting.

\section{Conclusions}\label{sec:conclusion}
UD-DML provides a working-sample design for repeated DML under an explicit computational budget. The method builds a low-discrepancy skeleton in retained PCA coordinates and matches each anchor to one treated and one control observation, which combines equal allocation, geometric coverage, and arm alignment in a single selection rule. The theory establishes finite-sample representation and balance bounds, separates selected-sample estimation from transport to the original target, and derives a selection-conditional leading-score limit that yields root-$r$ asymptotic normality with a consistent residual variance estimator. The simulations attribute the largest gains in the more difficult designs to the common skeleton, and the natality analysis shows stable repeated-fit estimates together with useful computational savings over the relevant budget range.

The current analysis has several focused limitations. Dimension reduction is determined by an outcome-blind PCA threshold, so the retained coordinates do not need to capture every low-variance direction that is relevant to nuisance estimation. Nearest-neighbour matching also becomes more demanding as the retained dimension and working-sample budget increase. Finally, inferential theory states learner-agnostic calibration and componentwise rate conditions, verifying these rates for particular machine-learning procedures requires additional analysis. Empirical evidence is drawn from the simulation designs and one large observational application, which leaves the behavior under other covariate structures and treatment mechanisms to be assessed.

These limitations suggest concrete extensions. Dimension selection can combine retained-variance criteria with outcome-blind representation diagnostics, and faster or approximate matching algorithms can reduce the geometric cost in larger retained spaces. Learner-specific theory can translate the calibration conditions into primitive rates for commonly used nuisance estimators. Further work can also study alternative treatment settings and additional large observational datasets, using the same GEFD, matching-radius, and arm-balance diagnostics to determine how the common-skeleton construction transfers across applications.

\section{Declarations}
\paragraph{Availability of data and material}
The 2021 US Natality microdata used in this study is publicly available from the National Center for Health Statistics at the CDC VitalStats
portal \url{https://www.cdc.gov/nchs/data_access/vitalstatsonline.htm\#Tools}.
The codes used to implement the UD-DML estimator and reproduce the analysis in this manuscript can be accessed via \url{https://github.com/BobZhangHT/UD-DML}.

\paragraph{Funding}
This research is supported by program for scientific research startup funds (360302042403 and 360302042404) of Guangdong Ocean University, and also supported by Shenzhen Polytechnic University with the Project No.6025310026K.

\bibliographystyle{apalike}
\bibliography{references}

\clearpage

\appendix
\section{Technical Details}
\subsection{Proof of Lemma \ref{lem:ekh_rkhs}}\label{app:proof_ekh_rkhs}
\begin{proof}
For an indexed collection $\mathcal A=(\bm a_\ell)_{\ell=1}^{N_{\mathcal A}}$, define its kernel mean embedding by
\[
\mu_{\mathcal A}=\frac{1}{N_{\mathcal A}}
\sum_{\ell=1}^{N_{\mathcal A}}\mathcal K(\bm a_\ell,\,\cdot).
\]
The reproducing property gives
\[
\frac{1}{N_{\mathcal A}}\sum_{\ell=1}^{N_{\mathcal A}}f(\bm a_\ell)
-\frac{1}{N_{\mathcal B}}\sum_{s=1}^{N_{\mathcal B}}f(\bm b_s)
=\left\langle f,\mu_{\mathcal A}-\mu_{\mathcal B}\right\rangle_{\mathcal H_{\mathcal K}}.
\]
By Cauchy--Schwarz,
\[
\left|\left\langle f,\mu_{\mathcal A}-\mu_{\mathcal B}\right\rangle_{\mathcal H_{\mathcal K}}\right|
\leq
\|f\|_{\mathcal H_{\mathcal K}}
\|\mu_{\mathcal A}-\mu_{\mathcal B}\|_{\mathcal H_{\mathcal K}}.
\]
Expanding the squared norm gives
\[
\begin{aligned}
\|\mu_{\mathcal A}-\mu_{\mathcal B}\|_{\mathcal H_{\mathcal K}}^2
&=\frac{1}{N_{\mathcal A}^2}\sum_{\ell=1}^{N_{\mathcal A}}\sum_{\ell'=1}^{N_{\mathcal A}}
\mathcal K(\bm a_\ell,\bm a_{\ell'})
-\frac{2}{N_{\mathcal A}N_{\mathcal B}}
\sum_{\ell=1}^{N_{\mathcal A}}\sum_{s=1}^{N_{\mathcal B}}
\mathcal K(\bm a_\ell,\bm b_s)\\
&\quad+\frac{1}{N_{\mathcal B}^2}\sum_{s=1}^{N_{\mathcal B}}\sum_{s'=1}^{N_{\mathcal B}}
\mathcal K(\bm b_s,\bm b_{s'})
=D^2(\mathcal A;\mathcal B,\mathcal K).
\end{aligned}
\]
Therefore $\|\mu_{\mathcal A}-\mu_{\mathcal B}\|_{\mathcal H_{\mathcal K}}=D(\mathcal A;\mathcal B,\mathcal K)$, and the asserted inequality follows from $V_{\mathcal K}(f)=\|f\|_{\mathcal H_{\mathcal K}}$.
\end{proof}

\subsection{Proof of Proposition \ref{prop:copula_decomp}}\label{app:proof_copula_decomp}
\begin{proof}
Let $\mathcal H$ be the reproducing-kernel Hilbert space of $\mathcal K$. For an indexed collection $\mathcal A=(\bm a_\ell)_{\ell=1}^{N_{\mathcal A}}\subset[0,1]^q$, define
\[
\mu_{\mathcal A}=\frac{1}{N_{\mathcal A}}\sum_{\ell=1}^{N_{\mathcal A}}\mathcal K(\bm a_\ell,\,\cdot),
\qquad
\mu_u=\int_{[0,1]^q}\mathcal K(\bm u,\,\cdot)\,dF_u(\bm u).
\]
The relevant point-set embeddings are
\begin{align*}
\mu_{\widetilde{\mathcal V}_{r_p}}
=\frac1{r_p}\sum_{j=1}^{r_p}\mathcal K\!\left(T_{\bm Z}(\bm v_j),\,\cdot\right),
\quad
\mu_{\mathcal U_{r_p}}
=\frac1{r_p}\sum_{j=1}^{r_p}\mathcal K(\bm u_j,\,\cdot),
\quad
\mu_{\widetilde{\mathcal Z}}
=\frac1n\sum_{i=1}^{n}\mathcal K\!\left(T_{\bm Z}(\bm Z_i),\,\cdot\right).
\end{align*}
The definition in~\eqref{eq:gefd_sec24} gives
\[
D(\widetilde{\mathcal V}_{r_p};\widetilde{\mathcal Z},\mathcal K)
=\left\|\mu_{\widetilde{\mathcal V}_{r_p}}-\mu_{\widetilde{\mathcal Z}}\right\|_{\mathcal H}.
\]
Since
\[
\mu_{\widetilde{\mathcal V}_{r_p}}-\mu_{\widetilde{\mathcal Z}}
=\left(\mu_{\widetilde{\mathcal V}_{r_p}}-\mu_u\right)
+\left(\mu_u-\mu_{\widetilde{\mathcal Z}}\right),
\]
the triangle inequality yields
\[
D(\widetilde{\mathcal V}_{r_p};\widetilde{\mathcal Z},\mathcal K)
\leq D(\widetilde{\mathcal V}_{r_p};F_u,\mathcal K)
+D(\widetilde{\mathcal Z};F_u,\mathcal K).
\]
Similarly,
\[
\mu_{\widetilde{\mathcal V}_{r_p}}-\mu_u
=\left(\mu_{\widetilde{\mathcal V}_{r_p}}-\mu_{\mathcal U_{r_p}}\right)
+\left(\mu_{\mathcal U_{r_p}}-\mu_u\right),
\]
and therefore
\[
D(\widetilde{\mathcal V}_{r_p};F_u,\mathcal K)
\leq D(\widetilde{\mathcal V}_{r_p};\mathcal U_{r_p},\mathcal K)
+D(\mathcal U_{r_p};F_u,\mathcal K).
\]
Combining the two inequalities proves~\eqref{eq:copula_decomp}.
\end{proof}

\subsection{Proof of Proposition \ref{prop:target_decomp}}\label{app:proof_target_decomp}
\begin{proof}
The exact decomposition is
\[
\widehat\theta_{\mathrm{UD}}-\theta_0
=\bigl(\widehat\theta_{\mathrm{UD}}-\theta_{Q_{r,n}}\bigr)
+\bigl(\theta_{Q_{r,n}}-\theta_0\bigr).
\]
For the empirical analogue, define
\[
P_n h=\frac1n\sum_{i=1}^n h(\bm X_i),
\qquad
P_{\mathcal S}h=\frac1r\sum_{i\in\mathcal S}h(\bm X_i),
\qquad
\tau_P=f_\tau\circ\mathcal T+b_\tau.
\]
Then
\[
(P_{\mathcal S}-P_n)\tau_P
=(P_{\mathcal S}-P_n)(f_\tau\circ\mathcal T)
+(P_{\mathcal S}-P_n)b_\tau,
\]
and therefore
\begin{align*}
\left|(P_{\mathcal S}-P_n)\tau_P\right|
&\leq
\left|(P_{\mathcal S}-P_n)(f_\tau\circ\mathcal T)\right|
+\left|(P_{\mathcal S}-P_n)b_\tau\right|\\
&\leq
D(\widetilde{\mathcal V}_{r_p};\widetilde{\mathcal Z},\mathcal K)
V_{\mathcal K}(f_\tau)
+\frac{L_\tau}{2}
\left\{\delta_1^{(\mathrm{rot})}+\delta_0^{(\mathrm{rot})}\right\}
+\left|(P_{\mathcal S}-P_n)b_\tau\right|,
\end{align*}
where the last inequality is \eqref{eq:thm1_total} with $f=f_\tau$.
\end{proof}

\subsection{Derivation of the Closed Form of the Mixture Discrepancy}\label{sec:MD_deriv}

Let
\[
\mathcal D=(\bm u_j)_{j=1}^{r_p}\subset [0,1]^q,
\qquad
\bm u_j=(u_{j1},\ldots,u_{jq})^\top,
\]
and let
\[
F_{\mathcal D}
=
\frac{1}{r_p}\sum_{j=1}^{r_p}\delta_{\bm u_j}
\]
be the empirical distribution of the design points. Let $F_u$ denote the uniform distribution on $[0,1]^q$. For a reproducing kernel $\mathcal{K}:[0,1]^q\times[0,1]^q\to\mathbb R$, the squared discrepancy is
\begin{equation}\label{eq:app_disc_def}
D^2(\mathcal D;F_u,\mathcal{K})
=
\int_{[0,1]^{2q}}
\mathcal{K}(\bm u,\bm t)\,d\{F_{\mathcal D}(\bm u)-F_u(\bm u)\}\,d\{F_{\mathcal D}(\bm t)-F_u(\bm t)\}.
\end{equation}

We now derive the closed form of the mixture discrepancy $D_M^2(\mathcal D)$.

\begin{proposition}[Closed form of the mixture discrepancy]\label{prop:mixture_closed_form}
Let $\mathcal{K}_M$ be the mixture kernel
\[
\mathcal{K}_M(\bm u,\bm t)
=
\prod_{d=1}^q
\kappa_M(u_d,t_d),
\]
where
\[
\kappa_M(u,t)
=
\frac{15}{8}
-\frac{1}{4}\left|u-\frac{1}{2}\right|
-\frac{1}{4}\left|t-\frac{1}{2}\right|
-\frac{3}{4}|u-t|
+\frac{1}{2}|u-t|^2,
\qquad u,t\in[0,1].
\]
Then
\begin{align*}
D_M^2(\mathcal D)
&=
\left(\frac{19}{12}\right)^q
-\frac{2}{r_p}
\sum_{j=1}^{r_p}
\prod_{d=1}^q
\left[
\frac{5}{3}
-\frac{1}{4}\left|u_{jd}-\frac{1}{2}\right|
-\frac{1}{4}\left(u_{jd}-\frac{1}{2}\right)^2
\right] \\
&\quad
+\frac{1}{r_p^2}
\sum_{j=1}^{r_p}\sum_{k=1}^{r_p}
\prod_{d=1}^q
\left(
\frac{15}{8}
-\frac{1}{4}\left|u_{jd}-\frac{1}{2}\right|
-\frac{1}{4}\left|u_{kd}-\frac{1}{2}\right|
-\frac{3}{4}|u_{jd}-u_{kd}|
+\frac{1}{2}|u_{jd}-u_{kd}|^2
\right).
\end{align*}
\end{proposition}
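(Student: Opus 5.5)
The plan is to substitute $F_{\mathcal D}=r_p^{-1}\sum_j\delta_{\bm u_j}$ into the definition \eqref{eq:app_disc_def} and expand the signed measure bilinearly. This gives three terms: $\iint\mathcal K_M\,dF_u\,dF_u$, $-2r_p^{-1}\sum_j\int\mathcal K_M(\bm u,\bm u_j)\,dF_u(\bm u)$, and $r_p^{-2}\sum_{j,k}\mathcal K_M(\bm u_j,\bm u_k)$. Symmetry of $\mathcal K_M$ lets the two cross terms merge into the single middle term. The third term already matches the stated double sum once the product form of $\mathcal K_M$ is written out coordinatewise, so no computation is needed there.

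Because $\mathcal K_M=\prod_d\kappa_M(u_d,t_d)$ and $F_u$ is a product of uniform laws on $[0,1]$, Fubini factorises each integral into a product of one-dimensional integrals. It therefore suffices to compute two scalar quantities: $\int_0^1\kappa_M(u,t)\,du$ for fixed $t\in[0,1]$, and $\int_0^1\int_0^1\kappa_M(u,t)\,du\,dt$.

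For the single integral I will use the elementary integrals
\[
\int_0^1|u-\tfrac12|\,du=\tfrac14,\qquad \int_0^1|u-t|\,du=\tfrac12-t+t^2,\qquad \int_0^1(u-t)^2\,du=\tfrac13-t+t^2.
\]
Substituting these gives
\[
\int_0^1\kappa_M(u,t)\,du=\tfrac{15}{8}-\tfrac1{16}-\tfrac14|t-\tfrac12|-\tfrac34(\tfrac12-t+t^2)+\tfrac12(\tfrac13-t+t^2).
\]
Collecting terms, the $t$-dependent polynomial part is $\tfrac14 t-\tfrac14t^2$. Writing $t-t^2=\tfrac14-(t-\tfrac12)^2$, it becomes $\tfrac1{16}-\tfrac14(t-\tfrac12)^2$. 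The constants then sum to $\tfrac{15}{8}-\tfrac1{16}-\tfrac38+\tfrac16+\tfrac1{16}=\tfrac53$. This yields the bracket
\[
\tfrac53-\tfrac14|t-\tfrac12|-\tfrac14(t-\tfrac12)^2,
\]
which is exactly the factor in the middle term.

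Integrating this bracket over $t$ gives the double integral. Using $\int_0^1(t-\tfrac12)^2\,dt=\tfrac1{12}$, the value is $\tfrac53-\tfrac1{16}-\tfrac1{48}=\tfrac{19}{12}$, so the first term is $(19/12)^q$.

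The main obstacle is purely the bookkeeping of constants in the single integral, especially completing the square so that the polynomial in $t$ matches the stated form. I will double-check the result at $t=\tfrac12$: the direct integral $\int_0^1\kappa_M(u,\tfrac12)\,du$ should equal $\tfrac53$.
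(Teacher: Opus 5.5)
Your proposal is correct and follows the paper's proof essentially step for step: the bilinear expansion of the signed measure with symmetry merging the two cross terms, the Fubini factorisation through the product kernel, and the one-dimensional integrals $\int_0^1\kappa_M(u,t)\,du=\tfrac53-\tfrac14|t-\tfrac12|-\tfrac14(t-\tfrac12)^2$ and $\int_0^1\int_0^1\kappa_M(u,t)\,du\,dt=\tfrac{19}{12}$. The only cosmetic difference is that the paper writes $\int_0^1|u-t|\,du$ and $\int_0^1|u-t|^2\,du$ directly as $(t-\tfrac12)^2+\tfrac14$ and $(t-\tfrac12)^2+\tfrac1{12}$ instead of completing the square at the end; your constants, and your check at $t=\tfrac12$, are correct.
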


\begin{proof}
Write
\[
\mu=F_{\mathcal D}-F_u,
\qquad
F_{\mathcal D}=\frac1{r_p}\sum_{j=1}^{r_p}\delta_{\bm u_j}.
\]
The signed-measure product expands as
\begin{align*}
d\mu(\bm u)d\mu(\bm t)
&=
\{dF_{\mathcal D}(\bm u)-dF_u(\bm u)\}
\{dF_{\mathcal D}(\bm t)-dF_u(\bm t)\}\\
&=
dF_{\mathcal D}(\bm u)dF_{\mathcal D}(\bm t)
-dF_{\mathcal D}(\bm u)dF_u(\bm t)\\
&\quad
-dF_u(\bm u)dF_{\mathcal D}(\bm t)
+dF_u(\bm u)dF_u(\bm t).
\end{align*}
Consequently,
\begin{align*}
D^2(\mathcal D;F_u,\mathcal K)
&=
\frac1{r_p^2}\sum_{j=1}^{r_p}\sum_{k=1}^{r_p}
\mathcal K(\bm u_j,\bm u_k)\\
&\quad
-\frac1{r_p}\sum_{j=1}^{r_p}
\int_{[0,1]^q}\mathcal K(\bm u_j,\bm t)\,d\bm t\\
&\quad
-\frac1{r_p}\sum_{k=1}^{r_p}
\int_{[0,1]^q}\mathcal K(\bm u,\bm u_k)\,d\bm u\\
&\quad
+\int_{[0,1]^{2q}}\mathcal K(\bm u,\bm t)\,d\bm u\,d\bm t.
\end{align*}
Since $\mathcal K(\bm u,\bm t)=\mathcal K(\bm t,\bm u)$,
\begin{equation}\label{eq:app_disc_analytic_short}
D^2(\mathcal D;F_u,\mathcal K)
=
\int_{[0,1]^{2q}}\mathcal K(\bm u,\bm t)\,d\bm u\,d\bm t
-\frac{2}{r_p}\sum_{j=1}^{r_p}\int_{[0,1]^q}\mathcal K(\bm u,\bm u_j)\,d\bm u
+\frac{1}{r_p^2}\sum_{j=1}^{r_p}\sum_{k=1}^{r_p}\mathcal K(\bm u_j,\bm u_k),
\end{equation}
Define
\[
A_0:=\int_0^1\int_0^1 \kappa_M(u,t)\,du\,dt,
\qquad
A_1(t):=\int_0^1 \kappa_M(u,t)\,du.
\]
For $t\in[0,1]$,
\begin{align*}
\int_0^1\left|u-\frac12\right|du
&=
\int_0^{1/2}\left(\frac12-u\right)du
+\int_{1/2}^{1}\left(u-\frac12\right)du\\
&=
\left[\frac u2-\frac{u^2}{2}\right]_{0}^{1/2}
+\left[\frac{u^2}{2}-\frac u2\right]_{1/2}^{1}\\
&=
\frac18+\frac18
=\frac14,
\end{align*}
\begin{align*}
\int_0^1|u-t|du
&=
\int_0^t(t-u)du+\int_t^1(u-t)du\\
&=
\left[tu-\frac{u^2}{2}\right]_0^t
+\left[\frac{u^2}{2}-tu\right]_t^1\\
&=
\frac{t^2}{2}+\frac{(1-t)^2}{2}\\
&=
t^2-t+\frac12\\
&=
\left(t-\frac12\right)^2+\frac14,
\end{align*}
and
\begin{align*}
\int_0^1|u-t|^2du
&=
\int_0^t(t-u)^2du+\int_t^1(u-t)^2du\\
&=
\left[-\frac{(t-u)^3}{3}\right]_0^t
+\left[\frac{(u-t)^3}{3}\right]_t^1\\
&=
\frac{t^3}{3}+\frac{(1-t)^3}{3}\\
&=
t^2-t+\frac13\\
&=
\left(t-\frac12\right)^2+\frac1{12}.
\end{align*}
Thus
\begin{align*}
A_1(t)
&=
\int_0^1\biggl\{
\frac{15}{8}
-\frac14\left|u-\frac12\right|
-\frac14\left|t-\frac12\right|
-\frac34|u-t|
+\frac12|u-t|^2
\biggr\}du\\
&=
\frac{15}{8}
-\frac14\left(\frac14\right)
-\frac14\left|t-\frac12\right|
-\frac34\left\{\left(t-\frac12\right)^2+\frac14\right\}\\
&\quad
+\frac12\left\{\left(t-\frac12\right)^2+\frac1{12}\right\}\\
&=
\left(\frac{15}{8}-\frac1{16}-\frac3{16}+\frac1{24}\right)
-\frac14\left|t-\frac12\right|
+\left(-\frac34+\frac12\right)\left(t-\frac12\right)^2\\
&=
\frac53
-\frac14\left|t-\frac12\right|
-\frac14\left(t-\frac12\right)^2.
\end{align*}
Moreover,
\begin{align*}
\int_0^1\left(t-\frac12\right)^2dt
&=
\left[\frac{(t-1/2)^3}{3}\right]_0^1
=\frac1{12},\\
A_0
&=
\int_0^1A_1(t)dt\\
&=
\frac53
-\frac14\int_0^1\left|t-\frac12\right|dt
-\frac14\int_0^1\left(t-\frac12\right)^2dt\\
&=
\frac53-\frac1{16}-\frac1{48}
=\frac{19}{12}.
\end{align*}
By Fubini's theorem and the product form of $\mathcal K_M$,
\[
\int_{[0,1]^{2q}}\mathcal K_M(\bm u,\bm t)\,d\bm u\,d\bm t
=\prod_{d=1}^q\int_0^1\!\int_0^1\kappa_M(u_d,t_d)\,du_d\,dt_d
=\prod_{d=1}^q A_0
=\left(\frac{19}{12}\right)^q.
\]
and
\begin{align*}
\int_{[0,1]^q}\mathcal K_M(\bm u,\bm u_j)\,d\bm u
&=
\prod_{d=1}^q\int_0^1\kappa_M(u_d,u_{jd})\,du_d\\
&=
\prod_{d=1}^q A_1(u_{jd})\\
&=
\prod_{d=1}^q
\left\{
\frac53-\frac14\left|u_{jd}-\frac12\right|
-\frac14\left(u_{jd}-\frac12\right)^2
\right\}.
\end{align*}
The empirical--empirical term is
\[
\mathcal K_M(\bm u_j,\bm u_k)
=
\prod_{d=1}^q
\left(
\frac{15}{8}
-\frac{1}{4}\Bigl|u_{jd}-\frac12\Bigr|
-\frac{1}{4}\Bigl|u_{kd}-\frac12\Bigr|
-\frac{3}{4}|u_{jd}-u_{kd}|
+\frac{1}{2}|u_{jd}-u_{kd}|^2
\right).
\]
Substitution into \eqref{eq:app_disc_analytic_short} gives
\begin{align*}
D_M^2(\mathcal D)
&=
\left(\frac{19}{12}\right)^q
-\frac{2}{r_p}\sum_{j=1}^{r_p}\prod_{d=1}^q
\left[
\frac53
-\frac14\left|u_{jd}-\frac12\right|
-\frac14\left(u_{jd}-\frac12\right)^2
\right] \\
&\quad
+\frac{1}{r_p^2}\sum_{j=1}^{r_p}\sum_{k=1}^{r_p}\prod_{d=1}^q
\left(
\frac{15}{8}
-\frac{1}{4}\left|u_{jd}-\frac12\right|
-\frac{1}{4}\left|u_{kd}-\frac12\right|
-\frac{3}{4}|u_{jd}-u_{kd}|
+\frac{1}{2}|u_{jd}-u_{kd}|^2
\right),
\end{align*}
which is the desired result.
\end{proof}

\subsection{Proof of Theorem \ref{thm:rep_ud}}
\begin{proof}
Fix $g\in\{0,1\}$. Since $\bm Z_i=\Pi_q(\bm X_i)$ for every $i$, the left-hand side of \eqref{eq:thm1_group} can be written as
\begin{align*}
\int \phi_f(\bm x)\,dP_{n,\bm X}(\bm x)
-
\int \phi_f(\bm x)\,dP_{\mathcal S_g,\bm X}(\bm x)
&=
\frac{1}{n}\sum_{i=1}^n
f\!\left(T_{\bm Z}(\bm Z_i)\right)
-
\frac{1}{r_p}\sum_{j=1}^{r_p}
f\!\left(T_{\bm Z}(\bm Z_{i_j^g})\right) \\
&=
\frac{1}{n}\sum_{i=1}^n \varphi_f(\bm Z_i)
-
\frac{1}{r_p}\sum_{j=1}^{r_p} \varphi_f(\bm Z_{i_j^g}).
\end{align*}
Add and subtract the skeleton average $r_p^{-1}\sum_{j=1}^{r_p}\varphi_f(\bm v_j)$ to obtain
\begin{align*}
&
\left|
\frac{1}{n}\sum_{i=1}^n \varphi_f(\bm Z_i)
-
\frac{1}{r_p}\sum_{j=1}^{r_p} \varphi_f(\bm Z_{i_j^g})
\right| \\
&\le
\left|
\frac{1}{n}\sum_{i=1}^n \varphi_f(\bm Z_i)
-
\frac{1}{r_p}\sum_{j=1}^{r_p} \varphi_f(\bm v_j)
\right|
+
\left|
\frac{1}{r_p}\sum_{j=1}^{r_p} \varphi_f(\bm v_j)
-
\frac{1}{r_p}\sum_{j=1}^{r_p} \varphi_f(\bm Z_{i_j^g})
\right| \\
&=: A_n+B_{n,g}.
\end{align*}

By Lemma~\ref{lem:ekh_rkhs}, applied to the transformed indexed collections $\widetilde{\mathcal Z}=(T_{\bm Z}(\bm Z_i))_{i=1}^n$ and $\widetilde{\mathcal V}_{r_p}=(T_{\bm Z}(\bm v_j))_{j=1}^{r_p}$,
\[
A_n
=
\left|
\frac{1}{n}\sum_{i=1}^n f\!\left(T_{\bm Z}(\bm Z_i)\right)
-
\frac{1}{r_p}\sum_{j=1}^{r_p} f\!\left(T_{\bm Z}(\bm v_j)\right)
\right|
\le
D(\widetilde{\mathcal V}_{r_p};\widetilde{\mathcal Z},\mathcal{K})\,V_{\mathcal K}(f).
\]

Next consider $B_{n,g}$. By the Lipschitz continuity of $\varphi_f$,
\begin{align*}
B_{n,g}
&\le
\frac{1}{r_p}\sum_{j=1}^{r_p}
\left|
\varphi_f(\bm v_j)-\varphi_f(\bm Z_{i_j^g})
\right| \\
&\le
\frac{L_f}{r_p}\sum_{j=1}^{r_p}
\|\bm v_j-\bm Z_{i_j^g}\|_2
\le
L_f\,\delta_g^{(\mathrm{rot})}.
\end{align*}
Thus
\begin{align*}
&\left|
\int\phi_f\,dP_{n,\bm X}
-\int\phi_f\,dP_{\mathcal S_g,\bm X}
\right|\\
&\qquad\leq
A_n+B_{n,g}\\
&\qquad\leq
D(\widetilde{\mathcal V}_{r_p};\widetilde{\mathcal Z},\mathcal K)V_{\mathcal K}(f)
+L_f\delta_g^{(\mathrm{rot})},
\end{align*}
which is \eqref{eq:thm1_group}.

It remains to prove \eqref{eq:thm1_total}. Since
\[
P_{\mathcal S,\bm X}
=
\frac{1}{2}P_{\mathcal S_1,\bm X}
+
\frac{1}{2}P_{\mathcal S_0,\bm X},
\]
we have
\begin{align*}
\int \phi_f(\bm x)\,dP_{n,\bm X}(\bm x)
-
\int \phi_f(\bm x)\,dP_{\mathcal S,\bm X}(\bm x) 
& = 
\frac{1}{2}
\left\{
\int \phi_f(\bm x)\,dP_{n,\bm X}(\bm x)
-
\int \phi_f(\bm x)\,dP_{\mathcal S_1,\bm X}(\bm x)
\right\} \\
\qquad &+
\frac{1}{2}
\left\{
\int \phi_f(\bm x)\,dP_{n,\bm X}(\bm x)
-
\int \phi_f(\bm x)\,dP_{\mathcal S_0,\bm X}(\bm x)
\right\}.
\end{align*}
Applying the triangle inequality and then \eqref{eq:thm1_group} with $g=1$ and $g=0$ gives
\[
\left|
\int \phi_f\,dP_{n,\bm X}
-
\int \phi_f\,dP_{\mathcal S,\bm X}
\right|
\le
D(\widetilde{\mathcal V}_{r_p};\widetilde{\mathcal Z},\mathcal{K})\,V_{\mathcal K}(f)
+
\frac{L_f}{2}
\bigl(
\delta_1^{(\mathrm{rot})}
+
\delta_0^{(\mathrm{rot})}
\bigr),
\]
which completes the proof.
\end{proof}

\subsection{Proof of Theorem \ref{thm:bal_ud}}
\begin{proof}
By construction, for each $g\in\{0,1\}$,
\[
\int \phi_f(\bm x)\,dP_{\mathcal S_g,\bm X}(\bm x)
=
\frac{1}{r_p}\sum_{j=1}^{r_p}\phi_f(\bm X_{i_j^g})
=
\frac{1}{r_p}\sum_{j=1}^{r_p}\varphi_f(\bm Z_{i_j^g}),
\]
where $\varphi_f(\bm z)=f\!\left(T_{\bm Z}(\bm z)\right)$. Therefore,
\begin{align*}
\left|
\int \phi_f(\bm x)\,dP_{\mathcal S_1,\bm X}(\bm x)
-
\int \phi_f(\bm x)\,dP_{\mathcal S_0,\bm X}(\bm x)
\right| 
=
\left|
\frac{1}{r_p}\sum_{j=1}^{r_p}\varphi_f(\bm Z_{i_j^1})
-
\frac{1}{r_p}\sum_{j=1}^{r_p}\varphi_f(\bm Z_{i_j^0})
\right|.
\end{align*}
Add and subtract the common skeleton average $r_p^{-1}\sum_{j=1}^{r_p}\varphi_f(\bm v_j)$ to obtain
\begin{align*}
&
\left|
\frac{1}{r_p}\sum_{j=1}^{r_p}\varphi_f(\bm Z_{i_j^1})
-
\frac{1}{r_p}\sum_{j=1}^{r_p}\varphi_f(\bm Z_{i_j^0})
\right| \\
&\le
\left|
\frac{1}{r_p}\sum_{j=1}^{r_p}\varphi_f(\bm Z_{i_j^1})
-
\frac{1}{r_p}\sum_{j=1}^{r_p}\varphi_f(\bm v_j)
\right|
+
\left|
\frac{1}{r_p}\sum_{j=1}^{r_p}\varphi_f(\bm v_j)
-
\frac{1}{r_p}\sum_{j=1}^{r_p}\varphi_f(\bm Z_{i_j^0})
\right|.
\end{align*}
Using the Lipschitz continuity of $\varphi_f$, the first term is bounded by
\[
\frac{1}{r_p}\sum_{j=1}^{r_p}
\left|
\varphi_f(\bm Z_{i_j^1})-\varphi_f(\bm v_j)
\right|
\le
\frac{L_f}{r_p}\sum_{j=1}^{r_p}\|\bm Z_{i_j^1}-\bm v_j\|_2
\le
L_f\,\delta_1^{(\mathrm{rot})},
\]
The second term satisfies
\begin{align*}
\left|
\frac{1}{r_p}\sum_{j=1}^{r_p}\varphi_f(\bm v_j)
-
\frac{1}{r_p}\sum_{j=1}^{r_p}\varphi_f(\bm Z_{i_j^0})
\right|
&\leq
\frac{1}{r_p}\sum_{j=1}^{r_p}
\left|\varphi_f(\bm v_j)-\varphi_f(\bm Z_{i_j^0})\right|\\
&\leq
\frac{L_f}{r_p}\sum_{j=1}^{r_p}
\|\bm v_j-\bm Z_{i_j^0}\|_2\\
&\leq
L_f\delta_0^{(\mathrm{rot})}.
\end{align*}
Therefore,
\begin{align*}
\left|
P_{\mathcal S_1,\bm X}\phi_f-P_{\mathcal S_0,\bm X}\phi_f
\right|
&\leq
L_f\delta_1^{(\mathrm{rot})}+L_f\delta_0^{(\mathrm{rot})}\\
&=
L_f\left\{\delta_1^{(\mathrm{rot})}+\delta_0^{(\mathrm{rot})}\right\},
\end{align*}
which is \eqref{eq:thm2_main}.
\end{proof}
\subsection{Proof of Corollary~\ref{cor:target_transport}}
\label{app:proof_target_transport}

\begin{proof}
Write
\[
P_{\mathcal S}
=
\frac12P_{\mathcal S_1}
+
\frac12P_{\mathcal S_0}.
\]
Because
\[
\tau_P=f_\tau\circ\mathcal T+b_\tau,
\]
we have
\begin{align*}
\left|
(P_{\mathcal S}-P_n)\tau_P
\right|
&\le
\left|
(P_{\mathcal S}-P_n)(f_\tau\circ\mathcal T)
\right|
+
\left|
(P_{\mathcal S}-P_n)b_\tau
\right|.
\end{align*}
Applying the pooled bound in
Theorem~\ref{thm:rep_ud} with $f=f_\tau$ gives
\begin{align*}
\left|
(P_{\mathcal S}-P_n)\tau_P
\right|
&\le
D(\widetilde{\mathcal V}_{r_p};
  \widetilde{\mathcal Z},\mathcal K)
V_{\mathcal K}(f_\tau)\\
&\quad+
\frac{L_\tau}{2}
\left\{
\delta_1^{(\mathrm{rot})}
+
\delta_0^{(\mathrm{rot})}
\right\}
+
\left|
(P_{\mathcal S}-P_n)b_\tau
\right|\\
&=
\Delta_{\tau,n}.
\end{align*}
Since
\[
\theta_{\mathcal S,n}=P_{\mathcal S}\tau_P,
\qquad
\theta_0=P\tau_P,
\]
it follows that
\[
\theta_{\mathcal S,n}-\theta_0
=
(P_{\mathcal S}-P_n)\tau_P
+
(P_n-P)\tau_P.
\]
Therefore,
\[
\sqrt r\,
\left|
\theta_{\mathcal S,n}-\theta_0
\right|
\le
\sqrt r\,\Delta_{\tau,n}
+
\sqrt r\,
\left|
(P_n-P)\tau_P
\right|.
\]
The finite-second-moment condition implies
\[
\sqrt n(P_n-P)\tau_P=O_P(1).
\]
Hence
\[
\sqrt r(P_n-P)\tau_P
=
O_P\!\left(\sqrt{\frac rn}\right)
=o_P(1),
\]
because $r/n\to0$. Combining this result with
$\sqrt r\,\Delta_{\tau,n}=o_P(1)$ proves
\[
\sqrt r(\theta_{\mathcal S,n}-\theta_0)=o_P(1).
\]
\end{proof}


\subsection{Proof of Theorem \ref{thm:ud_root_r}}\label{app:proof_ud_root_r}
\begin{proof}
All conditional quantities below are given $\mathcal G_n$.  For $i\in\mathcal S$,
\begin{align*}
E(\varepsilon_{W_i,i}\mid\mathcal G_n)
&=
W_iE\{Y_i-m_{1,P}(\bm X_i)\mid\bm X_i,W_i=1\}\\
&\quad+
(1-W_i)E\{Y_i-m_{0,P}(\bm X_i)\mid\bm X_i,W_i=0\}\\
&=0,
\end{align*}
and
\begin{align*}
E(\xi_{i,n}\mid\mathcal G_n)
&=
\frac{W_i}{\widehat e_i}
E(\varepsilon_{1,i}\mid\mathcal G_n)
-
\frac{1-W_i}{1-\widehat e_i}
E(\varepsilon_{0,i}\mid\mathcal G_n)\\
&=0.
\end{align*}
The distinct selected indices and outcome-blind selection give
\[
\{\xi_{i,n}:i\in\mathcal S\}
\quad\text{conditionally independent given }\mathcal G_n.
\]
Moreover,
\begin{align*}
|\xi_{i,n}|^{2+\nu}
&=
W_i\frac{|\varepsilon_{1,i}|^{2+\nu}}{\widehat e_i^{2+\nu}}
+(1-W_i)\frac{|\varepsilon_{0,i}|^{2+\nu}}
{(1-\widehat e_i)^{2+\nu}}\\
&\leq
\epsilon^{-(2+\nu)}|\varepsilon_{W_i,i}|^{2+\nu}.
\end{align*}
For every $c>0$,
\begin{align*}
L_{r,n}(c)
&:=
\frac1r\sum_{i\in\mathcal S}
E\!\left[
\xi_{i,n}^2\mathds1\{|\xi_{i,n}|>c\sqrt r\}
\mid\mathcal G_n
\right]\\
&\leq
\frac1r\sum_{i\in\mathcal S}
E\!\left[
\xi_{i,n}^2
\frac{|\xi_{i,n}|^\nu}{c^\nu r^{\nu/2}}
\mid\mathcal G_n
\right]\\
&=
\frac{1}{c^\nu r^{\nu/2}}
\frac1r\sum_{i\in\mathcal S}
E(|\xi_{i,n}|^{2+\nu}\mid\mathcal G_n)\\
&\leq
\frac{\epsilon^{-(2+\nu)}}{c^\nu r^{\nu/2}}
\frac1r\sum_{i\in\mathcal S}
E(|\varepsilon_{W_i,i}|^{2+\nu}\mid\mathcal G_n)\\
&=
r^{-\nu/2}O_P(1)
=o_P(1).
\end{align*}
The conditional variance satisfies
\[
\operatorname{Var}\!\left(
\frac1{\sqrt r}\sum_{i\in\mathcal S}\xi_{i,n}
\,\middle|\,\mathcal G_n
\right)
=
\frac1r\sum_{i\in\mathcal S}
E(\xi_{i,n}^2\mid\mathcal G_n)
=\sigma_{r,n}^2
\pto\sigma_{\rm UD}^2\in(0,\infty).
\]
Let $\{n_\ell\}_{\ell\geq1}$ be an arbitrary subsequence.  Convergence in
probability and a diagonal subsequence argument give a further subsequence
$\{n_{\ell_k}\}_{k\geq1}$ such that
\[
\sigma_{r,n_{\ell_k}}^2\to\sigma_{\rm UD}^2,
\qquad
L_{r,n_{\ell_k}}(c)\to0
\quad\text{almost surely for every }c\in\mathbb Q_{>0}.
\]
For $0<c_1<c_2$, monotonicity gives
\[
L_{r,n}(c_2)\leq L_{r,n}(c_1),
\qquad
L_{r,n_{\ell_k}}(c)\to0
\quad\text{almost surely for every }c>0.
\]
For almost every realised conditioning sequence, the deterministic
Lindeberg--Feller theorem applied to
\[
X_{n,i}=r^{-1/2}\xi_{i,n},
\qquad i\in\mathcal S,
\]
gives, for every $t\in\mathbb R$,
\[
E\!\left[
\exp\!\left\{
\frac{it}{\sqrt r}\sum_{i\in\mathcal S}\xi_{i,n_{\ell_k}}
\right\}
\middle|\mathcal G_{n_{\ell_k}}
\right]
\to
\exp\!\left(-\frac{t^2\sigma_{\rm UD}^2}{2}\right)
\quad\text{almost surely}.
\]
Since every subsequence has a further subsequence with this almost-sure limit,
for every $t\in\mathbb R$,
\begin{equation}\label{eq:ud_conditional_clt_cf}
E\!\left[
\exp\!\left\{
\frac{it}{\sqrt r}\sum_{i\in\mathcal S}\xi_{i,n}
\right\}
\middle|\mathcal G_n
\right]
\pto
\exp\!\left(-\frac{t^2\sigma_{\rm UD}^2}{2}\right).
\end{equation}
Let
\[
\chi_{r,n}(t)
=
E\!\left[
\exp\!\left\{
\frac{it}{\sqrt r}\sum_{i\in\mathcal S}\xi_{i,n}
\right\}
\middle|\mathcal G_n
\right],
\qquad
\chi(t)=\exp\!\left(-\frac{t^2\sigma_{\rm UD}^2}{2}\right).
\]
Since
\[
|\chi_{r,n}(t)|\leq1,
\qquad
|\chi(t)|\leq1,
\qquad
\chi_{r,n}(t)\pto\chi(t),
\]
for every $\eta>0$,
\begin{align*}
E|\chi_{r,n}(t)-\chi(t)|
&=
E\left[|\chi_{r,n}(t)-\chi(t)|
\mathds1\{|\chi_{r,n}(t)-\chi(t)|\leq\eta\}\right]\\
&\quad+
E\left[|\chi_{r,n}(t)-\chi(t)|
\mathds1\{|\chi_{r,n}(t)-\chi(t)|>\eta\}\right]\\
&\leq
\eta+2P\{|\chi_{r,n}(t)-\chi(t)|>\eta\}.
\end{align*}
Hence
\[
\limsup_{n\to\infty}E|\chi_{r,n}(t)-\chi(t)|\leq\eta,
\qquad
\lim_{n\to\infty}E|\chi_{r,n}(t)-\chi(t)|=0.
\]
Therefore,
\begin{align*}
E\exp\!\left\{
\frac{it}{\sqrt r}\sum_{i\in\mathcal S}\xi_{i,n}
\right\}
&=
E\{\chi_{r,n}(t)\}\\
&\to
\chi(t)
=
\exp\!\left(-\frac{t^2\sigma_{\rm UD}^2}{2}\right).
\end{align*}
Hence
\begin{equation}\label{eq:ud_conditional_clt}
\frac1{\sqrt r}\sum_{i\in\mathcal S}\xi_{i,n}
\dto\mathcal N(0,\sigma_{\rm UD}^2).
\end{equation}

Write
\[
\widehat m_{g,i}=\widehat m_g^{(-k_i)}(\bm X_i)
=m_{g,P}(\bm X_i)+\delta_{g,i},
\qquad g\in\{0,1\}.
\]
Then
\begin{align*}
Y_i-\widehat m_{1,i}
&=
Y_i-m_{1,P}(\bm X_i)-\delta_{1,i}
=\varepsilon_{1,i}-\delta_{1,i},\\
Y_i-\widehat m_{0,i}
&=
Y_i-m_{0,P}(\bm X_i)-\delta_{0,i}
=\varepsilon_{0,i}-\delta_{0,i},\\
\widehat m_{1,i}-\widehat m_{0,i}
&=
\tau_P(\bm X_i)+\delta_{1,i}-\delta_{0,i}.
\end{align*}
Substitution into the score yields
\begin{align*}
\widehat\psi_i^\ast
&=
\widehat m_{1,i}-\widehat m_{0,i}
+\frac{W_i(Y_i-\widehat m_{1,i})}{\widehat e_i}
-\frac{(1-W_i)(Y_i-\widehat m_{0,i})}{1-\widehat e_i}\\
&=
\tau_P(\bm X_i)+\delta_{1,i}-\delta_{0,i}
+\frac{W_i(\varepsilon_{1,i}-\delta_{1,i})}{\widehat e_i}
-\frac{(1-W_i)(\varepsilon_{0,i}-\delta_{0,i})}{1-\widehat e_i}\\
&=
\tau_P(\bm X_i)
+\frac{W_i\varepsilon_{1,i}}{\widehat e_i}
-\frac{(1-W_i)\varepsilon_{0,i}}{1-\widehat e_i}\\
&\quad+
\delta_{1,i}\left(1-\frac{W_i}{\widehat e_i}\right)
+\delta_{0,i}\left\{\frac{1-W_i}{1-\widehat e_i}-1\right\}\\
&=
\tau_P(\bm X_i)+\xi_{i,n}
+\delta_{1,i}\left(1-\frac{W_i}{\widehat e_i}\right)
+\delta_{0,i}\left\{\frac{1-W_i}{1-\widehat e_i}-1\right\}.
\end{align*}
Therefore,
\begin{align*}
\sqrt r(\widehat\theta_{\rm UD}-\theta_0)
&=
\sqrt r(\widehat\theta_{\rm UD}-\theta_{\mathcal S,n})
+\sqrt r(\theta_{\mathcal S,n}-\theta_0)\\
&=
\frac1{\sqrt r}\sum_{i\in\mathcal S}\xi_{i,n}
+\sqrt r B_{r,n}
+\sqrt r(\theta_{\mathcal S,n}-\theta_0)\\
&=
\frac1{\sqrt r}\sum_{i\in\mathcal S}\xi_{i,n}+o_P(1)+o_P(1)\\
&=
\frac1{\sqrt r}\sum_{i\in\mathcal S}\xi_{i,n}+o_P(1)
\dto\mathcal N(0,\sigma_{\rm UD}^2).
\end{align*}

For variance consistency, set
\[
d_{i,n}=\widehat\xi_i-\xi_{i,n}.
\]
Then
\begin{align*}
d_{i,n}
&=
-\frac{W_i\delta_{1,i}}{\widehat e_i}
+\frac{(1-W_i)\delta_{0,i}}{1-\widehat e_i},\\
d_{i,n}^2
&=
\frac{W_i\delta_{1,i}^2}{\widehat e_i^2}
+\frac{(1-W_i)\delta_{0,i}^2}{(1-\widehat e_i)^2}\\
&\leq
\epsilon^{-2}(\delta_{1,i}^2+\delta_{0,i}^2),\\
\frac1r\sum_{i\in\mathcal S}d_{i,n}^2
&\leq
\epsilon^{-2}\frac1r\sum_{i\in\mathcal S}
(\delta_{1,i}^2+\delta_{0,i}^2)
=o_P(1).
\end{align*}
Let
\[
\kappa=\min\left\{\frac\nu2,1\right\}\in(0,1],
\qquad
Z_{i,n}=\xi_{i,n}^2,
\qquad
Z_{i,n}^{(r)}=Z_{i,n}\mathds1\{Z_{i,n}\leq r\}.
\]
Because $2(1+\kappa)\leq2+\nu$,
\begin{align*}
\frac1r\sum_{i\in\mathcal S}
E(Z_{i,n}^{1+\kappa}\mid\mathcal G_n)
&=
\frac1r\sum_{i\in\mathcal S}
E(|\xi_{i,n}|^{2+2\kappa}\mid\mathcal G_n)\\
&\leq
\left\{
\frac1r\sum_{i\in\mathcal S}
E(|\xi_{i,n}|^{2+\nu}\mid\mathcal G_n)
\right\}^{\frac{2+2\kappa}{2+\nu}}\\
&=O_P(1),
\end{align*}
where the inequality follows from the power-mean inequality applied to the
uniform empirical distribution on $\mathcal S$ and the conditional laws.
The tail term obeys
\begin{align*}
\frac1r\sum_{i\in\mathcal S}
E\{Z_{i,n}\mathds1(Z_{i,n}>r)\mid\mathcal G_n\}
&\leq
\frac1{r^{1+\kappa}}
\sum_{i\in\mathcal S}E(Z_{i,n}^{1+\kappa}\mid\mathcal G_n)\\
&=
r^{-\kappa}O_P(1)
=o_P(1).
\end{align*}
The truncated conditional variance satisfies
\begin{align*}
\operatorname{Var}\!\left(
\frac1r\sum_{i\in\mathcal S}Z_{i,n}^{(r)}
\middle|\mathcal G_n
\right)
&=
\frac1{r^2}\sum_{i\in\mathcal S}
\operatorname{Var}(Z_{i,n}^{(r)}\mid\mathcal G_n)\\
&\leq
\frac1{r^2}\sum_{i\in\mathcal S}
E\{Z_{i,n}^2\mathds1(Z_{i,n}\leq r)\mid\mathcal G_n\}\\
&\leq
\frac{r^{1-\kappa}}{r^2}\sum_{i\in\mathcal S}
E(Z_{i,n}^{1+\kappa}\mid\mathcal G_n)\\
&=
r^{-\kappa}O_P(1)
=o_P(1).
\end{align*}
For every $\eta>0$, conditional Chebyshev's inequality gives
\begin{align*}
&P\!\left(
\left|
\frac1r\sum_{i\in\mathcal S}
\left[Z_{i,n}^{(r)}-E\{Z_{i,n}^{(r)}\mid\mathcal G_n\}\right]
\right|>\eta
\middle|\mathcal G_n
\right)\\
&\qquad\leq
\eta^{-2}
\operatorname{Var}\!\left(
\frac1r\sum_{i\in\mathcal S}Z_{i,n}^{(r)}
\middle|\mathcal G_n
\right)\\
&\qquad=
\eta^{-2}r^{-\kappa}O_P(1)
=o_P(1).
\end{align*}
Conditional Markov's inequality gives
\begin{align*}
&P\!\left(
\frac1r\sum_{i\in\mathcal S}Z_{i,n}\mathds1(Z_{i,n}>r)>\eta
\middle|\mathcal G_n
\right)\\
&\qquad\leq
\frac1\eta\frac1r\sum_{i\in\mathcal S}
E\{Z_{i,n}\mathds1(Z_{i,n}>r)\mid\mathcal G_n\}\\
&\qquad=
\eta^{-1}r^{-\kappa}O_P(1)
=o_P(1).
\end{align*}
Since each conditional probability belongs to $[0,1]$, convergence in probability
to zero implies convergence of its expectation to zero.  Hence
\begin{align*}
&\frac1r\sum_{i\in\mathcal S}
\left[Z_{i,n}^{(r)}-E\{Z_{i,n}^{(r)}\mid\mathcal G_n\}\right]
=o_P(1),\\
&\frac1r\sum_{i\in\mathcal S}Z_{i,n}\mathds1(Z_{i,n}>r)
=o_P(1),\\
&\frac1r\sum_{i\in\mathcal S}
E\{Z_{i,n}\mathds1(Z_{i,n}>r)\mid\mathcal G_n\}
=o_P(1).
\end{align*}
Consequently,
\begin{align*}
\frac1r\sum_{i\in\mathcal S}
\{\xi_{i,n}^2-E(\xi_{i,n}^2\mid\mathcal G_n)\}
&=
\frac1r\sum_{i\in\mathcal S}
\left[Z_{i,n}^{(r)}-E\{Z_{i,n}^{(r)}\mid\mathcal G_n\}\right]\\
&\quad+
\frac1r\sum_{i\in\mathcal S}Z_{i,n}\mathds1(Z_{i,n}>r)\\
&\quad-
\frac1r\sum_{i\in\mathcal S}
E\{Z_{i,n}\mathds1(Z_{i,n}>r)\mid\mathcal G_n\}\\
&=o_P(1),\\
\frac1r\sum_{i\in\mathcal S}\xi_{i,n}^2
&=
\sigma_{r,n}^2+o_P(1)
=
\sigma_{\rm UD}^2+o_P(1).
\end{align*}
Finally,
\begin{align*}
\left|
\frac1r\sum_{i\in\mathcal S}\widehat\xi_i^2
-\frac1r\sum_{i\in\mathcal S}\xi_{i,n}^2
\right|
&=
\left|
\frac1r\sum_{i\in\mathcal S}
\{2\xi_{i,n}d_{i,n}+d_{i,n}^2\}
\right|\\
&\leq
2\left(\frac1r\sum_{i\in\mathcal S}\xi_{i,n}^2\right)^{1/2}
\left(\frac1r\sum_{i\in\mathcal S}d_{i,n}^2\right)^{1/2}
+\frac1r\sum_{i\in\mathcal S}d_{i,n}^2\\
&=
2O_P(1)o_P(1)+o_P(1)
=o_P(1).
\end{align*}
Thus
\begin{align*}
\widehat\sigma_{\rm UD}^2
&=
\frac1r\sum_{i\in\mathcal S}\widehat\xi_i^2\\
&=\sigma_{\rm UD}^2+o_P(1),\\
\widehat\sigma_{\rm UD}
&=\left(\widehat\sigma_{\rm UD}^2\right)^{1/2}
\pto\sigma_{\rm UD}>0,\\
\frac{\widehat\theta_{\rm UD}-\theta_0}{\sqrt{\widehat V_{\rm UD}}}
&=
\frac{\sqrt r(\widehat\theta_{\rm UD}-\theta_0)}
{\widehat\sigma_{\rm UD}}
\dto\mathcal N(0,1),\\
P\!\left(
\theta_0\in
\left[\widehat\theta_{\rm UD}
\pm z_{1-\alpha/2}\sqrt{\widehat V_{\rm UD}}\right]
\right)
&\to
\Phi(z_{1-\alpha/2})-\Phi(-z_{1-\alpha/2})\\
&=1-\alpha.
\end{align*}
\end{proof}

\clearpage

\begin{sidewaystable}[p]
\centering
\caption{Summary of the data-generating processes. In every scenario $p=10$, $\theta_0=1$, and $Y=g(\bm X)+W\Delta(\bm X)+\varepsilon$ with $\varepsilon\sim\mathcal N(0,1)$. Treatment follows the listed logistic propensity model.}
\label{tab:dgp_summary}
\resizebox{\textwidth}{!}{%
\renewcommand{\arraystretch}{1.5}
\begin{tabular}{@{}lllll@{}}
\toprule
Scenario & Covariates $\bm X$ & Baseline $g(\bm X)$ & CATE $\Delta(\bm X)$ & $\operatorname{logit}\{e(\bm X)\}$ \\
\midrule
OBS-1 & $X^{(d)}\iid U[-2,2]$ & $0.5X^{(1)}+0.3X^{(2)}$ & $1+0.2X^{(3)}$ & $0.2X^{(1)}-0.2X^{(2)}$ \\
OBS-2 & $X^{(1:5)}\iid U[-2,2]$; $X^{(6:10)}\iid\mathcal N(0,1.5^2)$ & $0.5(X^{(1)})^2+0.5X^{(2)}X^{(3)}+\sin X^{(6)}$ & $1+0.5X^{(1)}X^{(2)}$ & $0.5X^{(1)}-0.3(X^{(2)})^2+0.4\sin X^{(6)}+0.2X^{(7)}$ \\
OBS-3 & $X^{(1:5)}\sim\operatorname{GMM}(\bm\mu_1,\bm\mu_2;0.5)$; $X^{(6:10)}\iid\mathcal N(0,1)$ & $\sin(\pi X^{(1)})+0.5X^{(2)}X^{(3)}+0.1(X^{(6)})^3+0.2\cos X^{(7)}$ & $1+0.5\tanh X^{(1)}+0.2X^{(6)}X^{(7)}$ & $0.3X^{(1)}+0.3X^{(2)}-0.5X^{(6)}$ \\
\bottomrule
\multicolumn{5}{l}{\footnotesize For OBS-3, $\bm\mu_{1,2}=(\mp2,\mp2,0,0,0)$ and the two Gaussian components have equal weights.}
\end{tabular}}
\end{sidewaystable}

\begin{sidewaystable}[p]
\centering
\footnotesize
\caption{Full-scale canonical-budget ablation at $n=500{,}000$ and $r=5000$. Every scenario--method cell contains 500 successful Monte Carlo replications, with no failures among the 4000 attempted ablation fits. For method $m$ in replication $b$, let $E_{m,b}=(\widehat\theta_{m,b}-\theta_0)^2$. For a transition $A\to B$, the paired contrast is the mean of $E_{B,b}-E_{A,b}$ over common replication identifiers; a negative value indicates smaller mean squared estimation error for method $B$. The contrast and its MCSE are multiplied by $10^4$ in the table. The GEFD entries are target-specific: UD-DML uses the pooled transform and pooled empirical reference, whereas SEP-UD-DML averages two arm-specific GEFDs computed in separate transforms. These entries must not be ranked across the two methods.}
\label{tab:simulation_summary}
\setlength{\tabcolsep}{4.5pt}
\begin{tabular*}{0.98\linewidth}{@{\extracolsep{\fill}}lllrrrrr@{}}
\toprule
\multicolumn{8}{c}{\textit{Canonical-budget estimation and inference}} \\
\cmidrule(lr){1-8}
Scenario & Method & Replications & RMSE & Coverage & Mean CI width & Mean SMD & Mean runtime (s) \\
\midrule
OBS-2 & UNIF-DML       & 500 & 0.0520 & 0.952 & 0.2049 & 0.1140 & 0.320 \\
OBS-2 & STRAT-UNIF-DML & 500 & 0.0444 & 0.954 & 0.1852 & 0.1136 & 0.321 \\
OBS-2 & SEP-UD-DML     & 500 & 0.0437 & 0.960 & 0.1815 & 0.1093 & 2.711 \\
OBS-2 & UD-DML         & 500 & 0.0383 & 0.940 & 0.1500 & 0.0521 & 2.631 \\
OBS-3 & UNIF-DML       & 500 & 0.0644 & 0.956 & 0.2562 & 0.2879 & 0.310 \\
OBS-3 & STRAT-UNIF-DML & 500 & 0.0603 & 0.966 & 0.2544 & 0.2880 & 0.311 \\
OBS-3 & SEP-UD-DML     & 500 & 0.0637 & 0.958 & 0.2562 & 0.2784 & 2.182 \\
OBS-3 & UD-DML         & 500 & 0.0343 & 0.974 & 0.1440 & 0.0449 & 2.078 \\
\midrule
\multicolumn{8}{c}{\textit{Canonical-budget design diagnostics}} \\
\cmidrule(lr){1-8}
Scenario & Budget & Method & Mean SMD & Maximum SMD & Target-specific GEFD & Mean radius & Maximum radius \\
\midrule
OBS-2 & $r=5000$ & UNIF-DML       & 0.1140 & 0.5368 & --     & --     & --     \\
OBS-2 & $r=5000$ & STRAT-UNIF-DML & 0.1136 & 0.5392 & --     & --     & --     \\
OBS-2 & $r=5000$ & SEP-UD-DML     & 0.1093 & 0.5632 & 0.1412 & 0.9012 & 2.1939 \\
OBS-2 & $r=5000$ & UD-DML         & 0.0521 & 0.1159 & 0.2919 & 0.8993 & 2.1946 \\
OBS-3 & $r=5000$ & UNIF-DML       & 0.2879 & 1.1881 & --     & --     & --     \\
OBS-3 & $r=5000$ & STRAT-UNIF-DML & 0.2880 & 1.1886 & --     & --     & --     \\
OBS-3 & $r=5000$ & SEP-UD-DML     & 0.2784 & 1.1906 & 0.1830 & 0.6594 & 1.6044 \\
OBS-3 & $r=5000$ & UD-DML         & 0.0449 & 0.1045 & 0.1020 & 0.6680 & 1.7909 \\
\midrule
\multicolumn{8}{c}{\textit{Paired adjacent-stage contrasts}} \\
\cmidrule(lr){1-8}
Scenario & \multicolumn{2}{l}{Transition $A\to B$} & Mean $E_B-E_A$ ($\times10^4$) & MCSE ($\times10^4$) & Paired replications & \multicolumn{2}{c}{ } \\
\midrule
OBS-2 & \multicolumn{2}{l}{UNIF $\to$ STRAT-UNIF} & $-7.2720$ & $2.1089$ & 500 & \multicolumn{2}{c}{ } \\
OBS-2 & \multicolumn{2}{l}{STRAT-UNIF $\to$ SEP-UD} & $-0.5849$ & $1.7908$ & 500 & \multicolumn{2}{c}{ } \\
OBS-2 & \multicolumn{2}{l}{SEP-UD $\to$ UD} & $-4.4656$ & $1.5594$ & 500 & \multicolumn{2}{c}{ } \\
OBS-3 & \multicolumn{2}{l}{UNIF $\to$ STRAT-UNIF} & $-5.1628$ & $3.5406$ & 500 & \multicolumn{2}{c}{ } \\
OBS-3 & \multicolumn{2}{l}{STRAT-UNIF $\to$ SEP-UD} & $4.2643$ & $3.5667$ & 500 & \multicolumn{2}{c}{ } \\
OBS-3 & \multicolumn{2}{l}{SEP-UD $\to$ UD} & $-28.8110$ & $2.9324$ & 500 & \multicolumn{2}{c}{ } \\
\bottomrule
\end{tabular*}
\end{sidewaystable}

\begin{sidewaystable}[p]
\centering
\footnotesize
\caption{Full-scale nuisance-misspecification stress test at $n=500{,}000$ and $r=5000$, with 500 replications per cell. ``Flexible'' denotes the prespecified LightGBM learner and ``Restricted'' denotes the deliberately misspecified learner using only the restricted covariate input. The experiment is a finite-sample diagnostic and is not presented as a selection-aware double-robustness theorem.}
\label{tab:misspecification_summary}
\setlength{\tabcolsep}{4.5pt}
\begin{tabular*}{0.98\linewidth}{@{\extracolsep{\fill}}lllrrrrrrrr@{}}
\toprule
& & & \multicolumn{4}{c}{RMSE} & \multicolumn{4}{c}{Empirical coverage} \\
\cmidrule(lr){4-7}\cmidrule(lr){8-11}
Scenario & Outcome learner & Propensity learner & UNIF & STRAT-UNIF & SEP-UD & UD & UNIF & STRAT-UNIF & SEP-UD & UD \\
\midrule
OBS-1 & Flexible   & Flexible   & 0.033 & 0.034 & 0.035 & 0.033 & 0.97 & 0.97 & 0.95 & 0.96 \\
OBS-1 & Flexible   & Restricted & 0.029 & 0.029 & 0.029 & 0.030 & 0.96 & 0.97 & 0.96 & 0.95 \\
OBS-1 & Restricted & Flexible   & 0.037 & 0.037 & 0.037 & 0.033 & 0.98 & 0.98 & 0.98 & 0.98 \\
OBS-1 & Restricted & Restricted & 0.062 & 0.063 & 0.057 & 0.034 & 0.67 & 0.65 & 0.73 & 0.96 \\
\addlinespace
OBS-2 & Flexible   & Flexible   & 0.052 & 0.044 & 0.044 & 0.038 & 0.95 & 0.95 & 0.96 & 0.94 \\
OBS-2 & Flexible   & Restricted & 0.038 & 0.034 & 0.033 & 0.034 & 0.91 & 0.93 & 0.93 & 0.92 \\
OBS-2 & Restricted & Flexible   & 0.068 & 0.068 & 0.065 & 0.041 & 0.98 & 0.95 & 0.95 & 0.99 \\
OBS-2 & Restricted & Restricted & 0.195 & 0.192 & 0.179 & 0.081 & 0.02 & 0.02 & 0.00 & 0.70 \\
\addlinespace
OBS-3 & Flexible   & Flexible   & 0.064 & 0.060 & 0.064 & 0.034 & 0.96 & 0.97 & 0.96 & 0.97 \\
OBS-3 & Flexible   & Restricted & 0.043 & 0.042 & 0.045 & 0.029 & 0.85 & 0.84 & 0.81 & 0.97 \\
OBS-3 & Restricted & Flexible   & 0.133 & 0.133 & 0.151 & 0.036 & 0.78 & 0.76 & 0.69 & 0.99 \\
OBS-3 & Restricted & Restricted & 0.191 & 0.192 & 0.206 & 0.041 & 0.00 & 0.00 & 0.00 & 0.95 \\
\bottomrule
\end{tabular*}
\end{sidewaystable}

\begin{sidewaystable}[p]
\centering
\small
\caption{Full-scale UD-DML computational profile and generator-budget sensitivity in OBS-3. The profile uses $n=500{,}000$, $r=5000$, $B_\gamma=30$, 50 single-process replications, and one learner thread. The cached skeleton search has negligible median cost; its one-time generation cost is included in the raw profile. Memory is resident-set size (RSS). The generator-budget experiment uses 50 replications per setting.}
\label{tab:computation_summary}
\setlength{\tabcolsep}{4.5pt}
\begin{tabular*}{0.96\linewidth}{@{\extracolsep{\fill}}lllrrr@{}}
\toprule
\multicolumn{6}{c}{\textit{UD-DML runtime decomposition}} \\
\cmidrule(lr){1-6}
\multicolumn{3}{l}{Component} & Median time (s) & IQR (s) & Median share (\%) \\
\midrule
\multicolumn{3}{l}{Standardisation and PCA} & 0.0934 & 0.0108 & 4.40 \\
\multicolumn{3}{l}{Empirical-CDF sorting} & 0.0635 & 0.0016 & 3.05 \\
\multicolumn{3}{l}{Cached skeleton search} & 0.0000 & 0.0000 & 0.00 \\
\multicolumn{3}{l}{Inverse-CDF mapping} & 0.0004 & 0.0000 & 0.02 \\
\multicolumn{3}{l}{Tree construction and matching} & 1.5083 & 0.0260 & 72.16 \\
\multicolumn{3}{l}{Nuisance fitting} & 0.2976 & 0.0059 & 14.22 \\
\multicolumn{3}{l}{Inference} & 0.0008 & 0.0000 & 0.04 \\
\multicolumn{3}{l}{Other orchestration and measurement} & 0.1291 & 0.0061 & 6.16 \\
\multicolumn{3}{l}{Total} & 2.0919 & 0.0293 & 100.00 \\
\midrule
\multicolumn{6}{c}{\textit{Memory summary}} \\
\cmidrule(lr){1-6}
\multicolumn{3}{l}{Configuration} & Median peak RSS (MB) & Median incremental peak RSS (MB) & Replications \\
\midrule
\multicolumn{3}{l}{OBS-3, $n=500{,}000$, $r=5000$, $B_\gamma=30$} & 463.2 & 85.3 & 50 \\
\midrule
\multicolumn{6}{c}{\textit{Sensitivity to the generator budget}} \\
\cmidrule(lr){1-6}
$B_\gamma$ & Mean runtime (s) & Median runtime (s) & RMSE & Mean CI width & Coverage \\
\midrule
10 & 2.107 & 2.098 & 0.0331 & 0.1425 & 0.960 \\
20 & 2.129 & 2.066 & 0.0325 & 0.1435 & 0.960 \\
30 & 2.113 & 2.090 & 0.0392 & 0.1437 & 0.940 \\
40 & 2.120 & 2.081 & 0.0359 & 0.1445 & 0.940 \\
60 & 2.112 & 2.064 & 0.0370 & 0.1447 & 0.980 \\
\bottomrule
\end{tabular*}
\end{sidewaystable}

\clearpage

\begin{figure}[p]
\centering
\IfFileExists{figures/subsample_size_metrics.pdf}{\includegraphics[width=0.88\textwidth]{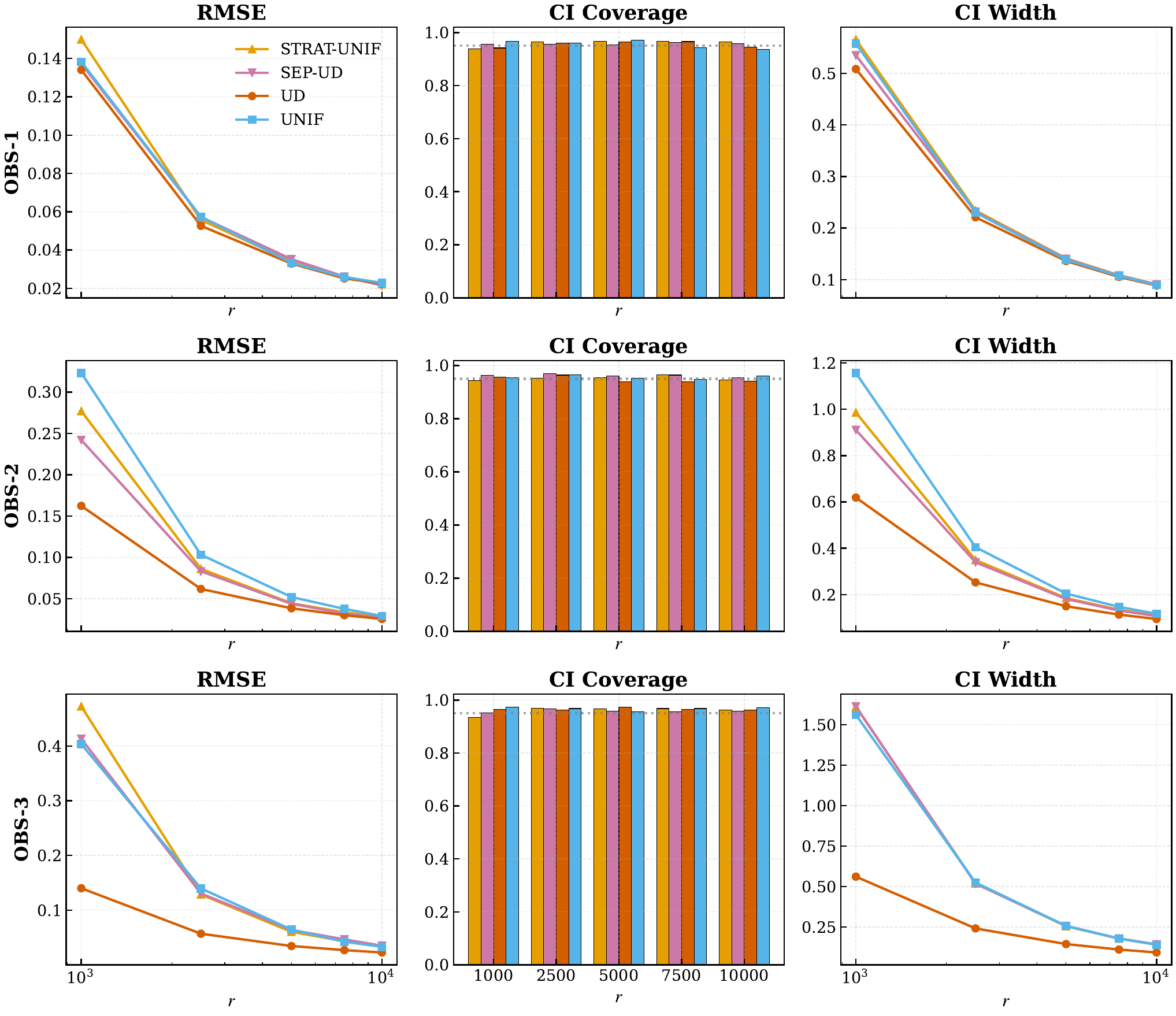}}{\ResultPlaceholder{Expected full-scale asset: \path{figures/subsample_size_metrics.pdf}}}
\caption{Full-scale working-sample-size experiment for all four designs. Rows correspond to OBS-1--OBS-3; columns report RMSE, empirical coverage, and mean interval width over $r\in\{1000,2500,5000,7500,10000\}$ at $n=500{,}000$. Every plotted cell contains 500 successful replications.}
\label{fig:subsample_performance}
\end{figure}

\begin{figure}[p]
\centering
\IfFileExists{figures/overlap_gradient.pdf}{\includegraphics[width=0.92\textwidth]{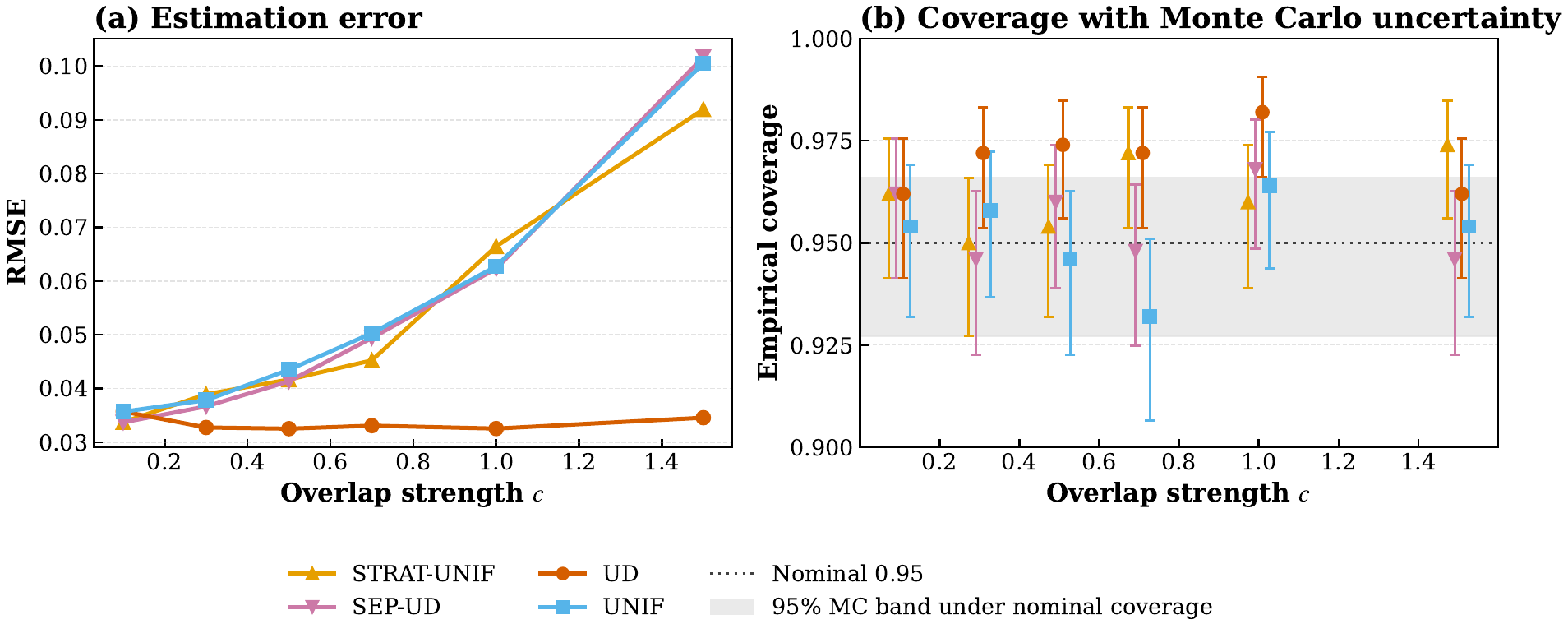}}{\ResultPlaceholder{Expected full-scale asset: \path{figures/overlap_gradient.pdf}}}
\caption{Full-scale overlap-gradient experiment for all four working-sample designs in OBS-3. The propensity-logit multiplier varies over $c\in\{0.1,0.3,0.5,0.7,1.0,1.5\}$ at $n=500{,}000$ and $r=5000$. Panel (a) reports RMSE. Panel (b) reports empirical coverage as horizontally offset points with pointwise 95\% Wilson intervals; points are not connected because adjacent values are independent Monte Carlo estimates. The dotted line marks nominal 0.95 coverage, and the grey band gives the 95\% Monte Carlo range under nominal coverage for 500 replications.}
\label{fig:overlap_performance}
\end{figure}

\begin{figure}[p]
\centering
\IfFileExists{figures/visualization_propensity_density.pdf}{\includegraphics[width=0.90\textwidth]{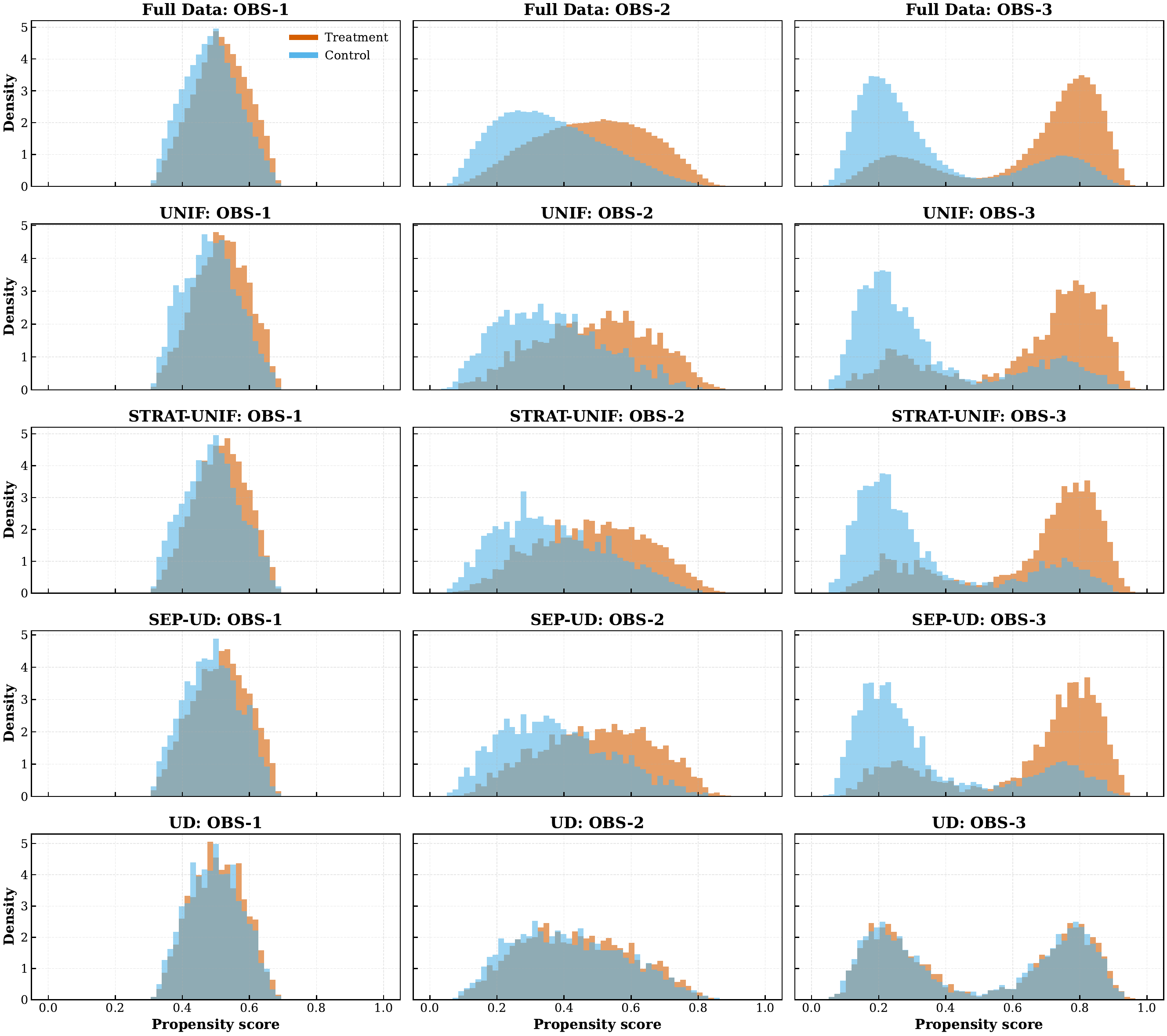}}{\ResultPlaceholder{Expected full-scale asset: \path{figures/visualization_propensity_density.pdf}}}
\caption{Full-scale propensity-score support diagnostic for the full data and the four working-sample designs in OBS-1--OBS-3 at $n=500{,}000$ and $r=5000$. The distributions show how each selection rule represents treated and control support. Numerical covariate-balance summaries, realised GEFD, and matching radii are reported in Table~\ref{tab:simulation_summary}; the histograms are descriptive and do not by themselves establish positivity or the assumptions of Theorem~\ref{thm:ud_root_r}.}
\label{fig:simulation_diagnostics}
\end{figure}

\begin{figure}[p]
\centering
\includegraphics[width=\textwidth]{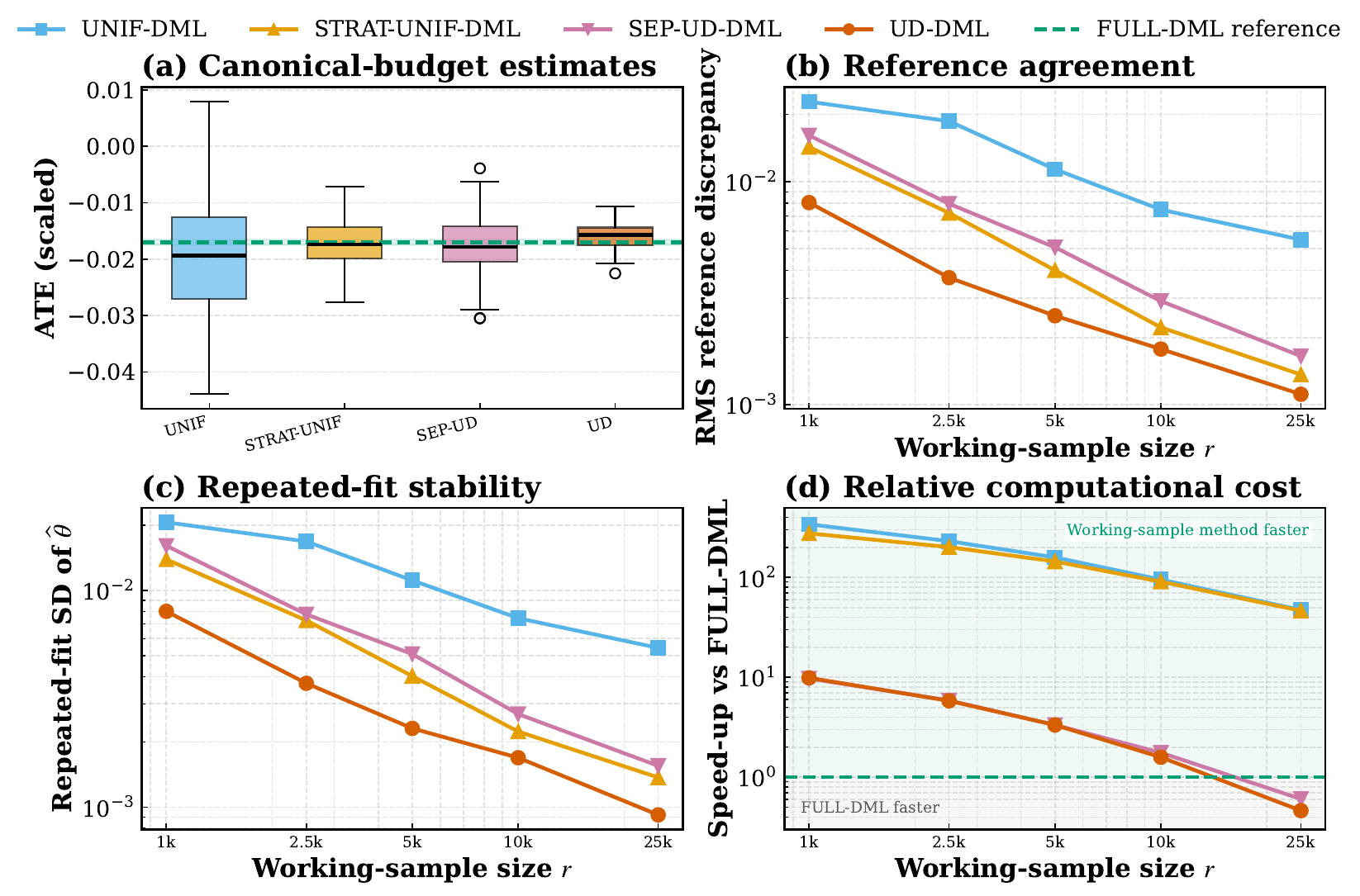}
\caption{Full fixed-data analysis of the 2021 US natality records ($n=2{,}846{,}543$, $B=100$ paired repetitions). Panel (a) shows the repeated-fit estimates for all four working-sample designs at the canonical budget $r=5000$; the dashed line and shaded band give the estimated FULL-DML reference and its 95\% interval. Panels (b) and (c) report RMS agreement with that reference and repeated-fit standard deviation over $r\in\{1000,2500,5000,10000,25000\}$. Panel (d) reports speed-up, defined as the one-off FULL-DML runtime divided by the mean method runtime; the parity line identifies the range of $r$ that yields a per-call computational saving.}
\label{fig:real_data_application}
\end{figure}

\clearpage

\end{document}